\documentclass{article}
\usepackage{amsmath,amsfonts,amssymb,amsthm,graphicx}
 \usepackage{dsfont}    
 \usepackage{bm} 
\usepackage{enumerate}
\usepackage{color}
\usepackage{natbib}
\usepackage{multirow}
\usepackage{comment}
\usepackage{adjustbox}

\newcommand{\cP}{\mathcal{P}}
\newcommand{\cK}{\mathcal{K}} 
 
\renewcommand{\epsilon}{\varepsilon}

\usepackage{booktabs,subcaption,dcolumn}
\usepackage{tabularx}

\usepackage{algorithm}
\usepackage[noend]{algpseudocode}  
\usepackage[colorlinks=true,citecolor=blue,pdfpagemode=UseNone,pdfstartview=FitH]{hyperref}

\usepackage{tikz}
\usetikzlibrary{arrows.meta,positioning,calc}

\newif\ifFULL
\ifFULL

\fi

\newif\ifRuodu
\ifRuodu
  \renewcommand{\ge}{\geqslant}
  \renewcommand{\le}{\leqslant}
  \renewcommand{\epsilon}{\varepsilon}
  
\fi

\newcommand{\p}{\mathbb{P}}

\newcommand{\cebh}{\overline{\mathrm{eBH}}}

\newcommand{\E}{\mathbb{E}}    
\newcommand{\N}{\mathbb{N}}    

\newcommand{\FDP}{\mathrm{FDP}}
\newcommand{\FDR}{\mathrm{FDR}}

\usepackage{setspace}

\theoremstyle{plain}
\newtheorem{theorem}{Theorem}[section]

\newtheorem{lemma}[theorem]{Lemma}
\newtheorem{proposition}[theorem]{Proposition}

\theoremstyle{definition}
\newtheorem{definition}[theorem]{Definition}
\newtheorem{example}[theorem]{Example}

\theoremstyle{remark}
\newtheorem{remark}[theorem]{Remark}

\newenvironment{proofidea}
{\par\smallskip\noindent\textit{Proof sketch.}\ }
{\par\smallskip}

 \renewcommand{\cite}{\citet}  

\title{Admissibility and Complete Classes for False Discovery Rate Control with E-values}
\author{
Liulei Sun\thanks%
 {Department of Statistics and Actuarial Science,
 University of Waterloo,
 Waterloo, Ontario, Canada.
 E-mail: \href{mailto:l229sun@uwaterloo.ca}{l229sun@uwaterloo.ca}.} \and
 Ruodu Wang\thanks%
 {Department of Statistics and Actuarial Science,
 University of Waterloo,
 Waterloo, Ontario, Canada.
 E-mail: \href{mailto:wang@uwaterloo.ca}{wang@uwaterloo.ca}.}}
\date{September 2026}

\begin{document} 
\maketitle  

\begin{abstract}

The false discovery rate (FDR) is the most widely used error metric in modern multiple testing. We provide a comprehensive analysis of the admissibility of  e-value-based procedures with FDR control. We consider both simultaneous and point procedures and introduce strong and weak notions of dominance. 
Every simultaneous procedure is strongly, and hence weakly, dominated by an admissible weighted-mean closed e-Benjamini--Hochberg ($\overline{\mathrm{eBH}}$) procedure, and thus weighted-mean $\overline{\mathrm{eBH}}$ procedures form a complete class. Every constant-free weighted-mean $\overline{\mathrm{eBH}}$ procedure is admissible at every level, and we propose two ways for choosing constant-free weights  based on side information, such as pre-screening data.  
Within the symmetric class,  the usual mean $\overline{\mathrm{eBH}}$ procedure is the largest element if and only if the FDR level is small enough; otherwise this class has no largest element. We also obtain results on the admissibility of symmetric weighted-mean $\overline{\mathrm{eBH}}$ procedures with non-zero constant terms. Point e-testing procedures have a parallel theory of admissibility.
These results highlight the central role of weighted-mean $\overline{\mathrm{eBH}}$ procedures in multiple testing.\bigskip

\noindent \textbf{Keywords:} Multiple testing, eBH procedure, BY procedure, arbitrary dependence, merging e-values
\end{abstract}

\section{Introduction}

The false discovery rate (FDR), introduced by \citet{BH95}, is one of the central error criteria in modern multiple testing. The literature now contains many ways to control FDR with p-values: the Benjamini--Hochberg (BH) procedure under independence \citep{BH95}, the Benjamini--Yekutieli procedure under arbitrary dependence \citep{BY01}, adaptive and q-value methods \citep{S02}, self-consistent procedures \citep{BR08}, and grouped or multilayer procedures \citep{BR17}. 
These procedures may be compared based on their performance, commonly referred to as power. 
For instance, if two different procedures both control FDR at the same level, but one procedure always rejects at least as many hypotheses as the other and sometimes rejects more, then the first  procedure dominates the latter in terms of power, and should be preferred  in practice.

The procedures that are not dominated by others are called admissible. Admissibility is a minimal optimality requirement imposed after error control has been established. 
Recent studies have shown several existing procedures to be inadmissible by constructing uniform improvements.
For instance, \cite{SG17} provided an improvement of the BH procedure.  \cite{IWR26} used compound e-values to obtain tiny but uniform improvements of several adaptive BH procedures, and  \cite{Goe26} presented a closed BH procedure that uniformly improves the BH procedure. 
To our knowledge, despite the central role of FDR for multiple testing, there is no general complete class theory, nor a general admissibility characterization, for FDR-controlling procedures in the literature. 
Admissibility results are available for other statistical objects, such as the tail probabilities of the false discovery proportion \citep{GHS21}, merging p-values under arbitrary dependence \citep{VWW22}, and combination of e-values \citep{W25,C26}.

In this paper, we provide the first  results on admissibility and characterization of complete classes for FDR-controlling procedures, in the framework of testing procedures based on e-values. In other words, we identify a special class of procedures such that every valid procedure can be uniformly improved to an admissible member of this class; consequently, no admissible procedure lies outside it. Our construction relies on the procedures recently developed by \cite{XSF25}, which we explain below.

E-values \citep{VW21, S21, GDK24} have been shown to be central in the study of FDR-controlling procedures, either through explicit procedures, or serving as intermediate tools.    
\citet{WR22} proposed the  e-Benjamini--Hochberg (eBH) procedure, which controls FDR for arbitrary input e-values.
\cite{IWR24,IWR26} demonstrated that the eBH procedure for some implicit compound e-values is behind every FDR-controlling procedure (not necessarily based on input e-values). \citet{LR24} proposed conditional calibration to boost eBH,  and \cite{RB24} used eBH to derandomize knockoff procedures.  We refer to the  recent monograph \cite{RW25} for a comprehensive treatment.

\cite{XSF25} proposed the closed eBH procedures, which offer a powerful class of procedures that dominate eBH and improve upon several other classes. Using closed eBH procedures, they provided a general representation of FDR-controlling procedures, while leaving a precise characterization open. We take the next step by introducing a special class, called the \emph{weighted-mean closed eBH procedure}, that plays a central role in the admissibility of e-testing procedures:
\begin{enumerate}[(a)]
    \item Many explicit members of this class are admissible for practical FDR levels;
    \item it forms a complete class for both simultaneous and point procedures. 
\end{enumerate} 
Our paper offers many more results that complement these two main conclusions. In particular,  specific choices of the parameters in the weighted-mean closed eBH procedures matter for both admissibility and empirical performance of the procedures, which have not been explored in the existing literature.

Our results are formulated under the setting that the input statistics are arbitrary e-values. These results do not give conclusions on the admissibility of FDR-controlling procedures based on p-values  (such as the BH procedure)
or procedures based on independent e-values (e.g.,  \citealp{VW24}). Studying admissibility for these procedures would likely require completely different techniques, and it is outside the scope of this paper.

\subsection{Contributions}

Our main conceptual contribution is to establish a mathematical framework
for comparing (both simultaneous and point) e-testing procedures in Section~\ref{sec:setting}, and to highlight the theoretical importance of admissibility for e-testing procedures.

On the methodological side, 
we propose the weighted-mean $\overline{\mathrm{eBH}}$ procedures as an analytically tractable  class of simultaneous FDR-controlling procedures in Section~\ref{sec:close-eBH}. We  identify conditions for such procedures to dominate the base eBH procedure (Proposition~\ref{prop:esc-dominated-by-weighted-ebh}). 
Further, we propose two practical ways of choosing the weights in the weighted-mean $\overline{\mathrm{eBH}}$ procedures based on side information in Section~\ref{sec:prior-score-weights},
as well as a rule of thumb in the absence of side information in Section~\ref{sec:constant-term}.
Simulation studies in Section~\ref{sec:simulations} illustrate the advantages of these methods. 

Our technical contributions can be split into three major categories, depending on the procedures that we study.
\begin{enumerate}
    \item \textbf{On simultaneous e-testing procedures.} We offer two major results that clarify the 
    structure of e-testing procedures that simultaneously produce multiple rejection sets.
    
\begin{enumerate}
    \item Every constant-free weighted-mean $\overline{\mathrm{eBH}}$ procedure is admissible at every level (Theorem~\ref{thm:constant-free-weighted-ebh-admissible}).  The two ways for choosing constant-free weights based on side information in Section~\ref{sec:prior-score-weights} both yield admissible simultaneous procedures.
    
    \item At any given FDR level, every simultaneous FDR-controlling e-testing procedure is strongly dominated, and hence weakly dominated, by an admissible weighted-mean $\overline{\mathrm{eBH}}$ procedure (Theorem~\ref{thm:admissible-weighted-ebh-dominator}). Thus weighted-mean $\overline{\mathrm{eBH}}$ procedures form a complete class of simultaneous procedures.   
\end{enumerate}

    \item \textbf{On point e-testing procedures.} We develop a parallel theory for e-testing procedures that output a single rejection set.

\begin{enumerate}
    \item Every constant-free and strictly positive point weighted-mean $\overline{\mathrm{eBH}}$ procedure is admissible when the level is less than $1/2$ (Theorem~\ref{thm:positive-point-weighted-ebh-admissible}). This threshold is sharp for the point mean $\overline{\mathrm{eBH}}$ procedure (Proposition~\ref{prop:point-mean-admissibility-threshold}).
    
    \item At any given FDR level, every point FDR-controlling e-testing procedure is dominated by an admissible point weighted-mean $\overline{\mathrm{eBH}}$ procedure (Theorem~\ref{thm:point-weighted-ebh-complete-class}). Thus, point weighted-mean $\overline{\mathrm{eBH}}$ procedures form a complete class of point procedures. 
\end{enumerate}
    
    \item \textbf{On symmetric e-testing procedures.} The additional results for symmetric e-testing procedures are threefold.
    
\begin{enumerate}
    \item We characterize the symmetry of weighted-mean $\overline{\mathrm{eBH}}$ procedures and study their related properties (Section~\ref{sec:symmetric_characterization}).

    \item We identify exactly when the mean $\overline{\mathrm{eBH}}$ procedure and the point mean $\overline{\mathrm{eBH}}$ procedure are the largest elements of their respective symmetric classes (Section~\ref{sec:mean-procedures}).
  
    \item For symmetric weighted-mean $\overline{\mathrm{eBH}}$ procedures with constant terms, we give a sufficient condition for admissibility and a necessary and sufficient condition for admissibility of an explicit one-parameter family (Section~\ref{sec:constant-term}). 
\end{enumerate}    

\end{enumerate} 
 In the main paper, we provide the most important proofs,  some short proofs, and  sketches for long  technical proofs.  
All  omitted proofs are presented in Appendix~\ref{app:proofs}.


\section{E-testing procedures and their dominance}\label{sec:setting}

This section sets up the decision-theoretic language used in the paper. 
Throughout, we assume the axiom of choice.\footnote{Two results rely on it: Theorem~\ref{thm:admissible-weighted-ebh-dominator} and  Theorem~\ref{thm:point-weighted-ebh-complete-class}. All other results hold without the axiom of choice.}

We first define e-variables and e-values, 
and then e-testing procedures and their dominance relations.
Let $\cP$ be a set of probability measures representing the null hypothesis.

\begin{definition}
    An \emph{e-variable} $E$ for $\cP$ is a $[0,\infty]$-valued random variable satisfying $\E^{\mathbb P}[E]\le 1$ for all $\mathbb P\in \cP$.
\end{definition}

 We consider the problem of testing $K\ge 2$ hypotheses, represented by sets $\cP_1,\dots,\cP_K$ of probability measures. Write $\cK=\{1,\dots,K\}$ and  $\mathbf E=(E_1,\dots,E_K)$ for a vector of e-values for $(\cP_1,\dots,\cP_K)$. Precisely, for each $k$, $E_k$ is an e-variable for $\cP_k$, and the variables $E_k$ are also called e-values. 
A hypothesis $\cP_k$ is a true null if $\p\in \cP_k$, where $\p$ is the unknown data-generating probability. 
The set of true null hypotheses will be denoted by $N$.
For any $A\subseteq \cK$, we write $\mathbf E^A=(E_i)_{i\in A}$.
The vector $\mathbf E^N$ is called the vector of null e-values. 
Note that $\E[E_i]\le1$ for all $i\in N$, where the expectation is computed under $\p$. 
No additional independence or prior model is assumed for the vector of null e-values.
We write $\mathbf e=(e_1,\dots,e_K)$ whenever there is no confusion of the dimension; also denote by the
sub-vector $\mathbf e^A=(e_i)_{i\in A}$ and
its average $\bar e_A=\sum_{i\in A}e_i / |A|$ for any nonempty $A\subseteq \cK$. The same notation will be applied to other vectors like $\mathbf x$ and $\mathbf y$.


We take the framework of \cite{XSF25} and allow testing procedures to output multiple sets of rejected hypotheses. 
A \emph{simultaneous (e-testing) procedure} $\mathcal D$ is a measurable mapping from $[0,\infty]^K$ to $2^{(2^{\cK})}$. 
The input of $\mathcal D$ is a vector of e-values, and the output of $\mathcal D$ is a collection of sets of hypotheses that are rejected (also called certified) based on the input. A \emph{point (e-testing) procedure} $\mathfrak D$ is a measurable mapping from $[0,\infty]^K$ to $2^{\cK}$. The input of $\mathfrak D$ is a vector of e-values, and the output of $\mathfrak D$ is one set of hypotheses that is rejected based on the e-values.  
We omit ``e-testing'' from the terminology for simplicity.  
Moreover, we say that $\mathcal D$ (or $\mathfrak D$) is \emph{symmetric} if $\mathcal D(\mathbf e_\sigma)=\sigma^{-1} (\mathcal D(\mathbf e))$ for all $\mathbf e\in [0,\infty]^K$ and all permutations $\sigma$ on $\cK$, where $\mathbf e_\sigma=(e_{\sigma(1)},\dots,e_{\sigma(K)})$.

The distinction between simultaneous and point procedures is not only notational. A simultaneous procedure describes all rejection sets that are certified by the method, while a point procedure must choose one of them for reporting. To better illustrate these two kinds of procedures, we introduce the self-consistent procedure and the eBH procedure \citep{WR22}, which serve as the first examples of simultaneous and point procedures.

\begin{definition}
Given a level $\alpha\in (0,1)$, the \emph{self-consistent procedure} is defined to reject
$$\mathrm{eSC}_\alpha(\mathbf e)=\{\varnothing\}\cup\left\{\varnothing\neq R\subseteq \mathcal K: \min_{k\in R} e_k\ge \frac{K}{\alpha |R|}\right\}.$$
The \emph{eBH procedure}  is a point procedure that rejects the unique largest set in $\mathrm{eSC}_\alpha$, that is, 
$$\mathrm{eBH}_\alpha(\mathbf e)=\bigcup_{R\in \mathrm{eSC}_\alpha(\mathbf e)} R.$$
\end{definition}

It is clear that eSC and eBH are symmetric procedures for any $\alpha$.
 The eBH procedure is often alternatively described as rejecting
all hypotheses with the largest $k^*$ e-values, where
$$k^*=\max\left\{r\in \cK: \frac{r e_{(r)}}{K}\ge \frac{1}{\alpha}\right\},$$ 
with the convention $\max(\varnothing)=0$ and $e_{(1)}\ge\cdots\ge e_{(K)}$ being the decreasing order statistics of $\mathbf e$. 


Note that we can uniquely associate $\mathrm{eBH}_\alpha$ with the simultaneous procedure $\mathbf e\mapsto \{\mathrm{eBH}_\alpha(\mathbf e)\}$. 
In general, a point procedure can be identified with a simultaneous procedure that outputs singletons, and therefore the terminology for simultaneous procedures carries over to point procedures naturally. 
Nevertheless, the admissibility structures of point procedures and simultaneous procedures are quite different, as we will see in Sections~\ref{sec:simultaneous-fdr}--\ref{sec:point-procedures}.

\begin{remark}
The domains of e-testing procedures and point procedures are taken to be $[0,\infty]^K$ rather than $[0,\infty)^K$. This is consistent with the usual definition of an e-variable as a $[0,\infty]$-valued random variable. Under a null hypothesis, an e-value is finite almost surely. Allowing $\infty$ in the input is convenient in several proofs and examples. There is no additional difficulty to restrict attention to finite e-values, and we omit that analysis.
\end{remark}

For two sets $A,R\subseteq \cK$, the false discovery proportion (FDP) is defined as 
$$
\FDP_A(R) = \frac{|A \cap R|}{|R|\vee 1},
$$
where $A$ represents any set of hypotheses and $R$ represents the set of rejected hypotheses.
The false discovery rate (FDR) of an e-testing procedure $\mathcal D$ for the input random vector $\mathbf E$ is defined as
$$
\FDR_\mathcal D(\mathbf E)= \E\left[\max_{R\in \mathcal D(\mathbf E)}\FDP_N(R)\right],
$$ 
where the maximum over an empty collection is defined to be $0$, and $\mathcal D(\mathbf E)$ is a set of possible rejection sets. 
An e-testing procedure has FDR at level $\alpha \in (0,1)$ if
$$
\FDR_\mathcal D(\mathbf E)\le \alpha,
$$
for all $\p$, $(\cP_1,\dots,\cP_K)$, and all input e-values $\mathbf E$; note that the above definition of $\FDR_\mathcal D(\mathbf E)$ depends on all of them. The FDR of a point procedure $\mathfrak D$ for the input random vector $\mathbf E$ is defined as
$$
\FDR_\mathfrak D(\mathbf E)= \E\left[ \FDP_N(
\mathfrak D(\mathbf E))\right].
$$ 
A point procedure has FDR at level $\alpha \in (0,1)$ if
$$
\FDR_\mathfrak D(\mathbf E)\le \alpha,
$$
for all $\p$, $(\cP_1,\dots,\cP_K)$, and all input e-values $\mathbf E$. 

We denote by $\mathrm{SP}_\alpha$  
(resp.~$\mathrm{PP}_\alpha$)
the class of all simultaneous procedures 
(resp.~point procedures) 
that control FDR at level $\alpha$. Let $\mathrm{SP}_\alpha^{\mathrm{sym}}$ and $\mathrm{PP}_\alpha^{\mathrm{sym}}$ denote all the symmetric elements in $\mathrm{SP}_\alpha$ and $\mathrm{PP}_\alpha$, respectively.
For instance, $$\mathrm{eSC}_\alpha \in \mathrm{SP}_\alpha^{\mathrm{sym}}\subseteq \mathrm{SP}_\alpha;
\qquad
\mathrm{eBH}_\alpha \in \mathrm{PP}_\alpha^{\mathrm{sym}}\subseteq \mathrm{PP}_\alpha;
$$
see \cite{WR22} and \cite{XSF25} for the FDR guarantee of these procedures.

Given that there are many procedures that control FDR at a given level, a criterion for ``better" procedures is through a notion of power comparison. For point procedures this is straightforward: rejecting a larger set at every input is seen as being superior in power. For simultaneous procedures, since their outputs are collections of rejection sets, there are two natural notions of comparison that favor a procedure $\mathcal G$ 
over a procedure $\mathcal D$, which need careful formulation. 
In the first sense, the rejected sets in  $\mathcal G$ completely cover those in $\mathcal D$, that is, $\mathcal D(\mathbf e) \subseteq \mathcal G(\mathbf e)$ for all $\mathbf e$. 
In the second sense, any set rejected by $\mathcal D$ 
is contained by  one rejected by $\mathcal G$, 
that is, for any $R\in \mathcal D(\mathbf e)$
there exists $R'\in \mathcal G(\mathbf e)$ such that $R\subseteq R'$, for all $\mathbf e$. This means $\mathcal D^\downarrow(\mathbf e) \subseteq \mathcal G^\downarrow(\mathbf e)$, where we write $$\mathcal C^\downarrow=\{B\subseteq\cK:\exists A\in\mathcal C\text{ such that }B\subseteq A\},$$
for any set $\mathcal  C\subseteq 2^{\cK}$.
Note that when $\mathcal D^\downarrow(\mathbf e)  = \mathcal G^\downarrow(\mathbf e)$,
it may still be possible that $\mathcal D  (\mathbf e)  \subsetneq  \mathcal G(\mathbf e)$,
and in such a case we favor $\mathcal G$ over $\mathcal D$.
The definitions below formalize the above ideas.

\begin{definition}
\label{def:dominance}
Let $\mathcal D,\mathcal G$ be two simultaneous procedures and $\mathfrak D,\mathfrak G$ be two point procedures.
\begin{enumerate}[(i)]
\item $\mathcal G$ \emph{strongly dominates} $\mathcal D$, written $\mathcal D\preceq_{\mathrm s}\mathcal G$, if $\mathcal D(\mathbf e)\subseteq\mathcal G(\mathbf e)$ for every $\mathbf e\in[0,\infty]^K$;
\item \(\mathcal G\) \emph{weakly dominates} \(\mathcal D\), written
\(\mathcal D\preceq_{\mathrm w}\mathcal G\), if, for every $\mathbf e\in[0,\infty]^K$, either $\mathcal D^\downarrow(\mathbf e) \subsetneq \mathcal G^\downarrow(\mathbf e)$ or $\mathcal D(\mathbf e)\subseteq\mathcal G(\mathbf e)$ holds. 
\item $\mathfrak G$ \emph{dominates} $\mathfrak D$, written $\mathfrak D\subseteq \mathfrak G$, if $\mathfrak D(\mathbf e)\subseteq \mathfrak G(\mathbf e)$ for every $\mathbf e\in[0,\infty]^K$.
\end{enumerate}
\emph{Strict dominance} between two procedures means dominance but not equality.
\end{definition}

Viewing point procedures as simultaneous procedures, the dominance $\mathfrak D\subseteq \mathfrak G$ is equivalent to weak dominance. Strong dominance never occurs between two different point procedures.

To make the implications of dominance meaningful, the procedures at comparison should have the same FDR level. This requirement is not needed to define dominance, but it is needed to define admissibility later.

\begin{proposition}
\label{prop:dominance-orders}
    The two relations have the following properties:
    \begin{itemize}
    \item[(i)] Both $\preceq_{\mathrm s}$ and $\preceq_{\mathrm w}$ are reflexive, antisymmetric, and transitive. Thus they are partial orders.
    
    \item[(ii)] $\mathcal D\preceq_{\mathrm s}\mathcal G$ implies $\mathcal D\preceq_{\mathrm w}\mathcal G$, but the converse is false.
    \end{itemize}
\end{proposition}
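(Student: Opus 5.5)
For $\preceq_{\mathrm s}$, all three properties follow pointwise from the corresponding properties of set inclusion on $2^{(2^{\cK})}$. Reflexivity, antisymmetry ($\mathcal D(\mathbf e)\subseteq\mathcal G(\mathbf e)\subseteq\mathcal D(\mathbf e)$ for every $\mathbf e$) and transitivity are immediate. Part (ii), forward direction, is also immediate: if $\mathcal D(\mathbf e)\subseteq\mathcal G(\mathbf e)$ for every $\mathbf e$, then the second alternative in Definition~\ref{def:dominance}(ii) holds at every $\mathbf e$.

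For $\preceq_{\mathrm w}$, I will work pointwise. Fix $\mathbf e$ and write $D=\mathcal D(\mathbf e)$, $G=\mathcal G(\mathbf e)$. Define the relation $D\le G$ iff either $D^\downarrow\subsetneq G^\downarrow$, or $D\subseteq G$. A useful preliminary fact is that $D\subseteq G$ implies $D^\downarrow\subseteq G^\downarrow$. Consequently, $D\le G$ always implies $D^\downarrow\subseteq G^\downarrow$. Once the properties are shown for $\le$, they transfer to $\preceq_{\mathrm w}$ by quantifying over $\mathbf e$.

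\textbf{Reflexivity} holds because $D\subseteq D$.

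\textbf{Antisymmetry.} Suppose $D\le G$ and $G\le D$. Then $D^\downarrow\subseteq G^\downarrow\subseteq D^\downarrow$, so $D^\downarrow=G^\downarrow$. This rules out the strict alternative in both relations. Hence $D\subseteq G$ and $G\subseteq D$, so $D=G$.

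\textbf{Transitivity.} Suppose $D\le G$ and $G\le H$, so $D^\downarrow\subseteq G^\downarrow\subseteq H^\downarrow$. If either step is strict, then $D^\downarrow\subsetneq H^\downarrow$ and we are done. Otherwise $D^\downarrow=G^\downarrow=H^\downarrow$. Then both relations must hold via inclusion, giving $D\subseteq G\subseteq H$. This is the only step requiring care: one must combine the two alternatives correctly, and the key observation $D\le G\Rightarrow D^\downarrow\subseteq G^\downarrow$ handles it.

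\textbf{Counterexample for the converse in (ii).} Let $K=2$. Take the constant procedures $\mathcal D(\mathbf e)=\{\{1\}\}$ and $\mathcal G(\mathbf e)=\{\{1,2\}\}$ for all $\mathbf e$; both are measurable since they are constant.
\begin{itemize}
\item $\mathcal D^\downarrow=\{\varnothing,\{1\}\}$ is a proper subset of $\mathcal G^\downarrow=2^{\cK}$, so $\mathcal D\preceq_{\mathrm w}\mathcal G$.
\item $\{1\}\notin\mathcal G(\mathbf e)$, so $\mathcal D\not\preceq_{\mathrm s}\mathcal G$.
\end{itemize}
If a level-$\alpha$ example were wanted, one could replace these by $\mathcal D=\{\varnothing\}\cup\{\{1\}:e_1=\infty\}$ and the analogous $\mathcal G$. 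The statement itself, however, concerns dominance without any FDR requirement, so the constant example suffices.
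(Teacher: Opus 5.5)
Your proof is correct and follows the paper's own argument essentially line for line. Both work pointwise with the relation ``$D^\downarrow\subsetneq G^\downarrow$ or $D\subseteq G$'', use the observation that it always forces $D^\downarrow\subseteq G^\downarrow$ to get antisymmetry and the strict/equal case split for transitivity, and refute the converse with the same constant procedures $\{\{1\}\}$ and $\{\{1,2\}\}$ for $K=2$.

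Your optional level-$\alpha$ aside is not needed for the statement. If you keep it, be precise: the naive choice $\mathcal G=\{\varnothing\}\cup\{\{1,2\}:e_1=\infty\}$ has FDR $1/2$ under $N=\{2\}$, so it is not at level $\alpha$ when $\alpha<1/2$. Requiring $e_1=e_2=\infty$ in both $\mathcal D$ and $\mathcal G$ gives a valid example.
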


The proof of Proposition~\ref{prop:dominance-orders} is  presented in Appendix~\ref{app:A1}.

Since $\preceq_{\mathrm s}$ is a partial order, 
$\mathcal D\preceq_{\rm s} \mathcal D'$
and 
$\mathcal D'\preceq_{\rm s} \mathcal D$
jointly force $\mathcal D = \mathcal D'$. The situation for $\preceq_{\mathrm w}$ is similar.
Therefore, we can define admissibility in the following concise form.

\begin{definition}
An e-testing procedure $\mathcal D\in\mathrm{SP}_\alpha$ is said to be \emph{admissible (at FDR level $\alpha$)}  if, for every $\mathcal D'\in\mathrm{SP}_\alpha$, $\mathcal D\preceq_{\mathrm w}\mathcal D'$ implies $\mathcal D' = \mathcal D$. A point procedure $\mathfrak D\in \mathrm{PP}_\alpha$ is said to be \emph{admissible} if, for every $\mathfrak D'\in\mathrm{PP}_\alpha$, $\mathfrak D\subseteq \mathfrak D'$ implies $\mathfrak D' = \mathfrak D$.
\end{definition}

An admissible procedure  cannot be  uniformly improved, and hence it is a natural requirement when choosing suitable procedures in statistical applications.


Since we have two distinct versions of dominance for simultaneous procedures, it is reasonable to also define 
another version of admissibility. We say that $\mathcal D\in\mathrm{SP}_\alpha$ is \emph{s-admissible} 
if, for every $\mathcal D'\in\mathrm{SP}_\alpha$,  $\mathcal D\preceq_{\mathrm s}\mathcal D'$ implies $\mathcal D' = \mathcal D$.
The next result, perhaps surprisingly, shows that there is no need to distinguish these two forms of admissibility.

\begin{proposition} 
\label{prop:strong-weak-admissibility-equivalence}
For $\alpha\in (0,1)$, s-admissibility and admissibility coincide on $\mathrm{SP}_\alpha$. 
\end{proposition}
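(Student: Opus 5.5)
The implication ``admissible $\Rightarrow$ s-admissible'' is immediate from Proposition~\ref{prop:dominance-orders}(ii). Suppose $\mathcal D$ is admissible and $\mathcal D\preceq_{\mathrm s}\mathcal D'$ with $\mathcal D'\in\mathrm{SP}_\alpha$. Then $\mathcal D\preceq_{\mathrm w}\mathcal D'$, so admissibility gives $\mathcal D'=\mathcal D$. All the work is in the converse. I plan to argue by contradiction: let $\mathcal D$ be s-admissible and suppose some $\mathcal D'\in\mathrm{SP}_\alpha$ satisfies $\mathcal D\preceq_{\mathrm w}\mathcal D'$ with $\mathcal D'\neq\mathcal D$. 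From this I want to build a valid $\mathcal G\in\mathrm{SP}_\alpha$ with $\mathcal D\preceq_{\mathrm s}\mathcal G$ and $\mathcal G\neq\mathcal D$.

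The natural candidate is the pointwise union $\mathcal G(\mathbf e)=\mathcal D(\mathbf e)\cup\mathcal D'(\mathbf e)$. Measurability is clear, and $\mathcal D(\mathbf e)\subseteq\mathcal G(\mathbf e)$ for every $\mathbf e$. Also $\mathcal G\neq\mathcal D$: at any $\mathbf e$ where the two procedures differ, weak dominance gives either $\mathcal D(\mathbf e)\subsetneq\mathcal D'(\mathbf e)$ or $\mathcal D^\downarrow(\mathbf e)\subsetneq\mathcal D'^\downarrow(\mathbf e)$. In both cases $\mathcal D'(\mathbf e)$ contains a set not in $\mathcal D(\mathbf e)$. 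It therefore remains only to show $\mathcal G\in\mathrm{SP}_\alpha$, i.e.
\[
\E\Bigl[\max\bigl\{\max_{R\in\mathcal D(\mathbf E)}\FDP_N(R),\ \max_{R\in\mathcal D'(\mathbf E)}\FDP_N(R)\bigr\}\Bigr]\le\alpha
\]
for all $N$, all $\p$ and all e-values $\mathbf E$.

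The main obstacle is this FDR bound. A maximum of two valid FDP functionals need not be valid, and the FDP is not monotone under inclusion of rejection sets. So knowing that each $R\in\mathcal D(\mathbf e)$ lies inside some $R'\in\mathcal D'(\mathbf e)$ does not by itself give $\FDP_N(R)\le\FDP_N(R')$.

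My first attempt is to reduce validity to a pointwise worst-case condition. Fix $N$ and let $f^{\mathcal C}_N(\mathbf x)=\sup\{\max_{R\in\mathcal C(\mathbf e)}\FDP_N(R):\mathbf e^N=\mathbf x\}$, the supremum over the arbitrary non-null coordinates. Since the non-null e-values are unrestricted, FDR control of $\mathcal C$ at level $\alpha$ should be equivalent to $\E[f^{\mathcal C}_N(\mathbf X)]\le\alpha$ for every random vector $\mathbf X$ whose coordinates are e-variables with arbitrary dependence. That condition depends only on the sets $\{f^{\mathcal C}_N\ge t\}$ and is monotone in $f^{\mathcal C}_N$.

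I would then try to show $f^{\mathcal D}_N\le f^{\mathcal D'}_N$ pointwise, so that $f^{\mathcal G}_N=\max(f^{\mathcal D}_N,f^{\mathcal D'}_N)=f^{\mathcal D'}_N$ and validity of $\mathcal G$ follows from that of $\mathcal D'$. On inputs where $\mathcal D(\mathbf e)\subseteq\mathcal D'(\mathbf e)$ the comparison is trivial. On the remaining inputs, where $\mathcal D^\downarrow(\mathbf e)\subsetneq\mathcal D'^\downarrow(\mathbf e)$, I would use the freedom in the non-null coordinates. Given $R\in\mathcal D(\mathbf e)$ and a containing $R'\in\mathcal D'(\mathbf e)$, I would look for an alternative input $\mathbf e'$ with the same null part at which $\mathcal D'$ certifies a set of FDP at least $\FDP_N(R)$. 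For example, I would try raising the non-null e-values in $R'\setminus R$ to $\infty$ and lowering those outside, exploiting $\mathbf e\in[0,\infty]^K$.

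If this direct comparison fails, the fallback is to use the s-admissibility of $\mathcal D$ more strongly. I would first show that an s-admissible $\mathcal D$ already contains every set that can be added without raising any $f^{\mathcal D}_N$. I would then use this maximality to shrink $\mathcal G$ to $\mathcal D$ together with the sets of $\mathcal D'(\mathbf e)$ that do not raise any $f^{\mathcal D}_N$. This would give a valid strict strong improvement of $\mathcal D$, and hence the contradiction.
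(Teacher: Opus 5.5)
Your easy direction is fine, and so is the observation that $\mathcal D'(\mathbf e)$ contains a set outside $\mathcal D(\mathbf e)$ wherever the two differ. The converse, however, has a genuine gap. Everything rests on the validity of $\mathcal G=\mathcal D\cup\mathcal D'$, through the pointwise inequality $f^{\mathcal D}_N\le f^{\mathcal D'}_N$, and neither of your routes delivers it.

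Weak dominance allows $\mathcal D'$ to cover $R=\{1\}\in\mathcal D(\mathbf e)$, with $1\in N$, only by $R'=\{1,2\}$ with $2\notin N$, which lowers the FDP from $1$ to $1/2$. Moving to an input $\mathbf e'$ with the same null part (non-null coordinates of $R'\setminus R$ raised to $\infty$, the others lowered) tells you nothing about $\mathcal D'(\mathbf e')$. The procedure $\mathcal D'$ is an arbitrary member of $\mathrm{SP}_\alpha$ with no monotonicity. At $\mathbf e'$, weak dominance again only promises coverage by supersets that may contain non-nulls. Even for an increasing $\mathcal D'$, raising coordinates in $R'\setminus R$ just keeps certifying $R'$. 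Weak dominance forces $\mathcal D(\mathbf e')\subseteq\mathcal D'(\mathbf e')$ only where $\cK\in\mathcal D(\mathbf e')$, since $(\mathcal D')^\downarrow$ cannot strictly exceed $2^{\cK}$, and you cannot generally arrange this with the null part fixed.

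Your fallback is circular. If s-admissibility means $\mathcal D$ already contains every set whose addition raises no $f^{\mathcal D}_N$, then ``$\mathcal D$ plus the harmless sets of $\mathcal D'$'' is just $\mathcal D$. A contradiction would require a set of $\mathcal D'(\mathbf e)\setminus\mathcal D(\mathbf e)$ that raises no $f^{\mathcal D}_N$, which is exactly what remains unproven.

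The missing idea is a mechanism that turns ``each new set of $\mathcal D'$, added alone to $\mathcal D$, breaks FDR under some null set'' into ``$\mathcal D'$ itself breaks FDR''. The paper does this without forming any union. Write $\mathcal D^{\mathbf x,S}$ for $\mathcal D$ with $S$ added at $\mathbf x$ only.
\begin{enumerate}[(a)]
\item s-admissible procedures are increasing (Lemma~\ref{lem:s-admissible-increasing}).
\item Among triples $(\mathbf x,S,N)$ with $S\in\mathcal D'(\mathbf x)\setminus\mathcal D(\mathbf x)$ and $\mathcal D^{\mathbf x,S}$ violating FDR under $N$, pick one with $|N|$ minimal. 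Take a violating e-vector and set its non-null coordinates to $\infty$ away from the atom at $\mathbf x$; by (a) the violation persists.
\item Let $\mathbf y$ satisfy $y_j=\infty$ for all $j\notin N$, and let $S\in\mathcal D'(\mathbf y)\setminus\mathcal D(\mathbf y)$. By s-admissibility, adding $S$ at $\mathbf y$ breaks FDR under some null set. Null e-values are a.s.\ finite, so that null set lies in $N$. If $\FDP_N(S)<\max_{T\in\mathcal D(\mathbf y)}\FDP_N(T)$, it would be a strict subset of $N$, contradicting minimality. Hence $\max_{T\in\mathcal D'(\mathbf y)}\FDP_N(T)\ge\max_{T\in\mathcal D(\mathbf y)}\FDP_N(T)$.
\end{enumerate}
Combining (b) and (c) on the padded vector gives $\FDR_{\mathcal D'}>\alpha$.

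So a comparison of the kind you want is only ever established for this one minimal $N$ and only on $\infty$-padded inputs. It comes from the minimal-null-set argument together with increasingness of $\mathcal D$, and your plan contains neither.
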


The proof of Proposition \ref{prop:strong-weak-admissibility-equivalence} is quite sophisticated and requires detailed analysis on the two partial orders, which are distinct even though they induce equivalent notions of admissibility.

In the rest of this paper, we sometimes say that a procedure $\mathcal D$ is admissible without explicitly mentioning the level $\alpha$ or the set $\mathrm{SP}_\alpha$. In such situations, the level $\alpha$ should be clear from the definition of $\mathcal D$. For instance, when we discuss the admissibility of $\mathrm{eSC}_{\alpha}$, 
the set  for comparison is $\mathrm{SP}_{\alpha}$, although  $\mathrm{eSC}_{\alpha}$ is also a member of $\mathrm{SP}_{\beta}$ for $\beta>\alpha$.
The next result justifies that there is no ambiguity caused by omitting $\alpha$ in admissibility statements. 

\begin{proposition}
\label{prop:admissibility-level-issue}
For $\alpha\in(0,1)$, if a member of $\mathrm{SP}_{\alpha}$ is admissible within $\mathrm{SP}_{\alpha}$, then it cannot be a member of $\mathrm{SP}_{\beta}$ for any $\beta\in (0,\alpha)$.
\end{proposition}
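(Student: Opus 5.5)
We argue by contradiction. Suppose $\mathcal D\in\mathrm{SP}_\alpha$ is admissible within $\mathrm{SP}_\alpha$ and also $\mathcal D\in\mathrm{SP}_\beta$ for some $\beta<\alpha$. We will build $\mathcal D'\in\mathrm{SP}_\alpha$ with $\mathcal D\preceq_{\mathrm s}\mathcal D'$ and $\mathcal D'\neq\mathcal D$. Since strong dominance implies weak dominance (Proposition~\ref{prop:dominance-orders}(ii)), this contradicts admissibility. The idea is to spend the unused error budget $\alpha-\beta$ on an extra rejection that occurs only at e-values too large to be null with appreciable probability.

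First we pick a new rejection set: one hypothesis $k\in\cK$ and an input region $B_c=\{\mathbf e: e_k\ge c\}$. Define $\mathcal D'(\mathbf e)=\mathcal D(\mathbf e)\cup\{\{k\}\}$ for $\mathbf e\in B_c$, and $\mathcal D'(\mathbf e)=\mathcal D(\mathbf e)$ otherwise. This map is measurable and $\mathcal D\preceq_{\mathrm s}\mathcal D'$ by construction.

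Next we bound the FDR of $\mathcal D'$. Pointwise,
\[
\max_{R\in\mathcal D'(\mathbf E)}\FDP_N(R)\le \max_{R\in\mathcal D(\mathbf E)}\FDP_N(R)+\mathds 1\{k\in N,\,E_k\ge c\},
\]
because adding $\{k\}$ raises the maximum only when $k$ is a true null, and then only up to $1$. Taking expectations and applying Markov's inequality to the null e-value $E_k$ gives $\FDR_{\mathcal D'}(\mathbf E)\le\beta+1/c$. Choosing $c=1/(\alpha-\beta)$ (finite, with $c>1$) puts $\mathcal D'$ in $\mathrm{SP}_\alpha$, uniformly over all $\p$ and all e-value vectors.

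The main obstacle is strictness: we need some $\mathbf e$ with $e_k\ge c$ and $\{k\}\notin\mathcal D(\mathbf e)$. If no such point exists for any $k$, we switch to a different extra rejection. The plan here is to show that $\mathcal D$ cannot already contain every cheap set. Consider $\mathbf e=(c,\dots,c)$. If $\{k\}\in\mathcal D(\mathbf e)$ for every $k$, we look for a set $R$ with $|R|\ge2$ that $\mathcal D$ omits at some point of the region where all $e_i$ with $i\in R$ are at least $c$. Adding $R$ on that region again costs at most $\p(\exists i\in R\cap N: E_i\ge c)\le |R|/c$ in the expectation bound. We handle this by enlarging $c$ to $K/(\alpha-\beta)$ and running the same argument with arbitrary $R$: adding $R$ on $\{\min_{i\in R}e_i\ge c\}$ costs at most $\sum_{i\in R}\p(E_i\ge c)\le K/c=\alpha-\beta$. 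Strictness then only needs some $\mathbf e$ and some $R$ with $\min_{i\in R}e_i\ge c$ and $R\notin\mathcal D(\mathbf e)$. Such a pair always exists. Otherwise $\mathcal D(\infty,\dots,\infty)\ni\cK$ and $\mathcal D(c,\dots,c)$ contains every nonempty subset, in particular $\cK$. Then the point mass at $\mathbf E=(c,\dots,c)$ with $c>1$ would have to be excluded. But a degenerate distribution with all $E_i=c$ and all hypotheses null is not a valid e-value vector, so we instead use a two-point distribution with $\p(\mathbf E=(c,\dots,c))=1/c$ and $\mathbf E=\mathbf 0$ otherwise. This is a valid null e-vector, and it yields $\FDR_{\mathcal D}\ge 1/c$, which equals $(\alpha-\beta)/K$ and so is too small to give a contradiction. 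Strictness is therefore the delicate step. We will resolve it by taking the maximal $c$ at which the region is nonempty and noting that at $\mathbf e=(\infty,\dots,\infty)$ the set $\{\cK\}$ can be added at zero null cost, since $\p(E_i=\infty)=0$ for null $i$. If $\cK\in\mathcal D(\infty,\dots,\infty)$ already, we use the points $\mathbf e$ with $e_i=\infty$ for $i\in R$ and $e_i=0$ elsewhere, adding $R$ there. This costs nothing, and $R\notin\mathcal D(\mathbf e)$ must hold for some $R$, or else we iterate down to singletons. The cheapest route is the first one: in all cases, adding the set $R=\{i:e_i=\infty\}$ at points where it is missing gives a zero-cost strict improvement of $\mathcal D$, and the budget $\alpha-\beta$ covers the remaining case via Markov's inequality.
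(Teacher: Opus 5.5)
Your first construction is sound, and it is exactly the paper's Step~1: adding $\{k\}$ on $\{e_k\ge 1/(\alpha-\beta)\}$ gives a member of $\mathrm{SP}_\alpha$ that strongly dominates $\mathcal D$. The gap is that the argument never gets past the strictness problem you identify, and none of the proposed fixes works.

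You are arguing from admissibility, so you must allow that $\mathcal D$ already contains every such cheap addition; admissibility forces exactly that. In particular, the closing claim that adding $R=\{i:e_i=\infty\}$ gives a zero-cost strict improvement fails. A zero-cost addition must already be present in an admissible $\mathcal D$, and indeed every weighted-mean $\overline{\mathrm{eBH}}$ procedure contains $\{i:e_i=\infty\}$ at every $\mathbf e$ by the $\infty$ convention. The phrase ``or else we iterate down to singletons'' derives no contradiction. Your own computation shows the Markov fallback gives only $\FDR_{\mathcal D}\ge(\alpha-\beta)/K$. Even the sharpest additive version gives only $\FDR_{\mathcal D}\ge\alpha-\beta$: admissibility forces $\{1\}\in\mathcal D(\mathbf e^*)$ for $\mathbf e^*=(1/(\alpha-\beta),0,\dots,0)$, and you then mix $\mathbf e^*$ with $\mathbf 0$ under null set $\cK$. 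That bound contradicts $\mathcal D\in\mathrm{SP}_\beta$ only when $\beta<\alpha/2$. Spending the slack $\alpha-\beta$ additively cannot finish the proof.

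The missing idea is to use the slack multiplicatively. Take $c=\alpha/\beta>1$ and the rescaled procedure $\mathcal D_c(\mathbf e)=\mathcal D(c\mathbf e)$. It belongs to $\mathrm{SP}_\alpha$: feed $\mathcal D$ the input equal to $c\mathbf E$ with probability $1/c$ and $\mathbf 0$ otherwise. Since $\mathcal D(\mathbf 0)$ contains no nonempty set, this gives $\FDR_{\mathcal D_c}(\mathbf E)\le c\beta=\alpha$. An s-admissible procedure is increasing (Lemma~\ref{lem:s-admissible-increasing}), so $\mathcal D\preceq_{\mathrm s}\mathcal D_c$. Admissibility then forces $\mathcal D(c\mathbf e)=\mathcal D(\mathbf e)$ for all $\mathbf e$. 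Combine this with the inclusion $\{1\}\in\mathcal D(\mathbf e^*)$, which your first construction forces when read contrapositively: it gives $\{1\}\in\mathcal D(\mathbf e^*/c^n)$ for every $n$. Once $\mathbf e^*/c^n\le\mathbf 1$ coordinatewise, this deterministic vector is a valid e-vector under null set $\cK$ with $\FDP_{\cK}(\{1\})=1$, contradicting $\mathcal D\in\mathrm{SP}_\beta$. So your first step should be used as information forced by admissibility, not as a candidate strict improvement.
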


The analogous statement in Proposition~\ref{prop:admissibility-level-issue} also holds for $\mathrm{PP}_\alpha$, and hence
we follow the same convention when mentioning admissibility within $\mathrm{PP}_\alpha$.


\section{Closed eBH procedures}\label{sec:close-eBH}

This section introduces the concrete family of procedures that will turn out to organize the admissibility theory. 

For every $n\in \N$, denote by $\Delta_n$ the standard simplex in $\mathbb R^n$, that is,
$$\Delta_n=\left\{(\lambda_1,\dots,\lambda_n)\in [0,1]^n: \sum_{i=1}^{n}\lambda_i=1\right\}.$$
For $\bm{\lambda}=(\lambda_0,\dots,\lambda_n)\in \Delta_{n+1}$, define the weighted mean function 
$$\mathbb M_{\bm{\lambda}}(\mathbf e)=
\begin{cases}
\lambda_0+\sum_{i=1}^{n}\lambda_i e_i,
& \max (\mathbf e) <\infty,\\
\infty,
& \text{otherwise}.
\end{cases}
$$
Further, for $\lambda \in [0,1]$,  define
$$\mathbb M_{\lambda}^{\mathrm{sym}}(\mathbf e)=
\begin{cases}
\lambda+(1-\lambda)\sum_{i=1}^{n}e_i / n,
& \max  (\mathbf e) <\infty,\\
\infty,
& \text{otherwise},
\end{cases}
$$
which corresponds to $\mathbb M_{\bm{\lambda}}$ with $\lambda_1=\dots=\lambda_n=(1-\lambda)/n$.
The functions $\mathbb M_{\bm{\lambda}}$ and $\mathbb M_{\lambda}^{\mathrm{sym}}$ are e-merging functions \citep{VW21}, and $\mathbb M_{\bm{\lambda}}, \bm{\lambda}\in \Delta_{n+1}$ are the only admissible e-merging functions for arbitrary e-values \citep{W25}. The reason for treating infinity separately is explained in Appendix C in \citet[Online supplement]{VW21}, as an admissible e-merging function should assign the value $\infty$ whenever at least one input is $\infty$.


We next explain the closed eBH procedure of \cite{XSF25}. 
For any $A\subseteq \cK$, define the intersection hypothesis $\cP_A=\bigcap_{k\in A}\cP_k$. The collection $\{E_A\}_{A\subseteq \cK}$ is called an \emph{e-collection} if $E_A$ is an e-variable for $\cP_A$ for every $A\subseteq \cK$. We take $E_\varnothing=1$.  

Given an FDR level $\alpha\in (0,1)$, the \emph{closed eBH procedure} ($\overline{\mathrm{eBH}}$) based on the e-collection $\{E_A\}_{A\subseteq \cK}$ is defined as
$$\overline{\mathrm{eBH}}_\alpha=\left\{R\subseteq \cK:E_A\ge\frac{\mathrm{FDP}_A(R)}{\alpha}\text{ for all } A\subseteq \cK\right\}.$$
Note that this procedure does not belong to
$\mathrm{SP}_\alpha$
in our framework, because its input is not a vector in $[0,\infty]^K$, but a collection $\{E_A\}_{A\subseteq \cK}$ of e-variables. Nevertheless,
in the primary examples of $\overline{\mathrm{eBH}}$, 
the collection $\{E_A\}_{A\subseteq \cK}$ 
can be computed as functions of $K$ input e-values, thereby giving rise to elements of $\mathrm{SP}_\alpha$. 
In particular, via the weighted-mean e-collection, we can define the main procedures in our study.


\begin{definition}\label{def:closed-eBH-procedure}
For $\bm \lambda=(\bm \lambda^A)_{A\subseteq \cK},
$ 
where $\bm \lambda^A=(\lambda_0^A,(\lambda_i^A)_{i\in A})\in \Delta_{|A|+1}$,
define the \emph{weighted-mean $\overline{\mathrm{eBH}}$ procedure} as the simultaneous procedure given by
\begin{equation}
\label{eq:closed-def}
\overline{\mathrm{eBH}}^{\bm \lambda}_\alpha(\mathbf e)=\left\{R\subseteq \cK:\mathbb M_{\bm{\lambda}^A}(\mathbf e^{A}) \ge\frac{\mathrm{FDP}_A(R)}{\alpha}\text{ for all } A\subseteq \cK\right\}.
\end{equation}
As a special case, 
the \emph{mean $\overline{\mathrm{eBH}}$ procedure} is defined by 
%
$$\overline{\mathrm{eBH}}_\alpha^{\mathsf{m}}(\mathbf e)=\left\{R\subseteq \cK:\mathbb M_{0}^{\rm sym}(\mathbf e^{A}) \ge\frac{\mathrm{FDP}_A(R)}{\alpha}\text{ for all } A\subseteq \cK\right\}.$$ 
 
\end{definition}

\begin{remark} The parameter $\bm \lambda$ in \eqref{eq:closed-def} has $K2^{K-1}$ degrees of freedom, and hence its choice is a highly nontrivial task, which we will discuss in later sections. 
Different choices of the parameter $\bm\lambda$ may define the same weighted-mean $\overline{\mathrm{eBH}}$ procedure; see Example~\ref{ex:same_procedure_different_lambda} in Appendix~\ref{app:counterexamples}. 
\end{remark}


For every e-collection, the corresponding $\overline{\mathrm{eBH}}$ procedure $\overline{\mathrm{eBH}}_\alpha$ belongs to $\mathrm{SP}_\alpha$, as shown by \cite{XSF25}. A weighted-mean $\overline{\mathrm{eBH}}$ procedure is \emph{constant-free} if $\lambda_0^A=0$ for each nonempty $A\subseteq\cK$; a weighted-mean $\overline{\mathrm{eBH}}$ procedure is \emph{strictly positive} if $\lambda_i^A>0$ for all $i\in A$ and for each nonempty $A\subseteq\cK$. 
Every weighted-mean $\overline{\mathrm{eBH}}$ procedure is increasing in the sense of $\mathcal D(\mathbf x)\subseteq\mathcal D(\mathbf y)$ whenever $\mathbf x\le\mathbf y$. 


\begin{definition}\label{def:point-procedure}
A \emph{point weighted-mean} (resp.~\emph{mean}) \emph{$\overline{\mathrm{eBH}}$ procedure}, denoted generically by $\overline{\mathrm{ebh}}_\alpha^{\bm \lambda}$ (resp.~$\overline{\mathrm{ebh}}_\alpha^{\mathsf{m}}$), is a point e-testing procedure given by rejecting a maximal set in the corresponding weighted-mean (resp.~mean) $\overline{\mathrm{eBH}}$ procedure.
\end{definition}

In the original definition of point $\overline{\mathrm{eBH}}$ procedure \citep{XSF25}, it rejects one set of the largest size in the corresponding $\overline{\mathrm{eBH}}$ procedure. That definition, which is based on the size-induced order, is slightly different from Definition~\ref{def:point-procedure}, which is based on an inclusion-induced partial order. For example, if the $\overline{\mathrm{eBH}}$ procedure rejects $\{\{1,2\},\{1,3,4\},\{2,3,4\}\}$, the size-based point $\overline{\mathrm{eBH}}$ procedure rejects either $\{1,3,4\}$ or $\{2,3,4\}$, whereas our definition can also reject $\{1,2\}$. There are two reasons for this upgrade.
First, it is a priori unclear whether rejecting $\{1,2\}$  
or $\{1,3,4\}$ 
is more useful; it depends on the actual application. 
Second, considering only the size-based $\overline{\mathrm{eBH}}$ procedures does not yield a complete class (Theorem~\ref{thm:point-weighted-ebh-complete-class}). 
Therefore, the upgraded definition is both natural and technically necessary.

Note that 
$\overline{\mathrm{ebh}}_\alpha^{\bm \lambda}$ and $\overline{\mathrm{ebh}}_\alpha^{\mathsf{m}}$
are not uniquely specified by $K$, $\bm \lambda$ and $\alpha$,
because the chosen maximal set 
from $\cebh$ has flexibility.
This contrasts with $\cebh_\alpha^{\bm \lambda}$, which is a mapping uniquely specified by the parameters.


\cite{XSF25} noted that the point mean $\overline{\mathrm{eBH}}$ procedure $\overline{\mathrm{ebh}}_\alpha^{\mathsf{m}}$ dominates the eBH procedure. The next result identifies when a weighted-mean $\overline{\mathrm{eBH}}$ procedure strongly dominates the self-consistent procedure. 
For $\bm\lambda^A=(\lambda^A_0,(\lambda_i^A)_{i\in A})\in \Delta_{|A|+1}$,
we write $ \underline{\lambda}^A=\min_{i\in A} \lambda_i^A$, which corresponds to the smallest weight assigned to any e-value.

\begin{proposition}\label{prop:esc-dominated-by-weighted-ebh}
Let $\alpha\in(0,1)$ and
consider the procedure $\overline{\mathrm{eBH}}_\alpha^{\bm\lambda}$ in \eqref{eq:closed-def}.

\begin{enumerate}[(i)]
\item $\mathrm{eSC}_\alpha\preceq_{\mathrm s}\overline{\mathrm{eBH}}_\alpha^{\bm\lambda}$ if and only if 
\begin{align}
\label{eq:dom-eBH}
\alpha\lambda_0^A+K\underline{\lambda}^A\ge1 ~\mbox{ for all~}  A\subseteq\cK.
\end{align}

\item Suppose that at least one of the following holds: the procedure is constant-free; $K=2$; or $K\ge3$ and $\alpha<K/(K+1)$. Then $\mathrm{eSC}_\alpha\preceq_{\mathrm w}\overline{\mathrm{eBH}}_\alpha^{\bm\lambda}$ if and only if  \eqref{eq:dom-eBH} holds. 

\item Suppose that at least one of the following holds: the procedure is constant-free; or $\alpha<1/2$. There exists  an $\overline{\mathrm{ebh}}_\alpha^{\bm\lambda}\supseteq \mathrm{eBH}_\alpha$ if and only if   \eqref{eq:dom-eBH} holds.
\end{enumerate}
\end{proposition}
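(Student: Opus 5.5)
Part (i) is a pointwise comparison, and the plan is to reduce it to a single extremal input. Fix a nonempty $R\in\mathrm{eSC}_\alpha(\mathbf e)$, so that $e_k\ge K/(\alpha|R|)$ for every $k\in R$. For any nonempty $A$ with $A\cap R\neq\varnothing$ we have
\[
\mathbb M_{\bm\lambda^A}(\mathbf e^A)\ge \lambda_0^A+\sum_{i\in A\cap R}\lambda_i^A\frac{K}{\alpha|R|}\ge \lambda_0^A+|A\cap R|\,\underline{\lambda}^A\frac{K}{\alpha|R|}.
\]
The requirement is $\mathbb M_{\bm\lambda^A}(\mathbf e^A)\ge |A\cap R|/(\alpha|R|)$. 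Write $t=|A\cap R|/|R|\le 1$. The condition becomes $\alpha\lambda_0^A+tK\underline{\lambda}^A\ge t$, which is affine in $t$. It holds at $t=0$ trivially and at $t=1$ by \eqref{eq:dom-eBH}, so it holds for all $t\in[0,1]$. This gives sufficiency.

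For necessity, fix $A$ and a minimizing index $j\in A$ with $\lambda_j^A=\underline{\lambda}^A$. Take $R=\{j\}$ and $\mathbf e$ with $e_j=K/\alpha$ and $e_i=0$ otherwise. Then $\{j\}\in\mathrm{eSC}_\alpha(\mathbf e)$, while $\{j\}\in\overline{\mathrm{eBH}}^{\bm\lambda}_\alpha(\mathbf e)$ requires $\lambda_0^A+\underline{\lambda}^A K/\alpha\ge 1/\alpha$, which is exactly \eqref{eq:dom-eBH}. The case $A=\varnothing$ is vacuous.

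For (ii), sufficiency follows from (i) together with Proposition~\ref{prop:dominance-orders}(ii). For necessity, suppose \eqref{eq:dom-eBH} fails at some $A$ and $j$. Weak dominance at the input $\mathbf e$ above demands one of two things: either $\mathrm{eSC}_\alpha(\mathbf e)\subseteq\overline{\mathrm{eBH}}^{\bm\lambda}_\alpha(\mathbf e)$, which fails because $\{j\}$ is missing, or $\mathrm{eSC}_\alpha^\downarrow(\mathbf e)\subsetneq\overline{\mathrm{eBH}}^{\bm\lambda\downarrow}_\alpha(\mathbf e)$. Since $\mathrm{eSC}_\alpha(\mathbf e)^\downarrow=\{\varnothing,\{j\}\}$, the second option forces $\{j\}\subseteq R'$ for some rejected $R'$. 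The aim is then to show that $\{j\}$ itself must be rejected, a contradiction. Any $R'\ni j$ with $|R'|\ge2$ contains an index $i$ with $e_i=0$. Consider the test set $A'=\{i\}$, where $\mathbb M(\mathbf e^{\{i\}})=\lambda_0^{\{i\}}$; this forces $\alpha\lambda_0^{\{i\}}\ge 1/|R'|$, and similar constraints come from subsets mixing zeros with $j$. In the constant-free case these constraints fail immediately, so $R'=\{j\}$.

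With constants present I would instead perturb the input. One option is to set all $e_i=0$ for $i\neq j$, exploit the failure of \eqref{eq:dom-eBH} on $A$, and test the subsets $A\cap R'$ and $A\cup R'$. If that fails, I would choose a different violating input, for example setting the other coordinates to $\infty$ to force $\overline{\mathrm{eBH}}$ behaviour. The thresholds $K=2$ and $\alpha<K/(K+1)$ should emerge from a counting inequality. Take $A'=R'\setminus\{j\}$ in the worst case, so that $\mathrm{FDP}=(|R'|-1)/|R'|\ge 1/2$ once $R'\ni j$ has size at least 2. The constants must satisfy $\lambda_0\ge \mathrm{FDP}/\alpha$ for all such sets while simultaneously violating \eqref{eq:dom-eBH}. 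I expect this case analysis, showing that no larger set $R'$ can absorb $\{j\}$ under the stated restrictions, to be the main obstacle.

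Part (iii) follows the same template. For sufficiency, by (i) the set $\mathrm{eBH}_\alpha(\mathbf e)$ lies in $\overline{\mathrm{eBH}}^{\bm\lambda}_\alpha(\mathbf e)$, so choose a maximal element containing it; existence follows by finiteness of $2^{\cK}$. For necessity, use the same one-hot input, where $\mathrm{eBH}_\alpha(\mathbf e)=\{j\}$, so the point procedure must reject some $R'\ni j$. Excluding $|R'|\ge2$ via the singleton $\{i\}$ with $e_i=0$ gives the constraint $\alpha\lambda_0^{\{i\}}\ge 1/|R'|$. This is impossible when the procedure is constant-free, or when $\alpha<1/2$ and $|R'|=2$. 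Larger $R'$ contain two zero coordinates and are handled via $A'=R'\setminus\{j\}$, using $\mathrm{FDP}\ge 1/2$ and $\lambda_0\le1$.
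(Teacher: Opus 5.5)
Your parts (i) and (iii) are correct and essentially the paper's argument. The affine-in-$t$ step is equivalent to summing \eqref{eq:dom-eBH} over $i\in A\cap R$; you should add the trivial case where $\mathbf e^A$ has an infinite entry. In (iii) you use the zero-coordinate set $R'\setminus\{j\}$ exactly as the paper does.

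The gap is the necessity direction of (ii) when constants are present ($K=2$, or $K\ge3$ with $\alpha<K/(K+1)$). You leave this open, and your proposed route cannot close it. You aim to show that at the one-hot input $\mathbf e$ ($e_j=K/\alpha$, other coordinates $0$) no $R'\supsetneq\{j\}$ is rejected, but this is false in general. What blocks $R'$ there are lower bounds such as $\lambda_0^B\ge |B\cap R'|/(\alpha|R'|)$ for zero-coordinate sets $B$. These bounds do not involve the violating set $A$. For $\alpha\ge 1/2$ and $|R'|=2$ they can hold with constants at most $1$, so no counting inequality against the violation at $(A,j)$ arises. For example, take $K=2$, $\alpha=3/4$, $\bm\lambda^{\{1\}}=(0.9,0.1)$, $\bm\lambda^{\{2\}}=(0.8,0.2)$ and $\bm\lambda^{\cK}=(0,1/2,1/2)$. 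The only pair violating \eqref{eq:dom-eBH} is $(\{1\},1)$, yet $\cK\in\overline{\mathrm{eBH}}^{\bm\lambda}_\alpha(8/3,0)$, so weak dominance is not contradicted at the one-hot input. Your fallback of setting the other coordinates to $\infty$ does put $\cK$ into $\mathrm{eSC}_\alpha$, but it makes every test set except $\{j\}$ vacuous. It therefore only disposes of a singleton violating set. Combined with the direct contradiction when $A\supseteq R'$, this covers $K=2$. It does not cover a violating $A$ with $2\le|A|\le K-1$ when $K\ge3$.

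The missing idea is to compare the one-hot lower bound with an upper bound on the \emph{same} constant, obtained at a second input $\mathbf y$ with $\cK\in\mathrm{eSC}_\alpha(\mathbf y)$. There $\mathrm{eSC}_\alpha(\mathbf y)^\downarrow=2^{\cK}$ cannot be strictly enlarged, so weak dominance collapses to $\mathrm{eSC}_\alpha(\mathbf y)\subseteq\overline{\mathrm{eBH}}^{\bm\lambda}_\alpha(\mathbf y)$. Let $C=\cK\setminus\{j\}$.
\begin{enumerate}[(a)]
\item At the one-hot input, all coordinates of $C$ vanish and $|C\cap R'|=|R'|-1$, so $\lambda_0^C\ge(|R'|-1)/(\alpha|R'|)$.
\item Take $\mathbf y$ with $y_j=1/\alpha$ and $y_r=K/(\alpha(K-1))$ for $r\ne j$. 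Then $C,\cK\in\mathrm{eSC}_\alpha(\mathbf y)$, which forces $C\in\overline{\mathrm{eBH}}^{\bm\lambda}_\alpha(\mathbf y)$. The test set $C$ then gives $\lambda_0^C\le 1/(K-\alpha(K-1))$.
\item Together these give $\alpha\ge K(|R'|-1)/(K(|R'|-1)+1)\ge K/(K+1)$.
\end{enumerate}
This is exactly where the threshold in (ii) comes from; the violating set $A$ is used only to produce $R'$. For $K=2$ this argument covers only $\alpha<2/3$, so the paper treats $K=2$ separately. If $A=\cK$, the one-hot input fails directly, since $R'=\cK$ has $\FDP_{\cK}(R')=1$. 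If $A=\{j\}$, the paper refutes it at $\mathbf y$ with $y_j=2/\alpha$ and the other coordinate $1/\alpha$. There $\{j\},\cK\in\mathrm{eSC}_\alpha(\mathbf y)$, which forces $\alpha\lambda_0^{\{j\}}+2\lambda_j^{\{j\}}\ge1$, contradicting the violation.
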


The proof of part~(i) is short and provides intuition for condition~\eqref{eq:dom-eBH}, so we include it below. The proofs of parts~(ii) and~(iii) are deferred to Appendix~\ref{app:A2}.

\begin{proof}[Proof of part~(i)]
\underline{The ``only if'' direction}: Fix an $A\subseteq\cK$ and $i\in A$, take $\mathbf e=(e_1,\dots,e_K)$ with $e_i=K/\alpha$ and $e_j=0$ for $j\ne i$. Then $\{i\}\in\mathrm{eSC}_\alpha(\mathbf e)$, and hence $\{i\}\in\overline{\mathrm{eBH}}_\alpha^{\bm{\lambda}}(\mathbf e)$. Thus we obtain $\lambda_0^A+K\lambda_i^A/\alpha\ge1/\alpha$. Since $i\in A$ and $A\subseteq\cK$ can be arbitrary, \eqref{eq:dom-eBH} follows.

\underline{The ``if'' direction}: For any $\mathbf e\in[0,\infty]^K$ and $R\in\mathrm{eSC}_\alpha(\mathbf e)$, we show $R\in\overline{\mathrm{eBH}}_\alpha^{\bm{\lambda}}(\mathbf e)$. By definition, $e_i\ge K/(\alpha |R|)$ for all $i\in R$. Fix $A\subseteq\cK$. If $A\cap R=\varnothing$, there is nothing to prove. If $e_i=\infty$ for some $i\in A$, then $\mathbb M_{\bm{\lambda}^A}(\mathbf e^{A})=\infty$. It remains to consider the case where $A\cap R\neq \varnothing$ and $e_i<\infty$ for all $i\in A$. Summing $\alpha\lambda_0^A+K\lambda_i^A\ge1$ over $i\in A\cap R$ gives $$K\sum_{i\in A\cap R}\lambda_i^A\ge |A\cap R|(1-\alpha\lambda_0^A).$$ Then
\[
\begin{split}
\mathbb M_{\bm{\lambda}^A}(\mathbf e^{A})
&\ge \lambda_0^A+\frac{K}{\alpha |R|}\sum_{i\in A\cap R}\lambda_i^A
\ge \lambda_0^A+\frac{|A\cap R|(1-\alpha\lambda_0^A)}{\alpha |R|} \\
&= \frac{|A\cap R|}{\alpha |R|}+\lambda_0^A\left(1-\frac{|A\cap R|}{|R|}\right)
\ge \frac{|A\cap R|}{\alpha |R|}.
\end{split}
\]
Thus $R\in\overline{\mathrm{eBH}}_\alpha^{\bm{\lambda}}(\mathbf e)$, so $\mathrm{eSC}_\alpha\preceq_{\mathrm s}\overline{\mathrm{eBH}}_\alpha^{\bm{\lambda}}$.
\end{proof}

Proposition~\ref{prop:esc-dominated-by-weighted-ebh} provides a useful calibration rule: whenever \eqref{eq:dom-eBH} holds, there exists an
$\overline{\mathrm{ebh}}_\alpha^{\bm\lambda}$ that dominates the base eBH procedure (but this is not true for  every $\overline{\mathrm{ebh}}_\alpha^{\bm\lambda}$). When $\alpha\ge1/2$, the condition \eqref{eq:dom-eBH} is not necessary in general.  Both phenomena are illustrated by Example~\ref{ex:point-version-need-not-dominate-ebh} in Appendix~\ref{app:counterexamples}.

Equivalently,  \eqref{eq:dom-eBH} can be rewritten as $\underline{\lambda}^A\ge (1-\alpha\lambda_0^A) / K$ for all $A$. Since $\alpha / K$ is typically very small, this lower bound is close to $1/K$. In particular, the mean $\overline{\mathrm{eBH}}$ procedure satisfies the condition, hence $\mathrm{eSC}_\alpha\preceq_{\mathrm s}\overline{\mathrm{eBH}}_\alpha^{\mathsf{m}}$.
Intuitively, if, for some $A\subseteq\cK$ and $i\in A$, the weight $\lambda_i^A$ is too small, then the weighted-mean $\overline{\mathrm{eBH}}$ procedure may fail to reject $\{i\}$ even for the vector $\mathbf e$ with a very large $e_i$ and all other coordinates equal to $0$. 


\section{Admissibility of closed eBH procedures}
\label{sec:adm-closed}

This section develops the admissibility results for weighted-mean $\overline{\mathrm{eBH}}$ procedures. This study helps
to give explicit ways for designing a natural admissible subclass, as we discuss in Section~\ref{sec:prior-score-weights}.

\subsection{General admissibility results}

This section focuses on the admissibility of weighted-mean $\overline{\mathrm{eBH}}$ procedures and point weighted-mean $\overline{\mathrm{eBH}}$ procedures. 


\begin{theorem}
\label{thm:constant-free-weighted-ebh-admissible}
For $\alpha\in (0,1)$, any constant-free weighted-mean $\overline{\mathrm{eBH}}$ procedure at level $\alpha$
is admissible. 
\end{theorem}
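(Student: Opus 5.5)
By Proposition~\ref{prop:strong-weak-admissibility-equivalence}, it suffices to prove s-admissibility. Fix a constant-free $\bm\lambda$ and write $\mathcal D=\overline{\mathrm{eBH}}^{\bm\lambda}_\alpha$. Suppose $\mathcal D'\in\mathrm{SP}_\alpha$ satisfies $\mathcal D\preceq_{\mathrm s}\mathcal D'$ and $\mathcal D'\neq\mathcal D$. Then there are $\mathbf e\in[0,\infty]^K$ and $R\in\mathcal D'(\mathbf e)\setminus\mathcal D(\mathbf e)$. By \eqref{eq:closed-def}, some $A\subseteq\cK$ violates the constraint. Hence $\mathbf e^A$ is finite, $A\cap R\neq\varnothing$, and
\[
\sum_{i\in A}\lambda_i^A e_i<\frac{|A\cap R|}{\alpha|R|}=\frac{\FDP_A(R)}{\alpha}.
\]
The plan is to turn this strict slack into a data-generating distribution under which $\mathcal D'$ has $\FDR>\alpha$, contradicting $\mathcal D'\in\mathrm{SP}_\alpha$. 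We take the true null set to be $N=A$.

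\textbf{Construction.} Build a random vector $\mathbf E$ that takes finitely many values.
\begin{itemize}
\item[(a)] With a small probability $p>0$, set $\mathbf E=\mathbf e$. On this event $R\in\mathcal D'(\mathbf E)$, so the realized FDP is at least $\FDP_A(R)$.
\item[(b)] On the complementary event, set the non-null coordinates $E_j$, $j\notin A$, to $\infty$. This makes every constraint indexed by a $B\not\subseteq A$ trivially satisfied. Choose the null coordinates $\mathbf E^A$ among finitely many vectors $\mathbf g$. For each $\mathbf g$ we exhibit a set $R_{\mathbf g}\in\mathcal D(\mathbf g)\subseteq\mathcal D'(\mathbf g)$ whose FDP with respect to $A$ is as large as the constraints $\sum_{i\in B}\lambda^B_i g_i\ge \FDP_B(R_{\mathbf g})/\alpha$, $B\subseteq A$, allow.
\end{itemize}
The natural choice is to make these constraints tight. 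We use vectors $\mathbf g$ supported on a single index $i\in A$ with a large value, and take $R_{\mathbf g}=\{i\}\cup(\cK\setminus A)$ or a similar set. Its FDP is then exactly governed by the weight $\lambda_i^{B}$, because constant-freeness means no mass is wasted on $\lambda_0$. Mixing such point masses over $i\in A$, with probabilities proportional to the weights, should give a null vector satisfying $\E[E_i]\le 1$ for every $i\in A$. In this mixture the expected FDP from event (b) equals $\alpha$ times the total null expectation used there.

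\textbf{Bookkeeping.} The null budget consumed in event (a) is $p\,e_i$ for coordinate $i$. What remains for event (b) is $1-p e_i$ per coordinate. The FDR is then roughly
\[
p\,\FDP_A(R)+\alpha\Bigl(1-p\sum_{i\in A}\lambda^A_ie_i\Bigr)=\alpha+p\Bigl(\FDP_A(R)-\alpha\sum_{i\in A}\lambda^A_ie_i\Bigr)>\alpha,
\]
where the last inequality is exactly the violated constraint. This is where constant-freeness is essential: a positive $\lambda^A_0$ would add a term $p\,\alpha\lambda_0^A$ that could absorb the gain.

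\textbf{Main obstacle.} The hard step is step (b). We must show that the weighted-mean structure for \emph{all} subsets $B\subseteq A$ simultaneously lets event (b) extract FDR exactly $\alpha$ per unit of weighted null mass $\sum_{i\in A}\lambda^A_i\E[E_i]$, not merely something smaller. Two difficulties arise.
\begin{itemize}
\item The weights $\bm\lambda^B$ for $B\subsetneq A$ may differ from $\bm\lambda^A$.
\item FDP values are rational with denominator $|R_{\mathbf g}|$, so exact tightness may fail.
\end{itemize}
We would handle both as follows. First, scale the values $g_i$ so that the binding constraint is the one indexed by $A$ itself; taking $g_i$ large and using $\infty$ outside $A$ to enlarge $|R_{\mathbf g}|$ should help here. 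Second, if necessary, replace the point masses by an LP-duality argument. That argument would express the needed null distribution as a solution of a linear feasibility problem whose dual is precisely the violated weighted-mean inequality. If exact tightness still fails, we would perturb $\mathbf e$ within the open set of vectors where the strict violation persists, and use that $\mathcal D'\supseteq\mathcal D$ only on the rest of the support.
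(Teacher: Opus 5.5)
Your outer skeleton is sound and matches the paper's: reduce to strong dominance via Proposition~\ref{prop:strong-weak-admissibility-equivalence}, put mass $p$ on the witness $\mathbf e$, set coordinates outside $A$ to $\infty$, and make the leftover null budget $1-pe_i$ return FDR $\alpha\sum_{i\in A}\lambda_i^A(1-pe_i)$. But step (b) is the entire proof, and the concrete construction you propose for it fails. Take $\mathbf g$ equal to $c$ at one $i\in A$, $0$ on $A\setminus\{i\}$, and $\infty$ off $A$. The singleton constraints $B=\{j\}$, $j\in A\setminus\{i\}$, force every $R'\in\overline{\mathrm{eBH}}^{\bm\lambda}_\alpha(\mathbf g)$ to satisfy $R'\cap A\subseteq\{i\}$. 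The constraints for $B\ni i$ then give $\FDP_A(R')\le\alpha c\,\min_{i\in B\subseteq A}\lambda_i^B$.

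So each unit of null mass on coordinate $i$ buys at most $\alpha\min_{i\in B\subseteq A}\lambda_i^B$ of FDR, not $\alpha\lambda_i^A$. Adding non-nulls to $R_{\mathbf g}$ rescales $1/|R'|$ and the required $c$ by the same factor, so it does not help. Rescaling $c$ cannot change which $B$ binds. This suffices when $\lambda_i^A\le\lambda_i^B$ for all $i\in B\subseteq A$ (e.g.\ the mean procedure), but not in general. For instance, take $A=\{1,2,3\}$, $\bm\lambda^A=(0,\tfrac13,\tfrac13,\tfrac13)$, $\bm\lambda^{\{1,2\}}=(0,\varepsilon,1-\varepsilon)$ with $\varepsilon<\tfrac13$, and arithmetic means elsewhere. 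Your mixture then yields FDR at most about $\alpha(\varepsilon+\tfrac23)<\alpha$ as $p\downarrow0$.

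Your fallbacks do not close the gap. The LP-duality step is not carried out, and asserting that its dual is exactly the violated inequality is essentially the admissibility claim itself. Perturbing $\mathbf e$ is not available, since $R\in\mathcal D'(\cdot)$ is known only at $\mathbf e$ itself.

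The missing idea is to spend the leftover budget on a single nearly constant vector at which all of $A$ is certified. The paper takes $y_i=(1-pe_i)/q$ for $i\in A$ and $y_i=\infty$ otherwise, with $q=\alpha\bigl(1-p\,\mathbb M_{\bm\lambda^A}(\mathbf e^A)\bigr)$. Because $\lambda_0^B=0$, each $\mathbb M_{\bm\lambda^B}$ is linear with weights summing to one. Hence $\mathbb M_{\bm\lambda^A}(\mathbf y^A)=1/\alpha$ exactly. For $B\subsetneq A$, $\mathbb M_{\bm\lambda^B}(\mathbf y^B)=\bigl(1-p\,\mathbb M_{\bm\lambda^B}(\mathbf e^B)\bigr)/q\to1/\alpha>|B|/(\alpha|A|)$ as $p\downarrow0$, whatever $\bm\lambda^B$ is. So $A\in\overline{\mathrm{eBH}}^{\bm\lambda}_\alpha(\mathbf y)\subseteq\mathcal D'(\mathbf y)$ with $\FDP_A(A)=1$. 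Now let $\mathbf X$ equal $\mathbf e$, $\mathbf y$, $\mathbf 0$ with probabilities $p$, $q$, $1-p-q$. Then $\E[X_i]=1$ for $i\in A$, and $\FDR_{\mathcal D'}(\mathbf X)\ge p\,\FDP_A(R)+q=\alpha+p\bigl(\FDP_A(R)-\alpha\,\mathbb M_{\bm\lambda^A}(\mathbf e^A)\bigr)>\alpha$. Certifying the whole null set at a nearly constant point leaves every $B\subsetneq A$ with slack and gives FDP exactly $1$. This removes both of your listed difficulties at once. (The paper works with $\preceq_{\mathrm w}$ directly, noting that $\cK$ is also certified at $\mathbf y$, so your reduction to strong dominance is valid but not needed.)
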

 

\begin{proof} 
Fix such a procedure $\overline{\mathrm{eBH}}_\alpha^{\bm{\lambda}}$. We show that if there exists $\mathcal G \in \mathrm{SP}_\alpha$ such that $\overline{\mathrm{eBH}}_\alpha^{\bm{\lambda}} \preceq_{\mathrm w} \mathcal G$, then $\mathcal G \preceq_{\mathrm s} \overline{\mathrm{eBH}}_\alpha^{\bm{\lambda}}$, which directly gives admissibility. Suppose otherwise for contradiction. Then we can find $\mathbf e=(e_1,\dots,e_K)$, $R\in\mathcal G(\mathbf e)\setminus\overline{\mathrm{eBH}}_\alpha^{\bm{\lambda}}(\mathbf e)$ and a nonempty $A\subseteq \cK$ such that
$$\mathbb M_{\bm{\lambda}^A}(\mathbf e^{A})<\frac{|A\cap R|}{\alpha |R|}.$$ 
Since $\mathbb M_{\bm{\lambda}^A}(\mathbf e^{A})$ is finite, $e_i<\infty$ follows for all $i\in A$.

Choose $p>0$ so small that $p e_i<1$ for every $i\in A$ and $p+q<1$, where
$q=\alpha(1-p \mathbb M_{\bm{\lambda}^A}(\mathbf e^{A}))$. Define $\mathbf y=(y_1,\dots,y_K)$ by
$$
y_i=
\begin{cases}
\dfrac{1-p e_i}{q}, & i\in A,\\[0.8ex]
\infty, & i\notin A.
\end{cases}
$$
We claim that $A\in\overline{\mathrm{eBH}}_\alpha^{\bm{\lambda}}(\mathbf y)$ and $\cK \in\overline{\mathrm{eBH}}_\alpha^{\bm{\lambda}}(\mathbf y)$. Fix $B\subseteq \cK$. It suffices to check $B\subseteq A$, since if $B\setminus A\ne\varnothing$, then $\mathbb M_{\bm{\lambda}^B}(\mathbf y^B)=\infty$ and there is nothing to prove.
We first verify $\mathbb M_{\bm{\lambda}^B}(\mathbf y^B) \ge |B\cap A| / (\alpha|A|)=|B| / (\alpha|A|)$.
\begin{itemize}
\item[(i)] If $B=A$, then $\mathbb M_{\bm{\lambda}^B}(\mathbf y^B)=1 / \alpha =|B| / (\alpha|A|) $.
\item[(ii)] If $B\subsetneq A$, then we know $\mathbb M_{\bm{\lambda}^B}(\mathbf y^B)\ge |B|/(\alpha |A|)$ for sufficiently small $p$ by the fact that $\mathbb M_{\bm{\lambda}^B}(\mathbf y^B)=(1-p \mathbb M_{\bm{\lambda}^B}(\mathbf e^B))/q$ converges to $1/\alpha$ as $p\downarrow 0$.
\end{itemize}
Thus $A\in\overline{\mathrm{eBH}}_\alpha^{\bm{\lambda}}(\mathbf y)$. The above arguments also give $\cK \in\overline{\mathrm{eBH}}_\alpha^{\bm{\lambda}}(\mathbf y)$. 

Since $\cK\in\overline{\mathrm{eBH}}_\alpha^{\bm{\lambda}}(\mathbf y)$, we have $\overline{\mathrm{eBH}}_\alpha^{\bm{\lambda}}(\mathbf y)^\downarrow=2^{\cK}$. Because $\overline{\mathrm{eBH}}_\alpha^{\bm{\lambda}}\preceq_{\mathrm w}\mathcal G$, the inclusion $\overline{\mathrm{eBH}}_\alpha^{\bm{\lambda}}(\mathbf y)\subseteq\mathcal G(\mathbf y)$ must hold, which implies $A\in\mathcal G(\mathbf y)$.

Now we show that $\mathcal G\notin \mathrm{SP}_\alpha$. Set $A$ as the true null and define a random vector $\mathbf X=(X_1,\dots,X_K)$ by
$$
\mathbf X=
\begin{cases}
\mathbf e, & \text{with probability }p,\\
\mathbf y, & \text{with probability }q,\\
\mathbf 0, & \text{with probability }1-p-q.
\end{cases}
$$

\begin{itemize}
\item[(i)] For every $i\in A$, $\mathbb E[X_i]=p e_i+q y_i=1$.

\item[(ii)] We have 
$$
\begin{aligned}
\FDR_\mathcal G(\mathbf X)
&=
\E\left[\max_{R\in \mathcal G(\mathbf X)}\FDP_A(R)\right]
\ge 
p\frac{|A\cap R|}{|R|}
+
\alpha(1-p \mathbb M_{\bm{\lambda}^A}(\mathbf e^{A}))\\
&=
\alpha+p\left(\frac{|A\cap R|}{|R|}-\alpha \mathbb M_{\bm{\lambda}^A}(\mathbf e^{A})\right)
>
\alpha.
\end{aligned}
$$
\end{itemize}
Therefore $\mathcal G \preceq_{\mathrm s} \overline{\mathrm{eBH}}_\alpha^{\bm{\lambda}}$. 
\end{proof}

Theorem~\ref{thm:constant-free-weighted-ebh-admissible} establishes the first admissibility results in the literature on FDR-controlling procedures. It says that any weighted-mean $\overline{\mathrm{eBH}}$ procedure is admissible as long as the procedure is constant-free. Asymmetry in the weights can therefore be used to incorporate side information without sacrificing the decision-theoretic property of being unimprovable.

\begin{theorem}\label{thm:positive-point-weighted-ebh-admissible}
For $\alpha\in (0,1/2)$, any point weighted-mean $\overline{\mathrm{eBH}}$ procedure at level $\alpha$ that is constant-free and strictly positive is admissible.
\end{theorem}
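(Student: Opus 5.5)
The plan is to argue by contradiction, as in the proof of Theorem~\ref{thm:constant-free-weighted-ebh-admissible}, while replacing the step that used $\cK\in\overline{\mathrm{eBH}}_\alpha^{\bm\lambda}(\mathbf y)$. Fix a point procedure $\mathfrak D=\overline{\mathrm{ebh}}_\alpha^{\bm\lambda}$ that is constant-free and strictly positive. Suppose $\mathfrak G\in\mathrm{PP}_\alpha$ satisfies $\mathfrak D\subseteq\mathfrak G$ and $\mathfrak G(\mathbf e)=R\supsetneq S=\mathfrak D(\mathbf e)$ for some $\mathbf e$. Since $S$ is a maximal element of $\overline{\mathrm{eBH}}_\alpha^{\bm\lambda}(\mathbf e)$ and $R\supsetneq S$, we have $R\notin\overline{\mathrm{eBH}}_\alpha^{\bm\lambda}(\mathbf e)$. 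Hence there is a nonempty $A$ with $\mathbb M_{\bm\lambda^A}(\mathbf e^A)<|A\cap R|/(\alpha|R|)$, and in particular $e_i<\infty$ for all $i\in A$. The goal is to build a random vector $\mathbf X$, taking the value $\mathbf e$ with small probability $p$, whose remaining mass is spent on configurations where $\mathfrak D$, and hence $\mathfrak G$, is forced to reject a set with high false discovery proportion. The aim is $\FDR_{\mathfrak G}(\mathbf X)>\alpha$.

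The difference from the simultaneous case is that $\mathfrak D(\mathbf y)$ is only one maximal set. If $\mathbf y$ has $\infty$ off $A$, then $\cK\in\overline{\mathrm{eBH}}_\alpha^{\bm\lambda}(\mathbf y)$ forces $\mathfrak D(\mathbf y)=\cK$, and the FDP is only $|A|/K$, which is useless. I would therefore instead set $y_j=0$ for $j\notin A$ and take the true null set to be $N=A\cup\{j\notin A:e_j<\infty\}$. Coordinates off $A$ then have mean $pe_j\le1$ for small $p$, and $\FDP_N(R)\ge|A\cap R|/|R|$ at $\mathbf e$. With zeros off $A$ and constant-freeness, any set in $\overline{\mathrm{eBH}}_\alpha^{\bm\lambda}(\mathbf y)$ containing some $j\notin A$ fails the test with $B=\{j\}$. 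So every rejected set lies in $A\subseteq N$ and has FDP equal to $1$ once it is nonempty. It therefore suffices that $\mathfrak D(\mathbf y)\neq\varnothing$ on a set of configurations of total probability $q$, with $q$ close to $\alpha\bigl(1-p\,\mathbb M_{\bm\lambda^A}(\mathbf e^A)\bigr)$ up to $o(p)$, while $\E[X_i]\le 1$ for $i\in A$.

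The main obstacle is to make some nonempty subset of $A$ rejected at a mean cost of essentially $1/\alpha$ per unit of probability. A single configuration does not suffice. For $\{i\}$ to be rejected we need $\lambda_i^B y_i\ge1/\alpha$ for every $B\ni i$, including $B$'s that contain zero coordinates outside $A$, and this costs more than $1/\alpha$. This is exactly where strict positivity and $\alpha<1/2$ must enter. Strict positivity guarantees that a finite boost suffices. The bound $\alpha<1/2$ should leave enough slack to split the $q$-mass over several configurations, for instance permuting which coordinates of $A$ are boosted, or using large $\mathbf y$ on one coordinate combined with a nested family of subsets of $A$. Each configuration must be certified by the point procedure as a nonempty null set. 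The strict gap $|A\cap R|/|R|-\alpha\mathbb M_{\bm\lambda^A}(\mathbf e^A)>0$ at $\mathbf e$ must absorb the extra cost, which is $O(p)$ after the $\mathbf e$-branch is removed. I would first try the two-point randomization $\{\mathbf y,\mathbf y'\}$ with roles swapped, and check via the closedness condition over all $B$ that the FDR exceeds $\alpha$ exactly when $\alpha<1/2$. This sharpness is consistent with Proposition~\ref{prop:point-mean-admissibility-threshold}. Once $\FDR_{\mathfrak G}(\mathbf X)>\alpha$ is established, we get $\mathfrak G\notin\mathrm{PP}_\alpha$, which is a contradiction, so $\mathfrak D$ is admissible.
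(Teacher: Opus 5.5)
Your proposal has a genuine gap. The decisive step is producing configurations of total mass close to $\alpha\bigl(1-p\,\mathbb M_{\bm\lambda^A}(\mathbf e^A)\bigr)$ on which $\mathfrak D$ rejects a nonempty subset of $A$. You leave this step as a hope, and in your framework (zeros off $A$, null set enlarged to $N\supseteq A$) it is impossible whenever $A\neq\cK$.

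Suppose $y_j=0$ for $j\notin A$ and $\varnothing\neq\mathfrak D(\mathbf y)=S\subseteq A$. Then the constraint with $B=\cK$ reads $\sum_{i\in A}\lambda_i^{\cK}y_i\ge 1/\alpha$. The $y_i$, $i\in A$, are finite because they are null coordinates carrying positive mass. Sum over such configurations with masses $q_c$ and use the budget $\sum_c q_c y_{c,i}\le 1-pe_i$:
\[
\sum_c q_c\le \alpha\sum_{i\in A}\lambda_i^{\cK}(1-pe_i)\le \alpha\Bigl(1-\sum_{j\notin A}\lambda_j^{\cK}\Bigr).
\]
So your lower bound on $\FDR_{\mathfrak G}(\mathbf X)$ is at most $p+\alpha\bigl(1-\sum_{j\notin A}\lambda_j^{\cK}\bigr)$. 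This is below $\alpha$ as soon as $p<\alpha\sum_{j\notin A}\lambda_j^{\cK}$, which is exactly the small-$p$ regime your argument relies on. The deficit is of order one, not $O(p)$. Because the bound is linear, no splitting of the mass over permuted or nested configurations helps, and strict positivity works against you here rather than for you.

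More generally, if every non-$\mathbf e$ configuration rejects a certified set, any lower bound of this kind under null set $N$ is at most $\alpha+p\bigl(\FDP_N(R)-\alpha\,\mathbb M_{\bm\lambda^N}(\mathbf e^N)\bigr)$. So $N$ must itself be a violated constraint at $\mathbf e$, and your enlarged $N$ need not be.

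There is a second, separate gap. You evaluate the FDP at $\mathfrak D(\mathbf y)$, but the FDR is that of $\mathfrak G$. A strictly larger $\mathfrak G(\mathbf y)\supsetneq\mathfrak D(\mathbf y)$ could add non-null coordinates and lower the FDP, and this needs an argument.

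The missing idea is to keep the true null set equal to the witness $A$ and put $\infty$ on the (non-null) coordinates outside $A$. This makes every constraint with $B\not\subseteq A$ vacuous. You dismiss this because the FDP of $\cK$ is only $|A|/K$, but the exchange rate is still $\alpha$. With $y_i\approx|A|/(\alpha K)$ on $A$, $\cK$ is certified, so $\mathfrak D(\mathbf y)=\mathfrak G(\mathbf y)=\cK$. The required mass $\frac{\alpha K}{|A|}\bigl(1-p\,\mathbb M_{\bm\lambda^A}(\mathbf e^A)\bigr)$ is then feasible when $|A|>\alpha K$; this is the paper's first case.

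When $|A|\le\alpha K$, the paper proceeds as follows:
\begin{itemize}
\item For each $j\in A$ it uses a point $\mathbf u^{(j,\varepsilon)}$ with a single coordinate $j^*\notin A$ equal to $\infty$, $u_j$ slightly above $1/(2\alpha)$, and all other coordinates slightly below.
\item Strict positivity (via $m_0>0$) lets the small boost certify $\{j,j^*\}$. The singleton tests on the remaining coordinates and the constraint for $W_j=\cK\setminus\{j^*\}$ make it the unique maximal certified set.
\item Applying the first case to the witness $W_j$, with $|W_j|=K-1>\alpha K$, rules out $\mathfrak G(\mathbf u^{(j,\varepsilon)})\supsetneq\{j,j^*\}$.
\item Each such point gives FDP $1/2$ at cost about $1/(2\alpha)$ on every coordinate of $A$. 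The $|A|$ points together need mass about $2\alpha<1$, which is where $\alpha<1/2$ enters.
\item A small-mass point of the first type absorbs the coordinate-dependent leftover budget $s-pe_i$.
\end{itemize}
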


\begin{proof}
We first describe the outline of the proof. 
Suppose, for contradiction, that   a point procedure $\mathfrak D$ strictly dominates the point $\overline{\mathrm{eBH}}$ procedure. Then it rejects a superset $R$ at some input $\mathbf e$, so maximality gives a witness set $A$ where the $\overline{\mathrm{eBH}}$ constraint fails. We use $A$ as the true null set, and split the following proof into two distinct cases. The first case, $|A|>\alpha K$, is straightforward: we construct a ``bad" point $\mathbf y$ with $\FDP_A(\mathfrak D(\mathbf y))=|A| / K >\alpha$; allocating suitable probabilities among $\mathbf e$, $\mathbf y$, and $\mathbf 0$ yields valid e-values, but FDR larger than $\alpha$. The second case, $|A| \le \alpha K$, is more delicate: the single bad point $\mathbf y$ with $\FDP_A(\mathfrak D(\mathbf y)) =|A| / K$ is insufficient because $|A| / K \le \alpha$. Here, the core idea is to construct other $|A|$  bad points $\mathbf u^{(j,\varepsilon)}$, $j\in A$, satisfying $\FDP_A(\mathfrak D(\mathbf u^{(j,\varepsilon)}))=1/2>\alpha$ for every $j\in A$; allocating suitable probabilities among $\mathbf e$, $\mathbf y$, $\mathbf 0$, and these $|A|$ bad points yields valid e-values, but FDR larger than $\alpha$.
 
Now we present the formal proof. 
Let $\overline{\mathrm{ebh}}_\alpha^{\bm{\lambda}}$ be any constant-free and strictly positive point weighted-mean $\overline{\mathrm{eBH}}$ procedure and $\overline{\mathrm{eBH}}_\alpha^{\bm{\lambda}}$ be the corresponding simultaneous procedure. Suppose, for contradiction, that there exists a point procedure $\mathfrak D\in \mathrm{PP}_\alpha$ that strictly dominates $\overline{\mathrm{ebh}}_\alpha^{\bm \lambda}$. Then we can find $\mathbf e$ such that $\overline{\mathrm{ebh}}_\alpha^{\bm \lambda}(\mathbf e)\subsetneq \mathfrak D(\mathbf e)$. Write $R=\mathfrak D(\mathbf e)$.
Since $\overline{\mathrm{ebh}}_\alpha^{\bm \lambda}(\mathbf e)$ is a maximal set of $\overline{\mathrm{eBH}}_\alpha^{\bm{\lambda}}(\mathbf e)$, we have $R\notin\overline{\mathrm{eBH}}_\alpha^{\bm{\lambda}}(\mathbf e)$. Hence there exists a nonempty
$A\subseteq\cK$ such that
\begin{equation}\label{eq:ad-no-comp-bad}
\mathbb M_{\bm{\lambda}^A}(\mathbf e^{A})<\frac{|A\cap R|}{\alpha |R|}.
\end{equation}
It is easy to see that $e_i<\infty$ for all $i\in A$. Set $n=|A|$. We split into two cases.

\textbf{Case 1}: $n>\alpha K$. Choose $p>0$ so small that $1-p e_i\ge0$ for all $i\in A$. Denote $\mathbf x=(x_1,\dots,x_K)$, where $x_i=(1-p e_i)/(1-p \mathbb M_{\bm{\lambda}^A}(\mathbf e^{A}))$ for $i\in A$, and $x_i=0$ otherwise.
Then $\mathbb M_{\bm{\lambda}^A}(\mathbf x^A)=1$, and $x_i\to 1$ for $i\in A$ as $p\downarrow 0$. For all $T\subseteq A$, we may pick a small $p$ such that $\mathbb M_{\bm{\lambda}^T}(\mathbf x^T)\ge |T| / n$. Define $\mathbf y=(y_1,\dots,y_K)$ by
$$
y_i=
\begin{cases}
\dfrac{n}{\alpha K}x_i, & i\in A,\\[1ex]
\infty, & i\in A^c.
\end{cases}
$$
We claim $\overline{\mathrm{ebh}}_\alpha^{\bm \lambda}(\mathbf y)=\mathfrak D(\mathbf y)=\cK$. It suffices to show that $\cK \in\overline{\mathrm{eBH}}_\alpha^{\bm{\lambda}}(\mathbf y)$.

\begin{itemize}
\item[(i)] If $T\subseteq A$, then $\mathbb M_{\bm{\lambda}^T}(\mathbf y^T)=n \mathbb M_{\bm{\lambda}^T}(\mathbf x^T)/(\alpha K)\ge |T|/(\alpha K)$.

\item[(ii)] If $T\setminus A\neq\varnothing$, then $\mathbb M_{\bm{\lambda}^T}(\mathbf y^T)=\infty$ and there is nothing to prove.
\end{itemize}

Now we show that $\mathfrak D\notin \mathrm{PP}_\alpha$. Set $A$ as the true null. Let
$$q=\frac{\alpha K}{n}\bigl(1-p \mathbb M_{\bm{\lambda}^A}(\mathbf e^{A})\bigr).$$
Since $n>\alpha K$, by taking $p$ sufficiently small we have $p+q<1$. Define the random vector $\mathbf X=(X_1,\dots,X_K)$ by
$$
\mathbf X=
\begin{cases}
\mathbf e, & \text{with probability }p,\\
\mathbf y, & \text{with probability }q,\\
\mathbf 0, & \text{with probability }1-p-q.
\end{cases}
$$
\begin{itemize}
\item[(i)] For each $i\in A$, we have $\mathbb E[X_i]=p e_i+q y_i=1$. 

\item[(ii)] By \eqref{eq:ad-no-comp-bad}, we have 
$$
\begin{aligned}
\mathrm{FDR}_{\mathfrak D}(\mathbf X)
&\ge
p\frac{|A\cap R|}{|R|}+q\frac{n}{K}
=
p\frac{|A\cap R|}{|R|}+\alpha\bigl(1-p \mathbb M_{\bm{\lambda}^A}(\mathbf e^{A})\bigr)\\
&=
\alpha+p\left(\frac{|A\cap R|}{|R|}-\alpha \mathbb M_{\bm{\lambda}^A}(\mathbf e^{A})\right)
>
\alpha.
\end{aligned}
$$
\end{itemize}

\textbf{Case 2}: $n\le \alpha K$. Since $\alpha<1/2$, we have $n<K/2$.
For each $j\in A$, choose some $j^*\in A^c$, and define
$W_j=\cK \setminus\{j^*\}$, $S_j=\{j,j^*\}$ and $C_j=\cK \setminus S_j$. Denote
$m_0=\min_{j\in A}\min\{\lambda_j^T:\ T\subseteq W_j,\ j\in T\}>0$.
For each $j\in A$, define $\mathbf u^{(j,\varepsilon)}=\bigl(u_1^{(j,\varepsilon)},\dots,u_K^{(j,\varepsilon)}\bigr)$ by
$$
u_i^{(j,\varepsilon)}
=
\begin{cases}
\dfrac{1}{2\alpha}+\dfrac{1-m_0}{\alpha m_0}\varepsilon, & i=j,\\[1ex]
\infty, & i=j^*,\\[1ex]
\dfrac{1}{2\alpha}-\dfrac1\alpha\varepsilon, & i\in C_j.
\end{cases}
$$
We claim that, if $\varepsilon>0$ is sufficiently small, then
\begin{equation}\label{eq:ad-no-comp-singleton}
\overline{\mathrm{ebh}}_\alpha^{\bm \lambda}(\mathbf u^{(j,\varepsilon)})=\mathfrak D(\mathbf u^{(j,\varepsilon)})=S_j.
\end{equation}

\begin{itemize}
\item[(i)] We show that $S_j\in\overline{\mathrm{eBH}}_\alpha^{\bm{\lambda}}(\mathbf u^{(j,\varepsilon)})$. Let $T\subseteq \cK$. If $T\cap S_j=\varnothing$, there is nothing to prove. If $j^*\in T$, then $\mathbb M_{\bm{\lambda}^T}((u_i^{(j,\varepsilon)})_{i\in T})=\infty$. If $j^*\notin T$ and $j\in T$, then
\[
\begin{split}
    \mathbb M_{\bm{\lambda}^T}((u_i^{(j,\varepsilon)})_{i\in T})
    &=\lambda_j^T\left(\frac{1}{2\alpha}+\frac{1-m_0}{\alpha m_0}\varepsilon\right)+(1-\lambda_j^T)\left(\frac{1}{2\alpha}-\frac1\alpha\varepsilon\right) \\
    &\ge\frac{1}{2\alpha}=\frac{|T\cap S_j|}{\alpha |S_j|}.
\end{split}
\]

\item[(ii)] We show no set of size $1$ in $C_j$ is in $\overline{\mathrm{eBH}}_\alpha^{\bm{\lambda}}(\mathbf u^{(j,\varepsilon)})$. Let $c\in C_j$. Taking $T=\{c\}$ gives $\mathbb M_{\bm{\lambda}^T}((u_i^{(j,\varepsilon)})_{i\in T}) < 1 / \alpha = |T\cap \{c\}| / \alpha$.

\item[(iii)] We show no set of size $2$ other than $S_j$ is in $\overline{\mathrm{eBH}}_\alpha^{\bm{\lambda}}(\mathbf u^{(j,\varepsilon)})$. Let $S'\subseteq \cK$ satisfy $|S'|=2$ and $S'\ne S_j$. Then $S'$ contains some $c\in C_j$. Taking $T=\{c\}$ gives $\mathbb M_{\bm{\lambda}^T}((u_i^{(j,\varepsilon)})_{i\in T}) <1 / (2\alpha)=|T\cap S'| / (\alpha |S'|)$.

\item[(iv)] We show $\overline{\mathrm{ebh}}_\alpha^{\bm \lambda}(\mathbf u^{(j,\varepsilon)})=S_j$, that is, $S_j$ is the unique maximal set in $\overline{\mathrm{eBH}}_\alpha^{\bm{\lambda}}(\mathbf u^{(j,\varepsilon)})$. Consider a set $L\subseteq \cK$ with $|L|>2$. Note that
$$\mathbb M_{\bm{\lambda}^{W_j}}((u_i^{(j,\varepsilon)})_{i\in W_j})=\lambda_j^{W_j}\left(\frac{1}{2\alpha}+\frac{1-m_0}{\alpha m_0}\varepsilon\right)+(1-\lambda_j^{W_j})\left(\frac{1}{2\alpha}-\frac1\alpha \varepsilon\right).$$
We may choose $\varepsilon>0$ so small that $\mathbb M_{\bm{\lambda}^{W_j}}((u_i^{(j,\varepsilon)})_{i\in W_j})<2/(3\alpha)$ for all $j\in A$. However, since $|L|>2$, we have $|W_j\cap L|/|L|\ge2/3$. Hence
$$\mathbb M_{\bm{\lambda}^{W_j}}((u_i^{(j,\varepsilon)})_{i\in W_j})<\frac{|W_j\cap L|}{\alpha |L|}.$$
Therefore no set in $\overline{\mathrm{eBH}}_\alpha^{\bm{\lambda}}(\mathbf u^{(j,\varepsilon)})$ has size strictly larger than $2$. Combining with (ii) and (iii), we know that $S_j$ is the unique maximal set in $\overline{\mathrm{eBH}}_\alpha^{\bm{\lambda}}(\mathbf u^{(j,\varepsilon)})$.

\item[(v)] We show $\mathfrak D(\mathbf u^{(j,\varepsilon)})=S_j$. Since $\mathfrak D$ dominates $\overline{\mathrm{ebh}}_\alpha^{\bm \lambda}$, we have $S_j\subseteq\mathfrak D(\mathbf u^{(j,\varepsilon)})$. If the inclusion is strict, then $|\mathfrak D(\mathbf u^{(j,\varepsilon)})|>2$. By the same argument in (iv), we get
$$\mathbb M_{\bm{\lambda}^{W_j}}((u_i^{(j,\varepsilon)})_{i\in W_j})<\frac{|W_j\cap \mathfrak D(\mathbf u^{(j,\varepsilon)})|}{\alpha |\mathfrak D(\mathbf u^{(j,\varepsilon)})|}.$$
Since $K\ge 2$ and $\alpha<1/2$, we have $|W_j|=K-1>\alpha K$. Applying Case~1 to the bad witness $(\mathbf u^{(j,\varepsilon)},W_j)$ would imply $\mathfrak D\notin \mathrm{PP}_\alpha$, a contradiction. Hence $\mathfrak D(\mathbf u^{(j,\varepsilon)})=S_j$.
\end{itemize}

Set $A$ as the true null. For all $j\in A$, \eqref{eq:ad-no-comp-singleton} gives
\begin{equation}\label{eq:ad-no-comp-singleton-fdp}
\mathrm{FDP}_A\!\bigl(\mathfrak D(\mathbf u^{(j,\varepsilon)})\bigr)=\frac12.
\end{equation}
Since $\alpha<1/2$, choose a small $s>0$ so that
\begin{equation}\label{eq:ad-no-comp-s}
2\alpha(1-s)+\frac{\alpha K}{n}s<1.
\end{equation}
Choose $p>0$ so small that $s-p e_i\ge0$ for all $i\in A$. Denote $\mathbf z=(z_1,\dots,z_K)$, where $z_i=(s-p e_i)/(s-p \mathbb M_{\bm{\lambda}^A}(\mathbf e^{A}))$ for $i\in A$, and $z_i=0$ otherwise. Then $\mathbb M_{\bm{\lambda}^A}(\mathbf z^A)=1$. By taking $p$ sufficiently small, we may ensure $\mathbb M_{\bm{\lambda}^T}(\mathbf z^T)\ge |T| / n$ for all $T\subseteq A$. Define $\mathbf y=(y_1,\dots,y_K)$ by
$$
y_i=
\begin{cases}
\dfrac{n}{\alpha K}z_i, & i\in A,\\[1ex]
\infty, & i\in A^c.
\end{cases}
$$
The same verification used in Case~1 gives $\mathfrak D(\mathbf y)=\overline{\mathrm{ebh}}_\alpha^{\bm \lambda}(\mathbf y)=\cK$, and hence
\begin{equation}
\label{eq:ad-all-fdp}
\mathrm{FDP}_A(\mathfrak D(\mathbf y))=\frac nK.
\end{equation}

Now we show that $\mathfrak D\notin \mathrm{PP}_\alpha$. Set
$$q=\frac{2\alpha(1-s)}{n+2\bigl(\frac{1}{m_0}-n\bigr)\varepsilon}\quad \text{and} \quad q_y=\frac{\alpha K}{n}\bigl(s-p \mathbb M_{\bm{\lambda}^A}(\mathbf e^{A})\bigr).$$
As $\varepsilon\downarrow0$, we have $nq\to2\alpha(1-s)$. Also, as $p\downarrow0$, we have $q_y\to \alpha Ks/n$. Therefore, by \eqref{eq:ad-no-comp-s}, we have $p+nq+q_y<1$. Write $A=\{h_1,\dots,h_n\}$, and define the random vector $\mathbf X=(X_1,\dots,X_K)$ by
$$
\mathbf X=
\begin{cases}
\mathbf e, & \text{with probability }p,\\
\mathbf u^{(h_1,\varepsilon)}, & \text{with probability }q,\\
\cdots, & \cdots\\
\mathbf u^{(h_n,\varepsilon)}, & \text{with probability }q,\\
\mathbf y, & \text{with probability }q_y,\\
\mathbf 0, & \text{with probability }1-p-nq-q_y.
\end{cases}
$$

\begin{itemize}
\item[(i)] For each $r\in A$,
$$\sum_{j\in A}q\,u_r^{(j,\varepsilon)}=q\left(\frac{1}{2\alpha}+\frac{1-m_0}{\alpha m_0}\varepsilon\right)+q(n-1)\left(\frac{1}{2\alpha}-\frac1\alpha\varepsilon\right)=1-s.$$
Therefore $\mathbb E[X_r]=p e_r+\sum_{j\in A}q\,u_r^{(j,\varepsilon)}+q_y y_r=p e_r+(1-s)+(s-p e_r)=1$.

\item[(ii)] By \eqref{eq:ad-no-comp-bad}, \eqref{eq:ad-no-comp-singleton-fdp} and \eqref{eq:ad-all-fdp}, we have
$$\mathrm{FDR}_{\mathfrak D}(\mathbf X)\ge p\frac{|A\cap R|}{|R|}+\frac{nq}{2}+q_y\frac nK=p\frac{|A\cap R|}{|R|}+\frac{nq}{2}+
\alpha\bigl(s-p \mathbb M_{\bm{\lambda}^A}(\mathbf e^{A})\bigr).$$
Since $nq/2\to\alpha(1-s)$ as $\varepsilon\downarrow0$, choose $\varepsilon>0$ sufficiently small so that
$$\frac{nq}{2}>\alpha(1-s)-\frac p2 \left(\frac{|A\cap R|}{|R|}-\alpha \mathbb M_{\bm{\lambda}^A}(\mathbf e^{A})\right).$$
Then we have
$$\mathrm{FDR}_{\mathfrak D}(\mathbf X)>\alpha+\frac p2 \left(\frac{|A\cap R|}{|R|}-\alpha \mathbb M_{\bm{\lambda}^A}(\mathbf e^{A})\right)>\alpha.$$
\end{itemize}

Thus both cases are impossible, and $\overline{\mathrm{ebh}}_\alpha^{\bm \lambda}$ is admissible.
\end{proof}

The proof of Theorem~\ref{thm:positive-point-weighted-ebh-admissible} is more delicate than that of 
Theorem~\ref{thm:constant-free-weighted-ebh-admissible}
because point procedures must commit to one maximal set. The strict-positivity condition cannot be removed in general. We provide a counterexample in Example~\ref{ex:zero-weight-point-not-admissible} in Appendix~\ref{app:counterexamples}.

For the point mean $\overline{\mathrm{eBH}}$ procedure, the threshold for admissibility is exactly one half.
Note that for  given $\alpha\in (0,1)$ and $K\ge 3$, there 
is a unique mean $\overline{\mathrm{eBH}}$  procedure, but there 
are many  point mean $\overline{\mathrm{eBH}}$  procedures, because of the flexibility in the selection of the maximal sets. 
 
\begin{proposition}
\label{prop:point-mean-admissibility-threshold}
Any point mean $\overline{\mathrm{eBH}}$ procedure  at level $\alpha$ is admissible if and only if $\alpha<1/2$.
\end{proposition}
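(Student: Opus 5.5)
The ``if'' direction is immediate from Theorem~\ref{thm:positive-point-weighted-ebh-admissible}. The mean $\overline{\mathrm{eBH}}$ procedure uses $\bm\lambda^A=(0,(1/|A|)_{i\in A})$ for every nonempty $A$, so it is constant-free and strictly positive. Hence every point mean $\overline{\mathrm{eBH}}$ procedure is admissible whenever $\alpha<1/2$, regardless of how the maximal set is chosen.

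For the ``only if'' direction, fix $\alpha\ge 1/2$ and any point mean $\overline{\mathrm{eBH}}$ procedure $\overline{\mathrm{ebh}}^{\mathsf m}_\alpha$. I would exhibit an input $\mathbf e^*$ at which we may enlarge the output without breaking FDR control. Let $\mathfrak D$ agree with $\overline{\mathrm{ebh}}^{\mathsf m}_\alpha$ everywhere except on the orbit of $\mathbf e^*$ under coordinate permutations, where it rejects a strict superset. The natural candidate exploits the same ``FDP $=1/2$'' configuration that made Case~2 of Theorem~\ref{thm:positive-point-weighted-ebh-admissible} work only for $\alpha<1/2$. Take $K=2$ first (or embed in general $K$ by padding with $\infty$ coordinates, which are always rejectable). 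Choose $\mathbf e^*$ so that $\{1\}$ is the maximal set of $\overline{\mathrm{eBH}}^{\mathsf m}_\alpha(\mathbf e^*)$ but $\{1,2\}$ fails only through a constraint $A$ with $\FDP_A(\{1,2\})\le 1/2\le\alpha$. For example, set $e_1=1/\alpha$ and $e_2$ slightly below $1/(2\alpha)$ but close to it, so that $\{1,2\}$ fails only the constraint $A=\{2\}$, which requires $e_2\ge 1/(2\alpha)$. The candidate improvement adds $2$ whenever the input lies in a thin region of this kind.

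The key step is proving $\mathfrak D\in\mathrm{PP}_\alpha$. The extra rejection matters only when $2$ is a null. The added contribution to $\FDP$ is then at most $1/2\le\alpha$, while the pointwise e-value budget is $e_2<1/(2\alpha)\le 1$. I would bound the FDR by an e-value inequality of the form $\FDP_N(\mathfrak D(\mathbf e))\le \alpha\,\mathbb M(\mathbf e^N)$-type test functions, i.e.\ verify the sufficient condition $\max_{R}\FDP_N(R)\le \alpha\sum_{i\in N}e_i/|N|$ pointwise on the modified region. Outside the modified region this inequality is inherited from the closed eBH construction, which controls FDR via its constraint for $A=N$. On the modified region with $N=\{2\}$, we need $1/2\le \alpha e_2$. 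That fails when $e_2<1/(2\alpha)$, so the modified region must instead be chosen so that the pointwise bound holds. The fix is to take $\alpha>1/2$ strictly, or at $\alpha=1/2$ use the slack from the rejection of $\{1\}$.

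The main obstacle is this calibration. The closed eBH procedure is built to be ``tight'' on its constraints, so I must find a point where the maximal set chosen by the point procedure is non-tight precisely because of the point-selection (maximal set) restriction, rather than the e-merging constraint. For $K\ge3$ and $\alpha\ge1/2$ I would first try the configuration of two incomparable maximal sets $S$ and $S'$, for instance $\{1,2\}$ and $\{1,3\}$. A point procedure must pick one of them, say $S$, and $\FDP$ of their union against any single violating null is at most $1/2\le\alpha$. I would then check via the pointwise bound $\FDP_N(R)\le\alpha\,\bar e_N$ for all null sets $N$ that rejecting $S\cup S'$ remains valid at level $\alpha$, yielding strict domination.
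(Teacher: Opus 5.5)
Your ``if'' direction is correct and is exactly the paper's argument: the mean weights $1/|A|$ are constant-free and strictly positive, so Theorem~\ref{thm:positive-point-weighted-ebh-admissible} applies. The ``only if'' direction, however, has genuine gaps.

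\textbf{The example and the verification criterion.} Your concrete point does not behave as claimed. For $K=2$, every nonempty set in $\overline{\mathrm{eBH}}^{\mathsf m}_\alpha(\mathbf e)$ with $\mathbf e$ finite must satisfy the constraint for $A=\{1,2\}$, i.e.\ $e_1+e_2\ge 2/\alpha$. So at $e_1=1/\alpha$, $e_2<1/(2\alpha)$ the procedure rejects nothing, and $\{1,2\}$ fails through $A=\{1,2\}$ as well.

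Your verification criterion is also self-defeating. If $\FDP_N(R)\le\alpha\bar e_N$ held for every $N$ at $\mathbf e^*$, then $R\in\overline{\mathrm{eBH}}^{\mathsf m}_\alpha(\mathbf e^*)$, contradicting maximality of the original output. So the pointwise mean bound must fail for some null set, and you need a genuinely different argument there. Neither suggested fix supplies one: $\alpha>1/2$ does not make $1/2\le\alpha e_2$ true when $e_2<1/(2\alpha)$, and $\alpha=1/2$ must be covered anyway.

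The same objection applies verbatim to rejecting $S\cup S'$ for $K\ge3$. For $K=3$ that candidate is in fact hopeless. The memberships $\{1,2\},\{1,3\}\in\overline{\mathrm{eBH}}^{\mathsf m}_\alpha(\mathbf e^*)$ force $e_1,e_2,e_3\ge1/(2\alpha)$. Hence the violated constraint of $\cK$ is $A=\{2,3\}$ or $A=\cK$. Under $N=A$, mix the promoted orbit (weight $p$) with $(0,1/\alpha,1/\alpha)$, respectively $(1/\alpha,1/\alpha,1/\alpha)$, where the output has FDP one (weight $\alpha(1-p\bar e^*_A)$). This gives $\FDR=\alpha+p\bigl(\FDP_A(\cK)-\alpha\bar e^*_A\bigr)>\alpha$.

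\textbf{Local modifications cannot work at $\alpha=1/2$.} An enlargement supported on one permutation orbit (or finitely many) fails at $\alpha=1/2$. Take $K=2$, $\alpha=1/2$, and change the output at $(a,b)$ and $(b,a)$ from a singleton to $\{1,2\}$; this requires $b<1$. At any $(u,v)$ off this orbit with $u<1$ and $u+v=4$, the unique maximal set is still $\{2\}$. Under $N=\{2\}$, put mass $1-q$ on $(a,b)$ and $q=(1-b)/(v-b)$ on $(u,v)$. Then $\mathbb E[E_2]=1$, but $\FDR=\tfrac{1-q}{2}+q>\tfrac12$. Enlarging an empty output already fails under $N=\{1,2\}$, by mixing with $(2,2)$.

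\textbf{The missing idea.} You need a \emph{global} modification that, under a single null, never reports a set with FDP above $1/2$. For $K=2$ the paper rejects $\{1,2\}$ whenever $e_1+e_2\ge2/\alpha$. This contains every nonempty output of the mean procedure, strictly so at $(2/\alpha,0)$. FDR control follows from Markov's inequality for $N=\{1,2\}$ and from $\FDP\le1/2\le\alpha$ for $|N|=1$.

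For $K\ge3$ the paper promotes the output to $\cK$ whenever $\sum_{i\in B}e_i\ge c_{|B|}$ for all $|B|\ge2$, with $c_n=\max\{n^2/(\alpha K),n(n-1)/(\alpha(K-1))\}$. This keeps $\FDP_N\le\alpha\bar E_N$ for $|N|\ge2$. For $N=\{i\}$ one shows $\FDP\le\frac13,\frac12,1$ according as $E_i<1/(2\alpha)$, $E_i\in[1/(2\alpha),2/\alpha]$, or $E_i>2/\alpha$; the middle case uses maximality to exclude the output $\{i\}$. Then $\mathbb E[E_i]\le1$ gives $\FDR\le\frac13+\frac\alpha3\le\alpha$, which is exactly where $\alpha\ge1/2$ enters.

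Note that the strictness witness $(0,K/(\alpha(K-1)),\dots,K/(\alpha(K-1)))$ violates only the singleton constraint $A=\{1\}$, whose FDP $1/K$ is small. This is the structure your candidates lack.
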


\begin{proofidea}
For $\alpha\ge1/2$, the point mean procedure can be strictly improved. For $K=2$, the improvement simply rejects both hypotheses whenever $e_1+e_2\ge2/\alpha$. For $K\ge3$, the same idea is implemented by promoting the output to $\mathcal K$ when every subset of size at least two has large enough realized e-values. The promotion does not violate FDR because it is controlled by the mean e-value when there are at least two true nulls, and by the bound $1/2\le\alpha$ when there is only one true null.
\end{proofidea}

This sharp threshold for admissibility is informative: it shows that one half in Theorem~\ref{thm:positive-point-weighted-ebh-admissible} is essential, not just a simplifying assumption.
For general weighted-mean $\cebh$ procedures, the above results do not identify their admissibility for $\alpha\ge 1/2$.

\subsection{Choosing asymmetric weights with side information}
\label{sec:prior-score-weights}

The general weighted-mean $\overline{\mathrm{eBH}}$ procedure in Definition~\ref{def:closed-eBH-procedure} has $K2^{K-1}$ degrees of freedom, 
so some principle for choosing the parameter is needed in practice. 
If side information is available, we can build a practically interpretable subclass. Suppose that  the side information gives each hypothesis $H_i$ a score $s_i > 0$, where a larger score indicates a more promising alternative.
For a chosen parameter $\gamma\ge0$ and every  $A\subseteq\cK$ and every $i\in A$, we propose  two constant-free weight systems, called the \emph{normalized} and \emph{adjusted-normalized} rules, respectively:
\begin{equation}\label{eq:prior-score-weights}
\lambda_{i,{\rm nor}}^A=\frac{s_i^\gamma}{\sum_{j\in A}s_j^\gamma},
\quad\text{and}\quad
\lambda_{i,{\rm adj}}^A=\frac1K+\frac{K-|A|}{K}\frac{s_i^\gamma}{\sum_{j\in A}s_j^\gamma}.
\end{equation}

For $\ell\in \{\rm nor,adj\}$, write $\bm\lambda_\ell=\{\bm\lambda_\ell^A\}_{A\subseteq\cK}$, where $\bm\lambda_\ell^A=(0,(\lambda_{i,\ell}^A)_{i\in A})$, and let $\overline{\mathrm{eBH}}_\alpha^{\bm\lambda_\ell}$ denote the corresponding weighted-mean $\overline{\mathrm{eBH}}$ procedure. A smaller $\gamma$ makes the procedures less sensitive to the scores, whereas a larger $\gamma$ concentrates more weight on hypotheses with larger scores. When $\gamma=0$, both procedures reduce to the mean $\overline{\mathrm{eBH}}$ procedure. The normalized rule allocates all weight according to the scores. The resulting weights are coherent across intersection sets, permutation equivariant, invariant to multiplying all scores by the same positive constant, and order preserving. Moreover, $\lambda_{i,{\rm nor}}^A/\lambda_{j,{\rm nor}}^A=(s_i/s_j)^\gamma$, so the relative weight is monotone in their score ratio and unaffected by the other hypotheses. The adjusted-normalized rule first assigns every $i\in A$ the lower bound $1 / K$, which guarantees \eqref{eq:dom-eBH}, and then distributes all remaining mass proportionally to the scores. It is therefore a canonical rule within the baseline-plus-residual construction that retains the eBH domination.  
Both $\overline{\mathrm{eBH}}_\alpha^{\bm\lambda_{\rm nor}}$ and $\overline{\mathrm{eBH}}_\alpha^{\bm\lambda_{\rm adj}}$ are admissible by Theorem~\ref{thm:constant-free-weighted-ebh-admissible}. Similarly,  every $\overline{\mathrm{ebh}}_\alpha^{\bm\lambda_{\rm nor}}$ or $\overline{\mathrm{ebh}}_\alpha^{\bm\lambda_{\rm adj}}$ is admissible under the simple conditions in Theorem~\ref{thm:positive-point-weighted-ebh-admissible}.
 
In applications, the scores and $\gamma$ should be specified in advance or learned from data independent of the e-values used for testing. Without further information, a default value is $\gamma=1$.
The normalized rule has  greater sensitivity to the prior scores,  whereas the adjusted-normalized rule guarantees a domination over eBH. In Section~\ref{sec:sim-asymmetric-weights}, we illustrate potential power gains from the resulting asymmetric weights learned from side information. 

\begin{remark}
In some applications, not all hypotheses have side information. If $s_i$ is only available for some hypotheses $i$ but not all of them, we can set $s_j=\min_{i:s_i>0}s_i$ for the rest hypotheses $j$. 
\end{remark}

\section{Complete classes for admissible procedures}
\label{sec:complete-class}

This section establishes complete class results for weighted-mean $\overline{\mathrm{eBH}}$ procedures. We first study simultaneous procedures, then turn to point procedures. Moreover, we show that the closed Benjamini--Yekutieli (BY) procedure for p-values is not admissible.


\subsection{Simultaneous procedures}\label{sec:simultaneous-fdr}

We begin by showing that any FDR-controlling simultaneous procedure can be strongly dominated by an admissible weighted-mean $\overline{\mathrm{eBH}}$ procedure.


\begin{theorem}
\label{thm:admissible-weighted-ebh-dominator}
For $\mathcal D\in\mathrm{SP}_\alpha$, there exists an admissible weighted-mean $\overline{\mathrm{eBH}}$ procedure in $ \mathrm{SP}_\alpha$ that strongly dominates $\mathcal D$.
In particular, all admissible elements of $ \mathrm{SP}_\alpha$ are weighted-mean $\overline{\mathrm{eBH}}$ procedures. 
\end{theorem}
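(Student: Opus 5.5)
The plan has two stages. First, show that every $\mathcal D\in\mathrm{SP}_\alpha$ is strongly dominated by some weighted-mean $\overline{\mathrm{eBH}}$ procedure. Second, show that among the procedures in $\mathrm{SP}_\alpha$ that strongly dominate $\mathcal D$ there is a maximal one, which is then admissible by Proposition~\ref{prop:strong-weak-admissibility-equivalence}.

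\textbf{Stage one: representation.} Adapting the representation of \cite{XSF25}, for each nonempty $A\subseteq\cK$ I define
$$F_A(\mathbf e)=\frac1\alpha\max_{R\in\mathcal D(\mathbf e)}\FDP_A(R).$$
Taking $A$ as the true null set, the requirement $\FDR_{\mathcal D}\le\alpha$ says exactly that $\E[F_A(\mathbf E)]\le1$ for every distribution of $\mathbf E$ with $\E[E_i]\le1$ for $i\in A$. Coordinates outside $A$ are unconstrained, so one may take them to be $\infty$. Hence the function $G_A(\mathbf e^A)=F_A(\mathbf e^A,\infty,\dots,\infty)$ is an e-merging function of the $|A|$ arbitrarily dependent e-values $\mathbf e^A$. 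I also need monotonicity, namely $F_A(\mathbf e)\le G_A(\mathbf e^A)$. To get it, I first replace $\mathcal D$ by its monotone hull $\tilde{\mathcal D}(\mathbf y)=\bigcup_{\mathbf x\le \mathbf y}\mathcal D(\mathbf x)$, and check that $\tilde{\mathcal D}$ still controls FDR. For this check I plan to use the characterization just described: for each $A$, the supremum over $\mathbf x\le\mathbf y$ of $\max_{R\in\mathcal D(\mathbf x)}\FDP_A(R)/\alpha$ is dominated by an e-merging function of $\mathbf y^A$. Measurability should be avoidable here, because only the pointwise inequality against a merging function matters.

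\textbf{Stage one, continued: domination.} By \cite{W25}, every e-merging function for arbitrary dependence is dominated by an admissible one, and the admissible ones are exactly the $\mathbb M_{\bm\lambda^A}$ with $\bm\lambda^A\in\Delta_{|A|+1}$. This is the first place the axiom of choice enters, through a Zorn-type argument. Choose such a $\bm\lambda^A$ with $G_A\le\mathbb M_{\bm\lambda^A}$. Then for every $R\in\mathcal D(\mathbf e)$ and every $A$,
$$\FDP_A(R)/\alpha\le F_A(\mathbf e)\le\mathbb M_{\bm\lambda^A}(\mathbf e^A),$$
so $R\in\overline{\mathrm{eBH}}^{\bm\lambda}_\alpha(\mathbf e)$, i.e.\ $\mathcal D\preceq_{\mathrm s}\overline{\mathrm{eBH}}^{\bm\lambda}_\alpha$.

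\textbf{Stage two: admissibility.} Consider the set $\mathcal S=\{\mathcal G\in\mathrm{SP}_\alpha:\mathcal D\preceq_{\mathrm s}\mathcal G\}$, partially ordered by $\preceq_{\mathrm s}$. I will apply Zorn's lemma to $\mathcal S$. Given a chain, take its pointwise union. FDR control is preserved because $\max_{R}\FDP_N(R)$ for the union is attained by some member of the chain at each $\mathbf e$, since there are finitely many subsets. I then need the union to be bounded by an $\alpha$-level procedure. To get this, I apply stage one pointwise: the functions $F_A$ of the union are suprema along the chain of functions each dominated by $\alpha$-valid merging functions. It is cleaner to show directly that the union's $F_A$ satisfies $\E[F_A]\le 1$, and this follows by monotone convergence if the chain is indexed well. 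Since chains may be uncountable, I would instead argue via finiteness: for each fixed random vector, the union's FDP equals the sup over the chain, and the expectation is bounded by using that the range of $\FDP$ values is finite, so the event sets increase along the chain. Zorn's lemma then gives a $\preceq_{\mathrm s}$-maximal $\mathcal G^*\in\mathcal S$, which is s-admissible and hence admissible by Proposition~\ref{prop:strong-weak-admissibility-equivalence}. Applying stage one to $\mathcal G^*$ yields $\mathcal G^*\preceq_{\mathrm s}\overline{\mathrm{eBH}}^{\bm\lambda}_\alpha$, and maximality forces equality. So $\mathcal G^*$ is an admissible weighted-mean $\overline{\mathrm{eBH}}$ procedure that strongly dominates $\mathcal D$.

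\textbf{The "in particular" part.} If $\mathcal D$ is admissible and $\mathcal D\preceq_{\mathrm s}\mathcal G^*$, then $\mathcal D\preceq_{\mathrm w}\mathcal G^*$ by Proposition~\ref{prop:dominance-orders}, so $\mathcal D=\mathcal G^*$.

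\textbf{Main obstacle.} I expect the hardest step to be the chain-union FDR bound in Zorn's lemma for uncountable chains. My first attempt will be to reduce it to e-merging functions, where \cite{W25} already supplies maximal elements. I would also try to avoid Zorn's lemma on procedures entirely, by running Zorn on $\bm\lambda$ in the compact product of simplices, ordered by the induced procedures.
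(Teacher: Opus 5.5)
Your Stage one is correct in substance and matches the paper's Step~1. The monotone hull is an unnecessary detour, though: its measurability is unclear, and checking its FDR control by appeal to ``the characterization'' is circular as written. The paper instead bounds $f_A(\mathbf e)$ by $\sup_{\mathbf z}f_A(\mathbf e^A,\mathbf z)$. It checks the expectation bound only for finitely supported inputs, choosing a maximizing $\mathbf z$ at each atom. A finitely supported version of the linear-domination theorem, as in \cite{C26}, then needs neither measurability nor monotonicity.

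The genuine gap is in Stage two, at the step you flag yourself. Pointwise attainment of the maximum along the chain says nothing about the expectation. An increasing family of events indexed by an uncountable chain (e.g.\ of cofinality $\omega_1$) can have a non-measurable union, or one whose probability exceeds the supremum of the individual probabilities. Monotone convergence needs a countable cofinal subfamily, which a chain need not have. Concretely, take $\mathcal D\equiv\{\varnothing\}$ and a non-Borel set $S\subseteq\{e_1\ge 1/\alpha\}$ of size $\aleph_1$ (such $S$ exists in ZFC). Add the rejection $\{1\}$ on the countable initial segments of an $\omega_1$-well-ordering of $S$. Each member is measurable and strongly dominated by the Markov procedure that rejects $\{1\}$ when $e_1\ge 1/\alpha$, so the family is a chain in $\mathcal S$. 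Its union, however, is not a measurable procedure. So your ``finiteness'' argument cannot place the union in $\mathrm{SP}_\alpha$, and you have no other upper bound.

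The missing idea is compactness of the parameter space; your last sentence gestures at it but does not supply the argument. Let $\Theta=\prod_{A\subseteq\cK}\Delta_{|A|+1}$, and for fixed $\mathbf e$ and $R$ let $\Gamma_{\mathbf e,R}=\{\bm\theta\in\Theta:R\in\overline{\mathrm{eBH}}^{\bm\theta}_\alpha(\mathbf e)\}$. This set is closed: each constraint is either vacuous (some $e_i=\infty$ with $i\in A$, by the $\infty$ convention) or a closed linear inequality in $\bm\theta$.

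Given a chain with pointwise union $U$, the closed sets $\Gamma_{\mathbf e,R}$ with $R\in U(\mathbf e)$ have the finite intersection property. Finitely many pairs $(\mathbf e^{(r)},R_r)$ are all certified by the largest of finitely many chain members, and your Stage one applied to that member gives a $\bm\theta$ lying in all of them. Compactness of $\Theta$ then yields $\bm\theta'$ with $U(\mathbf e)\subseteq\overline{\mathrm{eBH}}^{\bm\theta'}_\alpha(\mathbf e)$ for all $\mathbf e$. This is an upper bound of the chain inside $\mathcal S$, obtained with no probabilistic limit at all.

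With this substitution your Zorn argument on $\mathcal S$ goes through, and the rest of your proof is fine. The paper runs the same finite-intersection argument directly on $\{\overline{\mathrm{eBH}}^{\bm\theta}_\alpha:\bm\theta\in\Theta_{\mathrm s}\}$. Here $\Theta_{\mathrm s}$, the set of parameters whose procedure strongly dominates $\mathcal D$, is compact as an intersection of such closed sets. The paper then concludes via Proposition~\ref{prop:strong-weak-admissibility-equivalence} exactly as you do.
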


\begin{proof}
We proceed in three steps.

Step 1: We show $\mathcal D$ is strongly dominated by a weighted-mean $\overline{\mathrm{eBH}}$ procedure. For this purpose, we use the following fact. Let $F:[0,\infty)^m\to[0,\infty)$ be bounded. Assume that $\mathbb E[F(\mathbf X)]\le 1$ for every finitely supported $\mathbf X=(X_1,\dots,X_m)$ satisfying $\mathbb E[X_i]\le1$ for all $i=1,\dots,m$. Then there exists $(\lambda_0,\lambda_1,\dots,\lambda_m)\in\Delta_{m+1}$ such that
$F(\mathbf x)\le\lambda_0+\sum_{i=1}^m\lambda_i x_i$. Although its assumptions differ slightly from those in Theorem~1 of \cite{C26}, the same proof idea applies after a simple adjustment, and we omit the details.

For each nonempty $A\subseteq\cK$ and each $\mathbf e\in[0,\infty]^K$, define
$$f_A(\mathbf e)=\sup_{R\in\mathcal D(\mathbf e)}\frac{|A\cap R|}{\alpha(|R|\vee1)}.$$
For finite $(x_i)_{i\in A}$, consider $F_A((x_i)_{i\in A})=\sup_{\mathbf z}f_A((x_i)_{i\in A},\mathbf z)$. We verify that the useful fact applies to $F_A$. Clearly, the function $F_A$ is bounded. Let $\mathbf X=(X_i)_{i\in A}$ be finitely supported and satisfy $\mathbb E[X_i]\le1$ for all $i\in A$. Since $f_A$ takes values in a finite set, for every point $\mathbf x$ in the support of $\mathbf X$, we may choose $\mathbf z(\mathbf x)$ such that $F_A(\mathbf x)=f_A(\mathbf x,\mathbf z(\mathbf x))$.
Define $\mathbf Z(\mathbf X)=\mathbf z(\mathbf X)$. Since $\mathcal D\in\mathrm{SP}_\alpha$, we have $\mathbb E\left[f_A(\mathbf X,\mathbf Z(\mathbf X))\right]\le 1$, so $\mathbb E[F_A(\mathbf X)]\le 1$. Thus we can find $\bm\lambda^A=(\lambda_0^A,(\lambda_i^A)_{i\in A}) \in\Delta_{|A|+1}$ such that $F_A((x_i)_{i\in A})\le\lambda_0^A+\sum_{i\in A}\lambda_i^A x_i$.
Take $E_\varnothing^{\bm\lambda}=1$, and for every nonempty $A\subseteq\cK$, define $E_A^{\bm\lambda}(\mathbf e)=\mathbb M_{\bm\lambda^A}(\mathbf e^{A})$. Let $\overline{\mathrm{eBH}}_\alpha^{\bm\lambda}$ be the corresponding weighted-mean $\overline{\mathrm{eBH}}$ procedure. If $e_i<\infty$ for all $i\in A$, then
$$f_A(\mathbf e)\le F_A(\mathbf e^{A})\le\lambda_0^A+\sum_{i\in A}\lambda_i^A e_i=E_A^{\bm\lambda}(\mathbf e).$$
If $e_i=\infty$ for some $i\in A$, then $E_A^{\bm\lambda}(\mathbf e)=\infty$, and there is nothing to prove. Consequently, for every $R\in\mathcal D(\mathbf e)$ and every nonempty $A\subseteq\cK$,
$$E_A^{\bm\lambda}(\mathbf e)\ge f_A(\mathbf e)\ge\frac{|A\cap R|}{\alpha(|R|\vee1)}.$$
Hence $\mathcal D\preceq_{\mathrm s}\overline{\mathrm{eBH}}_\alpha^{\bm\lambda}$.

Step 2: We show there exists an admissible weighted-mean $\overline{\mathrm{eBH}}$ procedure in $ \mathrm{SP}_\alpha$ that strongly dominates $\mathcal D$. Let $\Theta=\prod_{A\subseteq\cK}\Delta_{|A|+1}$. Each parameter $\bm \theta\in\Theta$ with $\bm \theta^A=(\theta_0^A,(\theta_i^A)_{i\in A})$ specifies a weighted-mean e-collection by $E_A^{\bm \theta}(\mathbf e)=\mathbb M_{\bm \theta^A}(\mathbf e^{A})$. Let $\overline{\mathrm{eBH}}_\alpha^{\bm \theta}$ denote the corresponding weighted-mean $\overline{\mathrm{eBH}}$ procedure. Define
$$\Theta_{\mathrm s}=\left\{\bm \theta\in\Theta:\mathcal D\preceq_{\mathrm s}\overline{\mathrm{eBH}}_\alpha^{\bm \theta}\right\}.$$
By the above claim, $\Theta_{\mathrm s}$ is nonempty.

For fixed $\mathbf e\in[0,\infty]^K$ and $R\subseteq\cK$, define
$$\Gamma_{\mathbf e,R}=\left\{\bm \theta\in\Theta:R\in\overline{\mathrm{eBH}}_\alpha^{\bm \theta}(\mathbf e)\right\}.$$
The set $\Gamma_{\mathbf e,R}$ is closed. Indeed, for each nonempty $A\subseteq\cK$, the constraint is automatic if $e_i=\infty$ for some $i\in A$. Otherwise, it is a finite class of closed linear inequalities
$$\theta_0^A+\sum_{i\in A}\theta_i^A e_i\ge\frac{|A\cap R|}{\alpha(|R|\vee1)}.$$
Since the set $\Gamma_{\mathbf e,R}$ is closed,
$\Theta_{\mathrm s}=\bigcap_{\mathbf e\in[0,\infty]^K}\bigcap_{R\in\mathcal D(\mathbf e)}\Gamma_{\mathbf e,R}$
is closed in $\Theta$. Together with the compactness of $\Theta$, we know $\Theta_{\mathrm s}$ is compact.

Consider the family $\mathfrak P_{\mathrm s}=\left\{\overline{\mathrm{eBH}}_\alpha^{\bm \theta}:\bm \theta\in\Theta_{\mathrm s}\right\}$ equipped with the order $\preceq_{\mathrm s}$. We show that every
nonempty chain $\left\{\overline{\mathrm{eBH}}_\alpha^{\bm \theta_t}:t\in I\right\}$ in $\mathfrak P_{\mathrm s}$ has an upper bound. For each $\mathbf e\in[0,\infty]^K$, set $U(\mathbf e)=\bigcup_{t\in I}\overline{\mathrm{eBH}}_\alpha^{\bm \theta_t}(\mathbf e)$. Since $2^{\cK}$ is finite, there exists $t(\mathbf e)\in I$ such that $U(\mathbf e)=\overline{\mathrm{eBH}}_\alpha^{\bm \theta_{t(\mathbf e)}}(\mathbf e)$. Define
$$F_{\mathbf e}=\left\{\bm \theta\in\Theta_{\mathrm s}:U(\mathbf e)\subseteq\overline{\mathrm{eBH}}_\alpha^{\bm \theta}(\mathbf e)\right\}.$$
By definition, $F_{\mathbf e}=\Theta_{\mathrm s}\cap\left(\bigcap_{R\in U(\mathbf e)}\Gamma_{\mathbf e,R}\right)$, so $F_{\mathbf e}$ is closed in $\Theta_{\mathrm s}$. We next claim that the family
$\{F_{\mathbf e}\}_{\mathbf e\in[0,\infty]^K}$ has the finite
intersection property. Indeed, for finitely many points $\mathbf e^{(1)},\dots,\mathbf e^{(m)}$, choose $t_r\in I$ such that
$U(\mathbf e^{(r)})=\overline{\mathrm{eBH}}_\alpha^{\bm \theta_{t_r}}(\mathbf e^{(r)})$.
Among the finitely many procedures $\overline{\mathrm{eBH}}_\alpha^{\bm \theta_{t_1}},\dots,\overline{\mathrm{eBH}}_\alpha^{\bm \theta_{t_m}}$, choose a largest one in the chain, denoted by
$\overline{\mathrm{eBH}}_\alpha^{\bm \theta_{t^\dagger}}$. Then $\bm \theta_{t^\dagger}\in\bigcap_{r=1}^m F_{\mathbf e^{(r)}}$.
By compactness of $\Theta_{\mathrm s}$, we have $\bigcap_{\mathbf e\in[0,\infty]^K} F_{\mathbf e}\ne\varnothing$, so we may take
$\bm \theta'\in\bigcap_{\mathbf e\in[0,\infty]^K}F_{\mathbf e}$. Then $\overline{\mathrm{eBH}}_\alpha^{\bm \theta_t}(\mathbf e)\subseteq U(\mathbf e)\subseteq\overline{\mathrm{eBH}}_\alpha^{\bm \theta'}(\mathbf e)$ follows for every $t\in I$ and every $\mathbf e\in[0,\infty]^K$. Thus $\overline{\mathrm{eBH}}_\alpha^{\bm \theta'}$ is an upper bound for the chain. By Zorn's lemma, $\mathfrak P_{\mathrm s}$ has a maximal element $\overline{\mathrm{eBH}}_\alpha^{\bm \theta^*}$. Since $\bm \theta^*\in\Theta_{\mathrm s}$, we have $\mathcal D \preceq_{\mathrm s} \overline{\mathrm{eBH}}_\alpha^{\bm \theta^*}$.

We claim that $\overline{\mathrm{eBH}}_\alpha^{\bm \theta^*}$ is admissible. Let $\mathcal G\in\mathrm{SP}_\alpha$ satisfy $\overline{\mathrm{eBH}}_\alpha^{\bm \theta^*}\preceq_{\mathrm s}\mathcal G$. By Proposition~\ref{prop:strong-weak-admissibility-equivalence}, it suffices to verify $\mathcal G=\overline{\mathrm{eBH}}_\alpha^{\bm \theta^*}$. Step 1 gives some $\bm \gamma\in\Theta$ such that $\mathcal G\preceq_{\mathrm s}\overline{\mathrm{eBH}}_\alpha^{\bm \gamma}$. Consequently, $\mathcal D\preceq_{\mathrm s}\overline{\mathrm{eBH}}_\alpha^{\bm \gamma}$ holds, and hence $\bm \gamma\in\Theta_{\mathrm s}$. Moreover, $\overline{\mathrm{eBH}}_\alpha^{\bm \theta^*}\preceq_{\mathrm s}\overline{\mathrm{eBH}}_\alpha^{\bm \gamma}$. By maximality of $\overline{\mathrm{eBH}}_\alpha^{\bm \theta^*}$ in $\mathfrak P_{\mathrm s}$, $\overline{\mathrm{eBH}}_\alpha^{\bm \gamma}=\overline{\mathrm{eBH}}_\alpha^{\bm \theta^*}$ necessarily holds.
Therefore $\overline{\mathrm{eBH}}_\alpha^{\bm \theta^*}\preceq_{\mathrm s}\mathcal G\preceq_{\mathrm s}\overline{\mathrm{eBH}}_\alpha^{\bm \theta^*}$ forces $\mathcal G=\overline{\mathrm{eBH}}_\alpha^{\bm \theta^*}$. Thus $\overline{\mathrm{eBH}}_\alpha^{\bm \theta^*}$ is admissible.

Step 3: We show that all admissible elements of $\mathrm{SP}_\alpha$ are weighted-mean $\overline{\mathrm{eBH}}$ procedures. 
If $\mathcal D$ is admissible, then $\mathcal D\preceq_{\mathrm s}\overline{\mathrm{eBH}}_\alpha^{\bm \theta^*}$ implies
$\mathcal D=\overline{\mathrm{eBH}}_\alpha^{\bm \theta^*}$. This concludes the proof.
\end{proof}

Theorem~\ref{thm:admissible-weighted-ebh-dominator} shows that the admissible weighted-mean $\overline{\mathrm{eBH}}$ procedures form a complete class: it significantly reduces the study of admissibility, which only needs to be carried out inside the weighted-mean $\overline{\mathrm{eBH}}$ class. In decision theory, this means that procedures outside this class are less attractive. They may be useful as starting points, but if they are not already of weighted-mean $\overline{\mathrm{eBH}}$ form, then there is an admissible procedure of that form that is at least as powerful under strong dominance.

The $\infty$ convention in Definition~\ref{def:closed-eBH-procedure} is essential. Replacing it by specifying $0\cdot\infty=0$ would invalidate the theorem. This point is illustrated by Example~\ref{ex:infty-convention} in Appendix~\ref{app:counterexamples}.

\begin{remark}
Proposition~51 of \citet{XSF25}
has a result similar to Theorem~\ref{thm:admissible-weighted-ebh-dominator}. 
In their result, they assume $\E[E_i]>1$ for any $i$ that represents an alternative hypothesis, which imposes a strong restriction. For instance, under any configuration containing at least one
alternative, the  constant vector $(1,\dots,1)$ does not satisfy their assumption and thus it is excluded from the input of the procedure.
Our technical contributions over that result also include the point that any simultaneous procedure admits an admissible improvement. 
Our new formulation of dominance and admissibility  in Section \ref{sec:setting}  is crucial for properly presenting the admissibility results in this paper.
Moreover, 
 \cite{XSF25} did not prove the admissibility of any mean $\overline{\mathrm{eBH}}$ procedures; see their Remark 19.
\end{remark}

\subsection{Point procedures}\label{sec:point-procedures}

We next turn from simultaneous procedures to point procedures, where a single rejection set must be selected.
The point weighted-mean $\overline{\mathrm{eBH}}$ procedures remain fundamental.

\begin{theorem}\label{thm:point-weighted-ebh-complete-class}
For $\mathfrak D\in\mathrm{PP}_\alpha$, there exists an admissible point weighted-mean $\overline{\mathrm{eBH}}$ procedure in $ \mathrm{PP}_\alpha$ that dominates $\mathfrak D$. In particular, all admissible elements of $ \mathrm{PP}_\alpha$ are point weighted-mean $\overline{\mathrm{eBH}}$ procedures. 
\end{theorem}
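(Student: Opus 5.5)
The plan is to run a Zorn's lemma argument directly on point procedures, using Theorem~\ref{thm:admissible-weighted-ebh-dominator} (more precisely, Step~1 of its proof) as the bridge to weighted-mean $\overline{\mathrm{eBH}}$ form.

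\emph{Bridge step.} Any $\mathfrak D\in\mathrm{PP}_\alpha$ can be identified with the simultaneous procedure $\mathbf e\mapsto\{\mathfrak D(\mathbf e)\}$. This singleton procedure has the same FDR, so it belongs to $\mathrm{SP}_\alpha$. Step~1 then gives $\bm\lambda\in\Theta$ with $\mathfrak D(\mathbf e)\in\overline{\mathrm{eBH}}^{\bm\lambda}_\alpha(\mathbf e)$ for all $\mathbf e$. Now let $\mathfrak G(\mathbf e)$ be a maximal set of $\overline{\mathrm{eBH}}^{\bm\lambda}_\alpha(\mathbf e)$ containing $\mathfrak D(\mathbf e)$. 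Then $\mathfrak G$ is a point weighted-mean $\overline{\mathrm{eBH}}$ procedure with $\mathfrak D\subseteq\mathfrak G$. Since $\mathfrak G$ outputs a member of a procedure in $\mathrm{SP}_\alpha$, its FDR is at most that of $\overline{\mathrm{eBH}}^{\bm\lambda}_\alpha$, so $\mathfrak G\in\mathrm{PP}_\alpha$.

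\emph{Zorn step.} Let $\mathfrak P=\{\mathfrak D'\in\mathrm{PP}_\alpha:\mathfrak D\subseteq\mathfrak D'\}$, ordered by $\subseteq$. Take a chain $\{\mathfrak D_t\}_{t\in I}$ and define $\mathfrak U(\mathbf e)=\bigcup_t\mathfrak D_t(\mathbf e)$. Since $2^{\cK}$ is finite and the chain is totally ordered, for each $\mathbf e$ there is $t(\mathbf e)$ with $\mathfrak U(\mathbf e)=\mathfrak D_{t(\mathbf e)}(\mathbf e)$.

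To prove $\mathfrak U\in\mathrm{PP}_\alpha$, I will find a single $\bm\theta\in\Theta$ with $\mathfrak U(\mathbf e)\in\overline{\mathrm{eBH}}^{\bm\theta}_\alpha(\mathbf e)$ for all $\mathbf e$, using the compactness argument of Step~2. Set $F_{\mathbf e}=\Gamma_{\mathbf e,\mathfrak U(\mathbf e)}$, which is closed in the compact set $\Theta$. For the finite intersection property, take finitely many points $\mathbf e^{(1)},\dots,\mathbf e^{(m)}$ and let $t^\dagger$ be the largest of $t(\mathbf e^{(1)}),\dots,t(\mathbf e^{(m)})$ in the chain. Then $\mathfrak D_{t^\dagger}(\mathbf e^{(r)})=\mathfrak U(\mathbf e^{(r)})$ for each $r$, by maximality of the union. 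Applying the bridge step to $\mathfrak D_{t^\dagger}\in\mathrm{PP}_\alpha$ gives a $\bm\lambda$ in every $F_{\mathbf e^{(r)}}$. Hence $\bigcap_{\mathbf e}F_{\mathbf e}\neq\varnothing$. Any $\bm\theta$ in this intersection makes $\mathfrak U$ a selection from $\overline{\mathrm{eBH}}^{\bm\theta}_\alpha$, so $\mathfrak U$ controls FDR at level $\alpha$. Thus $\mathfrak U$ is an upper bound of the chain in $\mathfrak P$, and Zorn's lemma yields a maximal $\mathfrak M\in\mathfrak P$.

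\emph{Conclusion.} $\mathfrak M$ is admissible in $\mathrm{PP}_\alpha$: any $\mathfrak D'\in\mathrm{PP}_\alpha$ with $\mathfrak M\subseteq\mathfrak D'$ lies in $\mathfrak P$, so $\mathfrak D'=\mathfrak M$. By the bridge step, $\mathfrak M$ is dominated by a point weighted-mean $\overline{\mathrm{eBH}}$ procedure in $\mathrm{PP}_\alpha$, and maximality forces equality. Hence $\mathfrak M$ is an admissible point weighted-mean $\overline{\mathrm{eBH}}$ procedure dominating $\mathfrak D$. Applying this to an admissible $\mathfrak D$ gives $\mathfrak D=\mathfrak M$, which is the second claim.

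\emph{Main obstacle: measurability.} Procedures are required to be measurable. The pointwise union $\mathfrak U$ over an uncountable chain, and the choice of a maximal superset in the bridge step, need not obviously be measurable.
\begin{itemize}
\item For the bridge step, I would choose the maximal superset by a fixed deterministic rule, such as the first maximal superset in a fixed enumeration of $2^{\cK}$. This is Borel because $\mathbf e\mapsto\overline{\mathrm{eBH}}^{\bm\lambda}_\alpha(\mathbf e)$ is defined by finitely many Borel inequalities.
\item For $\mathfrak U$, I would try to show that each level set $\{\mathbf e:\mathfrak U(\mathbf e)=R\}$ is a countable union of the measurable sets $\{\mathfrak D_t=R\}$, by exploiting that the chain restricted to the finite lattice has only finitely many "jumps" per point.
\item If that fails, I would instead run Zorn over the family $\{(\bm\theta,\text{measurable selection})\}$, whose measurability is inherited from $\overline{\mathrm{eBH}}^{\bm\theta}_\alpha$, replacing $\mathfrak U$ by the deterministic rule above applied to $\overline{\mathrm{eBH}}^{\bm\theta}_\alpha$ so that it contains $\mathfrak U(\mathbf e)$.
\end{itemize}
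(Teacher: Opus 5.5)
Your bridge step is correct, and so is your compactness argument showing that the pointwise union $\mathfrak U$ of a chain is a selection from a single $\overline{\mathrm{eBH}}^{\bm\theta}_\alpha$. The gap is in the Zorn step. Because of measurability, chains in $\mathfrak P$ need not have \emph{any} upper bound in $\mathrm{PP}_\alpha$, so none of your three fallbacks can work.

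Take $K=3$, $\alpha<2/3$, and the region $\Omega=\{(e_1,e_2,\infty):e_1,e_2\in(x-\eta,x+\eta)\}$ from Example~\ref{ex:dominator-need-not-symmetric}. On $\Omega$ the sets $\{3\}$, $\{1,3\}$ and $\{2,3\}$ all lie in $\overline{\mathrm{eBH}}^{\mathsf m}_\alpha(\mathbf e)$. The two-point argument at the end of that example works at every point of $\Omega$, since $x+\eta<2/(3\alpha)$. It shows that no member of $\mathrm{PP}_\alpha$ can output $\cK$ at even a single point of $\Omega$.

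Now split $\Omega=V\sqcup W$ with $V$ not Borel, but with $V=\bigcup_{\xi<\omega_1}V_\xi$ and $W=\bigcup_{\xi<\omega_1}W_\xi$ increasing unions of Borel sets. Under CH any non-Borel $V$ works, using countable initial segments of an $\omega_1$-enumeration of $\Omega$. In ZFC, take $V$ co-analytic non-Borel, since both co-analytic and analytic sets are increasing unions of $\aleph_1$ Borel sets. Let $\mathfrak D_\xi$ output $\{1,3\}$ on $V_\xi$, $\{2,3\}$ on $W_\xi$, $\{3\}$ on the rest of $\Omega$, and $\varnothing$ off $\Omega$.

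These $\mathfrak D_\xi$ form a chain of measurable members of $\mathrm{PP}_\alpha$ above the procedure that rejects $\{3\}$ on $\Omega$. Yet any upper bound must output exactly $\{1,3\}$ on $V$ and $\{2,3\}$ on $W$, so it is not measurable. This defeats your fallbacks:
\begin{itemize}
\item Your second bullet is false in general. The finitely many jumps at each $\mathbf e$ occur at $\mathbf e$-dependent indices, so no countable subchain is cofinal uniformly in $\mathbf e$.
\item Your third bullet is circular. A rule guaranteed to contain $\mathfrak U(\mathbf e)$ must read off $\mathfrak U(\mathbf e)$. In this example no measurable point procedure in $\mathrm{PP}_\alpha$ contains $\mathfrak U$, so no measurable selection from any $\overline{\mathrm{eBH}}^{\bm\theta}_\alpha$ does either.
\end{itemize}

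The missing idea is to run Zorn's lemma on an object with no measurability requirement, and to build a point procedure only once, at the end. The paper keeps $\mathfrak D$ fixed and maximizes, pointwise in $\mathbf e$, the scalar
$m_{\bm\theta}(\mathbf e)=\max\{|R|:\mathfrak D(\mathbf e)\subseteq R\in\overline{\mathrm{eBH}}^{\bm\theta}_\alpha(\mathbf e)\}$, which is determined by $\bm\theta$ alone. Chains of these functions have upper bounds by exactly your compactness argument, since $\{\bm\theta:m_{\bm\theta}(\mathbf e)\ge u\}$ is a finite union of sets $\Gamma_{\mathbf e,R}$.

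For a maximal $m^*=m_{\bm\theta^*}$, the paper then uses a fixed total order on $2^{\cK}$ to select a set of size $m^*(\mathbf e)$ in $\overline{\mathrm{eBH}}^{\bm\theta^*}_\alpha(\mathbf e)$ containing $\mathfrak D(\mathbf e)$. This selection is measurable because $\mathfrak D$ is measurable and the $\overline{\mathrm{eBH}}$ constraints are Borel. Cardinality is the right potential: any strict point dominator $\widehat{\mathfrak D}$ is strictly larger somewhere. Step~1 of Theorem~\ref{thm:admissible-weighted-ebh-dominator} applied to $\widehat{\mathfrak D}$ then gives $m_{\widehat{\bm\theta}}\ge m^*$ with strict inequality somewhere, contradicting maximality.

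If your third bullet was meant as maximizing $\overline{\mathrm{eBH}}^{\bm\theta}_\alpha$ under $\preceq_{\mathrm s}$ and then taking a maximal selection, that would not suffice either. Example~\ref{ex:zero-weight-point-not-admissible} gives an admissible simultaneous procedure with an inadmissible maximal selection.
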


\begin{proofidea}
The proof uses the weighted-mean representation from Theorem~\ref{thm:admissible-weighted-ebh-dominator}. Since strong dominance never occurs between two different point procedures, we instead maximize pointwise the size of a certified superset of the original rejection set. A similar argument based on Zorn's lemma, followed by a fixed selection rule, yields the desired admissible point procedure.
\end{proofidea}

Thus, the admissible point weighted-mean $\overline{\mathrm{eBH}}$ procedures also form a complete class. When studying admissibility of point procedures, we only need to focus on the point weighted-mean $\overline{\mathrm{eBH}}$ class.

\subsection{Closed BY procedure is not admissible}

In this section, we turn  briefly to p-testing procedures and provide a side result.\footnote{We thank Aaditya Ramdas for raising this question.} Following \cite{XSF25}, we introduce a class of point procedures that work with p-values and are based on a p-to-e calibrator. For such procedures, we use the same dominance and admissibility terminology as for point procedures. A point p-testing procedure that controls FDR at level $\alpha$ is admissible if it is not strictly dominated by another FDR-controlling point p-testing procedure. Write $\mathbf p=(p_1,\dots,p_K)$ for the vector of p-values, let $p_{(1)}\le \dots \le p_{(K)}$ denote its order statistics, and let $h_n=\sum_{j=1}^n j^{-1}$ for $n\in \cK$.
Next, we define the closed BY procedure as in \cite{XSF25}, which is an upgrade of the procedure of \cite{BY01} for arbitrarily dependent p-values.

\begin{definition}
\label{def:cBY}
Given a level $\alpha\in (0,1)$ and a nonempty $A\subseteq \cK$, let
$$E_A^{\mathrm{BY}}(\mathbf p)=\sum_{i\in A}\frac{\mathds 1_{\{h_{|A|}p_i\le \alpha\}}}{\alpha\bigl(\lceil |A|h_{|A|}p_i/\alpha\rceil\vee 1\bigr)},$$
and take $E_\varnothing^{\mathrm{BY}}(\mathbf p)=1$. For every input $\mathbf p$, the \emph{closed BY procedure} $\overline{\mathrm{BY}}_\alpha(\mathbf p)$ rejects one set of the largest size in the candidate discovery sets
$$\mathcal C_\alpha^{\mathrm{BY}}(\mathbf p)=\left\{R\subseteq\cK:E_A^{\mathrm{BY}}(\mathbf p)\ge\frac{\FDP_A(R)}{\alpha}\text{ for all }A\subseteq\cK\right\}.$$
\end{definition}

The only result in this section shows that the closed BY procedure is not admissible and there exists an explicit upgrade. Choose any $t_0\in(\alpha,5\alpha/3]$, and set
$$\mathcal A=\left\{\mathbf p\in [0,1]^K: p_{(K-1)}\le \frac{\alpha}{K h_K},\ \alpha<p_{(K)}\le t_0\,\right\}.$$

\begin{proposition}\label{prop:closed-by-not-admissible}
For $\alpha\in (0,1)$, the closed BY procedure $\overline{\mathrm{BY}}_\alpha$ is strictly dominated by the following point procedure
$$
\mathfrak D(\mathbf p)
=
\begin{cases}
\cK, & \mathbf p\in \mathcal A,\\[1mm]
\overline{\mathrm{BY}}_\alpha(\mathbf p), & \mathbf p\notin \mathcal A.
\end{cases}
$$
\end{proposition}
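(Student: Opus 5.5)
The plan is to verify two things: that $\mathfrak D$ strictly dominates $\overline{\mathrm{BY}}_\alpha$, and that $\mathfrak D$ still controls FDR at level $\alpha$ for arbitrarily dependent p-values, meaning $\p(p_i\le t)\le t$ for $i\in N$.

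\textbf{Dominance.} Off $\mathcal A$ the two procedures coincide, and on $\mathcal A$ we have $\mathfrak D=\cK$, so $\overline{\mathrm{BY}}_\alpha\subseteq\mathfrak D$. For strictness, fix $\mathbf p\in\mathcal A$ and let $j$ be the index of $p_{(K)}$. Take $A=\{j\}$. Since $h_1p_j=p_j>\alpha$, we get $E^{\mathrm{BY}}_{\{j\}}(\mathbf p)=0<1/(\alpha K)=\FDP_{\{j\}}(\cK)/\alpha$. Hence $\cK\notin\mathcal C^{\mathrm{BY}}_\alpha(\mathbf p)$ and $\overline{\mathrm{BY}}_\alpha(\mathbf p)\ne\cK$. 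The set $\mathcal A$ is nonempty because $t_0>\alpha$, so the dominance is strict.

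\textbf{Structure on $\mathcal A$.} Next identify $\overline{\mathrm{BY}}_\alpha$ on $\mathcal A$: I claim it rejects exactly $\cK\setminus\{j\}$.
\begin{itemize}
\item For $i\ne j$ and any $A'\ni i$, we have $|A'|h_{|A'|}p_i/\alpha\le Kh_Kp_i/\alpha\le1$. So each such term of $E^{\mathrm{BY}}_{A'}$ equals $1/\alpha$.
\item The term for $j$ vanishes whenever $h_{|A'|}p_j>\alpha$, which holds since $h_{|A'|}\ge1$.
\item Therefore $E^{\mathrm{BY}}_{A'}=|A'\setminus\{j\}|/\alpha$. With $R=\cK\setminus\{j\}$ this gives $E^{\mathrm{BY}}_{A'}\ge|A'\cap R|/\alpha\ge\FDP_{A'}(R)/\alpha$, so $R$ is a candidate.
\item $R$ has the largest size, because $\cK$ is excluded by the dominance step.
\end{itemize}
Consequently, on $\mathcal A$ the only change is to add the single index $j$:
\begin{itemize}
\item If $j\notin N$, the FDP drops from $|N|/(K-1)$ to $|N|/K$.
\item If $j\in N$, the FDP rises from $(|N|-1)/(K-1)$ to $|N|/K$, an increment of $(K-|N|)/(K(K-1))\le 1/K$.
\end{itemize}
Thus $\FDR_{\mathfrak D}\le \FDR_{\overline{\mathrm{BY}}}+\frac1K\sum_{j\in N}\p(\mathcal A_j)$, where $\mathcal A_j=\{\mathbf p\in\mathcal A:\ p_j=p_{(K)}\in(\alpha,t_0]\}$.

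\textbf{FDR bound (main obstacle).} The remaining step is to show that this extra mass is absorbed by slack in the closed BY bound. A crude bound $\FDR_{\overline{\mathrm{BY}}}\le\alpha$ is not enough. I would instead reopen the proof of FDR control of $\overline{\mathrm{eBH}}$:
\begin{itemize}
\item First write $\FDR_{\mathfrak D}\le\alpha\,\E[G]$, where $G=\min(E^{\mathrm{BY}}_N,1/\alpha)$ off $\mathcal A$ and $G=|N|/(\alpha K)$ on $\mathcal A$.
\item Then observe that on $\mathcal A_j$ with $j\in N$ we have $E^{\mathrm{BY}}_N=(|N|-1)/\alpha$. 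So $G$ exceeds the BY collection by at most $1/\alpha$ there, and in fact the excess is the $j$-term that would be present if $p_j$ were small.
\item The goal is then a per-null inequality $\E[\,\text{$j$-th calibrator term}+\tfrac1{\alpha K}\mathds 1_{\mathcal A_j}]\le 1/|N|$. This reduces to a linear program over the marginal law of a super-uniform $p_j$.
\item The calibrator $p\mapsto \mathds 1_{\{h_np\le\alpha\}}/(\alpha\lceil nh_np/\alpha\rceil)$, with $n=|N|$, has expectation exactly $1/n$ only when its mass sits at the breakpoints $k\alpha/(nh_n)$, $k\le n$. Any mass $\p(p_j\in(\alpha,t_0])$ must be paid for through $\p(p_j\le t_0)\le t_0$.
\item Moreover, $\mathcal A$ forces the other $K-1$ p-values, including all other nulls, below $\alpha/(Kh_K)$, which constrains their joint placement.
\end{itemize}
I would solve this LP explicitly, first for $|N|=1$ and then for general $|N|$. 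I expect the condition $t_0\le 5\alpha/3$ to be exactly what makes the optimal value at most $\alpha$, and this computation is the delicate part of the proof.
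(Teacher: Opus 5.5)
Your dominance argument and your identification $\overline{\mathrm{BY}}_\alpha(\mathbf p)=\cK\setminus\{j\}$ on $\mathcal A$ are correct. The gap is FDR control, which is the real content of the proposition: you only announce it, and the reduction you propose leads to an inequality that is false.

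Only $|N|=1$ needs attention. If $|N|\ge2$, then on $\mathcal A$ you have $E^{\mathrm{BY}}_N\ge(|N|-1)/\alpha\ge|N|/(\alpha K)$. Hence $\FDP_N(\mathfrak D(\mathbf p))\le\alpha E^{\mathrm{BY}}_N(\mathbf p)$ everywhere, and the usual argument applies unchanged. For $N=\{j\}$, however, $E^{\mathrm{BY}}_{\{j\}}=\mathds 1_{\{p_j\le\alpha\}}/\alpha$. So your $G$ equals $\mathds 1_{\{p_j\le\alpha\}}/\alpha$ off $\mathcal A$ and $1/(\alpha K)$ on $\mathcal A_j$. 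Take $P_j$ uniform and all other (non-null) p-values equal to $0$. Then $\E[G]=1+\p(\alpha<P_j\le t_0)/(\alpha K)>1$ for every $t_0>\alpha$. Equivalently, your per-null LP has value strictly above the target however $t_0\le5\alpha/3$ is chosen. The joint constraint from $\mathcal A$ does not help, since the other coordinates are non-nulls. The true FDR in this example is $\p(P_j\le t_0)/K<\alpha$, so there is slack, but your $G$ discards it. It bounds $\FDP_{\{j\}}$ of every BY candidate set containing $j$ by $\alpha E^{\mathrm{BY}}_{\{j\}}=1$.

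The missing idea has two parts: use the actual FDP $1/|R|$, together with a structural fact about closed BY. The fact is that $\overline{\mathrm{BY}}_\alpha(\mathbf p)=\{j\}$ forces $p_j\le\alpha/3$. Write $b_k(u)$ for the BY calibrator term, so $E^{\mathrm{BY}}_A=\sum_{\ell\in A}b_{|A|}(p_\ell)/\alpha$. Suppose $p_j>\alpha/3$; then for any $i\neq j$ the pair $\{i,j\}$ is also a candidate. For $A$ with $i\in A$ and $j\notin A$, apply the candidate constraint of $\{j\}$ to $T=A\cup\{j\}$. Since $|T|h_{|T|}\ge3$, you get $b_{|T|}(p_j)\le 1/2$, hence $\sum_{\ell\in A}b_{|T|}(p_\ell)\ge1/2$. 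Because $b_k$ is decreasing in $k$, $E^{\mathrm{BY}}_A\ge 1/(2\alpha)$. For $A\ni j$, the constraint of $\{j\}$ already gives $E^{\mathrm{BY}}_A\ge1/\alpha$. This contradicts $\overline{\mathrm{BY}}_\alpha$ choosing a largest candidate.

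Two further observations complete the bound. Membership of $j$ in a candidate set forces $p_j\le\alpha$, and on $\mathcal A$ all p-values are at most $t_0$. Therefore $\FDP_{\{j\}}(\mathfrak D(\mathbf p))\le\mathds 1_{\{p_j\le\alpha/3\}}+\tfrac12\mathds 1_{\{\alpha/3<p_j\le t_0\}}$. Under a super-uniform $P_j$ its expectation is at most $\alpha/6+t_0/2\le\alpha$, and this is exactly where $t_0\le5\alpha/3$ enters. The paper packages this by replacing the singleton e-values with $E'_{\{i\}}=\frac1\alpha\mathds 1_{\{p_i\le\alpha/3\}}+\frac1{2\alpha}\mathds 1_{\{\alpha/3<p_i\le t_0\}}$, keeping $E^{\mathrm{BY}}_A$ for $|A|\ge2$. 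It then checks that $\mathfrak D(\mathbf p)$ is a candidate set for the new e-collection.
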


The inadmissibility result shown in
Proposition~\ref{prop:closed-by-not-admissible} 
depends on the choice of the e-collection in Definition \ref{def:cBY}.
It is unclear whether a different choice of the e-collection can lead to an admissible p-testing procedure.

\section{Symmetric procedures}\label{sec:symmetric-ebh}

We now specialize the preceding results to symmetric procedures. 
In many statistical applications, it is desirable to use a symmetric procedure when there is no side information about the relative importance or supporting evidence among the hypotheses.
Symmetry is satisfied by most commonly  used procedures for both p-values and e-values. In our context, it is satisfied by  the mean $\overline{\mathrm{eBH}}$  procedure. 

This class is also technically important, because it helps to give an explicit rule of thumb for choosing the constant terms without prior side information. 

\subsection{Symmetric weighted-mean closed eBH}\label{sec:symmetric_characterization}

First, it is straightforward to verify that a weighted-mean $\overline{\mathrm{eBH}}$ procedure is symmetric if it is based on a symmetric e-collection $\{E_A\}_{A\subseteq \cK}$ given by  
\begin{equation}\label{eq:symmetry_e_collection}
    E_A=\mathbb M_{\lambda_0^{|A|}}^{\mathrm{sym}}(\mathbf E^A),\quad  \lambda_0^{|A|}\in [0,1],\quad \mbox{for all $A\subseteq \cK$}.
\end{equation}
In \eqref{eq:symmetry_e_collection}, the parameter $\bm \lambda^A$ in Definition \ref{def:closed-eBH-procedure} is symmetric: it only depends on the size of $A$, and its only flexibility is the constant term.  In particular, each   $(\lambda_0^1,\dots,\lambda_0^K)\in [0,1]^K$ defines a symmetric weighted-mean $\overline{\mathrm{eBH}}$ procedure. 

The converse statement is a bit more delicate. 
If a weighted-mean $\overline{\mathrm{eBH}}$ is symmetric, its parameter $\bm\lambda$ does not need to satisfy the symmetric form in \eqref{eq:symmetry_e_collection}.  Example~\ref{ex:same_procedure_different_lambda} in Appendix~\ref{app:counterexamples} gives  a symmetric weighted-mean $\overline{\mathrm{eBH}}$ procedure with an asymmetric $\bm\lambda$. 
Nevertheless, the following proposition shows that the class \eqref{eq:symmetry_e_collection} covers all symmetric weighted-mean $\overline{\mathrm{eBH}}$ procedures. In other words, any symmetric  weighted-mean $\overline{\mathrm{eBH}}$ procedure admits 
one representing parameter $\bm \lambda$ that is symmetric;  
recall that there are possibly many parameters $\bm\lambda$ that correspond to the same procedure. 

\begin{proposition}\label{prop:symmetric-e-collection}
A weighted-mean $\overline{\mathrm{eBH}}$ procedure $\mathcal D$ is symmetric if and only if there exists $(\lambda_0^1,\dots,\lambda_0^K)\in [0,1]^K$ such that $\mathcal D$ is based on the e-collection \eqref{eq:symmetry_e_collection}.
\end{proposition}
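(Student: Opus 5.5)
The ``if'' direction is a direct check. If $\mathcal D$ is built from \eqref{eq:symmetry_e_collection}, then for any permutation $\sigma$ and any $A\subseteq\cK$ we have $\mathbb M^{\mathrm{sym}}_{\lambda_0^{|A|}}((\mathbf e_\sigma)^{A})=\mathbb M^{\mathrm{sym}}_{\lambda_0^{|\sigma(A)|}}(\mathbf e^{\sigma(A)})$. The $\infty$ convention is preserved because it only depends on the set of coordinates. In addition, $\FDP_A(R)=\FDP_{\sigma(A)}(\sigma(R))$. Reindexing the family of constraints over $A$ by $\sigma(A)$ then gives $\mathcal D(\mathbf e_\sigma)=\sigma^{-1}(\mathcal D(\mathbf e))$.

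For the ``only if'' direction, I will start from a parameter $\bm\lambda$ with $\mathcal D=\overline{\mathrm{eBH}}_\alpha^{\bm\lambda}$. For a permutation $\sigma$, define the transported parameter $\bm\lambda_\sigma$ by $(\bm\lambda_\sigma)^A_0=\lambda_0^{\sigma(A)}$ and $(\bm\lambda_\sigma)^A_i=\lambda^{\sigma(A)}_{\sigma(i)}$. The same reindexing as above shows $\overline{\mathrm{eBH}}_\alpha^{\bm\lambda_\sigma}(\mathbf e)=\sigma^{-1}\bigl(\overline{\mathrm{eBH}}_\alpha^{\bm\lambda}(\mathbf e_{\sigma^{-1}})\bigr)$. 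By the symmetry of $\mathcal D$, this equals $\mathcal D(\mathbf e)$. Hence every $\bm\lambda_\sigma$ represents $\mathcal D$.

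Next let $\bar{\bm\lambda}=\frac1{K!}\sum_\sigma\bm\lambda_\sigma$. Averaging over all of $S_K$ makes $\bar\lambda^A_i$ an average of $\lambda^B_j$ uniformly over pairs $(B,j)$ with $|B|=|A|$ and $j\in B$. So $\bar{\bm\lambda}$ has the form \eqref{eq:symmetry_e_collection}, with $\lambda_0^{n}$ equal to the average of $\lambda_0^B$ over $|B|=n$. For fixed $\mathbf e$ and $R$, the set $\Gamma_{\mathbf e,R}$ from the proof of Theorem~\ref{thm:admissible-weighted-ebh-dominator} is an intersection of closed half-spaces, hence convex. Therefore $R\in\mathcal D(\mathbf e)$ implies $R\in\overline{\mathrm{eBH}}_\alpha^{\bar{\bm\lambda}}(\mathbf e)$, which gives $\mathcal D\preceq_{\mathrm s}\overline{\mathrm{eBH}}_\alpha^{\bar{\bm\lambda}}$.

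The main obstacle is the reverse inclusion $\overline{\mathrm{eBH}}_\alpha^{\bar{\bm\lambda}}(\mathbf e)\subseteq\mathcal D(\mathbf e)$. Convexity alone does not give it, since $\mathcal D(\mathbf e)=\bigcap_\sigma\overline{\mathrm{eBH}}_\alpha^{\bm\lambda_\sigma}(\mathbf e)$ is described by the concave constraints $\min_\sigma\mathbb M_{(\bm\lambda_\sigma)^A}(\mathbf e^A)\ge\FDP_A(R)/\alpha$. My plan here is to argue by contradiction. Suppose $R\notin\mathcal D(\mathbf e)$ but $R\in\overline{\mathrm{eBH}}_\alpha^{\bar{\bm\lambda}}(\mathbf e)$. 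Then some constraint of some $\bm\lambda_\sigma$ fails at a set $A$. Using the trick from the proof of Theorem~\ref{thm:constant-free-weighted-ebh-admissible}, I will move to a modified input $\mathbf y$ with $y_j=\infty$ off $A$, which isolates the constraints indexed by subsets of $A$.

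On such inputs I will use symmetry once more. Varying the finite coordinates along permutations fixing $A$ shows that the equality $\mathcal D=\overline{\mathrm{eBH}}^{\bm\lambda_\sigma}_\alpha$ for all $\sigma$ forces any asymmetric part of $\bm\lambda^A$ to be non-binding. Concretely, I aim to show that whenever a weighted constraint at $A$ is violated, the equal-weight constraint at $A$ or at some subset $B\subseteq A$ is also violated. The tool is comparing weighted and unweighted means of the smallest coordinates: the equal-weight mean over the worst sub-block is at most any weighted mean. If this redundancy argument is too weak in general, my fallback is to instead choose $\lambda_0^n$ as the largest constant for which the symmetric representation still contains $\mathcal D$, and then verify equality on the extreme inputs $\mathbf y$ described above.
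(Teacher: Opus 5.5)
Your ``if'' direction and the first half of the ``only if'' direction are sound. Each transported $\bm\lambda_\sigma$ represents $\mathcal D$. The average $\bar{\bm\lambda}$ is exactly the symmetric collection with $\lambda_0^n$ equal to the average of $\lambda_0^B$ over $|B|=n$, which is the collection the paper ends up with. Convexity of $\Gamma_{\mathbf e,R}$ then gives $\mathcal D\preceq_{\mathrm s}\overline{\mathrm{eBH}}_\alpha^{\bar{\bm\lambda}}$.

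The gap is the reverse inclusion, which is the real content of the statement, and the mechanism you propose for it does not work. A violated weighted constraint at $A$ cannot be transferred to an equal-weight constraint at some sub-block $B\subseteq A$ by comparing means of the smallest coordinates, because the thresholds $|B\cap R|/(\alpha|R|)$ shrink as $|B\cap R|$ shrinks. More fundamentally, no inequality of this kind can succeed without using the symmetry of $\mathcal D$. For instance, take $K=2$, $E_{\{1,2\}}=e_1$, $E_{\{i\}}=e_i$, $\mathbf e=(0,10/\alpha)$ and $R=\{2\}$. Then $R\in\overline{\mathrm{eBH}}_\alpha^{\bar{\bm\lambda}}(\mathbf e)\setminus\overline{\mathrm{eBH}}_\alpha^{\bm\lambda}(\mathbf e)$: a weighted constraint fails while every averaged constraint holds. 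Restricting to permutations that fix $A$ also cannot handle the constant term, since $\bar{\bm\lambda}$ averages $\lambda_0^B$ over different sets $B$ of the same size. The fallback is not a proof either. The map $\lambda\mapsto\lambda+(1-\lambda)\bar e_A$ is increasing or decreasing in $\lambda$ according to whether $\bar e_A<1$ or $\bar e_A>1$, so there is no reason that ``the largest constant still containing $\mathcal D$'' yields equality.

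The missing idea is a minimality argument (induction on $|A|$) that produces an all-or-none equivalence.
\begin{enumerate}[(a)]
\item Given $R\in\overline{\mathrm{eBH}}_\alpha^{\bar{\bm\lambda}}(\mathbf e)\setminus\mathcal D(\mathbf e)$, choose $A$ of minimal size such that the $\bm\lambda_\tau$-constraint at $A$ fails for some $\tau$. Then $\mathbf e^A$ is finite, and every $\bm\lambda_\tau$-constraint at every set of size less than $|A|$ holds.
\item Let $\mathbf y$ equal $\mathbf e$ on $A$ and $\infty$ off $A$. Since every $\bm\lambda_\tau$ represents $\mathcal D$, for each $\tau$ we get: $R\in\mathcal D(\mathbf y)$ if and only if the $\bm\lambda_\tau$-constraint at $A$ holds at $\mathbf e$.
\item Hence these $K!$ constraints hold together or fail together. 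One fails, so all fail.
\item Averaging the $K!$ strict inequalities, which is legitimate by linearity in the parameter for finite inputs, gives $\mathbb M_{\bar{\bm\lambda}^A}(\mathbf e^A)<\FDP_A(R)/\alpha$. This is a contradiction.
\end{enumerate}
This is precisely the mechanism of the paper's proof. The paper symmetrizes level by level in $|A|$. At each level it uses the padded input $\mathbf y$ and $\mathcal D(\mathbf y_\sigma)=\sigma^{-1}(\mathcal D(\mathbf y))$ to show that all permuted constraints of size $n$ are equivalent, so replacing them by their average leaves $\mathcal D$ unchanged. You had all the ingredients, but the step you wrote in place of this argument would not go through.
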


Using Proposition~\ref{prop:symmetric-e-collection},  it suffices to consider symmetric weighted-mean $\overline{\mathrm{eBH}}$ procedures with the e-collection in  \eqref{eq:symmetry_e_collection}. 
Next, we show that the class \eqref{eq:symmetry_e_collection} contains all admissible elements among symmetric simultaneous procedures.

\begin{proposition}\label{prop:symmetric-weighted-ebh-representation}
Let $\mathcal D\in\mathrm{SP}_\alpha^{\mathrm{sym}}$. There exists a symmetric weighted-mean $\overline{\mathrm{eBH}}$ procedure $\overline{\mathrm{eBH}}_\alpha^{\bm{\lambda}}$ such that $\mathcal D\preceq_{\mathrm s}\overline{\mathrm{eBH}}_\alpha^{\bm{\lambda}}$.    
\end{proposition}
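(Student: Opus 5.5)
We follow Step~1 of the proof of Theorem~\ref{thm:admissible-weighted-ebh-dominator}, adding symmetrization. By that step, for each nonempty $A\subseteq\cK$ the function
$$F_A((x_i)_{i\in A})=\sup_{\mathbf z}\sup_{R\in\mathcal D((x_i)_{i\in A},\mathbf z)}\frac{|A\cap R|}{\alpha(|R|\vee1)}$$
satisfies $\E[F_A(\mathbf X)]\le1$ for finitely supported $\mathbf X$ with $\E[X_i]\le 1$. Hence $F_A\le \mathbb M_{\bm\lambda^A}$ on finite inputs for some $\bm\lambda^A\in\Delta_{|A|+1}$. For a possibly asymmetric $\bm\lambda^A$, we then produce a symmetric bound, i.e., one of the form $\mathbb M^{\rm sym}_{\lambda_0^{|A|}}$.

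\textbf{Key observation.} Symmetry of $\mathcal D$ makes $F_A$ permutation invariant in its arguments. Let $\sigma$ range over permutations of $\cK$ mapping $A$ to itself. Then $\mathcal D(\mathbf e_\sigma)=\sigma^{-1}(\mathcal D(\mathbf e))$, and $|A\cap\sigma^{-1}(R)|=|\sigma(A)\cap R|=|A\cap R|$, so $F_A(\mathbf x_\sigma)=F_A(\mathbf x)$. Averaging the inequalities $F_A(\mathbf x_\sigma)\le \mathbb M_{\bm\lambda^A}(\mathbf x_\sigma)$ over all permutations of $A$ gives
$$F_A(\mathbf x)\le \lambda_0^A+\frac{1-\lambda_0^A}{|A|}\sum_{i\in A}x_i=\mathbb M^{\rm sym}_{\lambda_0^A}(\mathbf x).$$
Next we make the constant depend only on $|A|$. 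For $|A|=|A'|$, choose a permutation $\pi$ of $\cK$ with $\pi(A')=A$. Symmetry of $\mathcal D$ gives $F_{A'}(\mathbf x)=F_A(\mathbf x)$ after relabeling, because both are suprema of the same quantity over conjugate configurations. So we may fix one representative $A_n$ for each $n=|A|$ and set $\lambda_0^{n}=\lambda_0^{A_n}$. Then $F_A\le\mathbb M^{\rm sym}_{\lambda_0^{|A|}}$ for every $A$ with $|A|=n$.

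\textbf{Conclusion.} We define the e-collection \eqref{eq:symmetry_e_collection} with these constants. As in Step~1, if some $e_i=\infty$ with $i\in A$, the bound is automatic. Otherwise, for every $R\in\mathcal D(\mathbf e)$,
$$\mathbb M^{\rm sym}_{\lambda_0^{|A|}}(\mathbf e^A)\ge F_A(\mathbf e^A)\ge \FDP_A(R)/\alpha.$$
Hence $\mathcal D\preceq_{\rm s}\overline{\mathrm{eBH}}^{\bm\lambda}_\alpha$, and this procedure is symmetric by the discussion preceding Proposition~\ref{prop:symmetric-e-collection}.

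\textbf{Main obstacle.} The delicate step is the invariance $F_A(\mathbf x_\sigma)=F_A(\mathbf x)$ and the transfer between sets of equal size. The bookkeeping of $\sigma$ versus $\sigma^{-1}$, and the fact that the free coordinates $\mathbf z$ outside $A$ are permuted along with everything else, must be checked carefully. The supremum over $\mathbf z$ absorbs this, since permuting $A^c$ is a bijection on the $\mathbf z$'s. If direct averaging proves awkward, a fallback is to note that the set of admissible bounding vectors $\bm\lambda^A$ is convex and permutation invariant, so its barycenter under the permutation group is a symmetric member.
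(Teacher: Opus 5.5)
Your proposal is correct and follows essentially the same route as the paper. Both apply the Step~1 weighted-mean bound from Theorem~\ref{thm:admissible-weighted-ebh-dominator} to $F_A$, use symmetry of $\mathcal D$ to get permutation invariance of $F_A$, and average over permutations of $A$ to reach the bound $\mathbb M^{\mathrm{sym}}_{\lambda_0^A}$. Both then fix one representative set per size and transfer its constant to all sets of that size by relabeling; the paper handles the free coordinates $\mathbf z$ more tersely than your explicit use of the supremum.
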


Based on Proposition~\ref{prop:symmetric-weighted-ebh-representation}, the representation theorem for symmetric point procedures is obtained easily by viewing them as singleton-valued procedures.

\begin{proposition}\label{prop:symmetric-point-representation}
Let $\mathfrak D\in\mathrm{PP}_\alpha^{\mathrm{sym}}$. There exists a symmetric weighted-mean $\overline{\mathrm{eBH}}$ procedure $\overline{\mathrm{eBH}}_\alpha^{\bm\lambda}$ such that $\mathfrak D(\mathbf e)\in \overline{\mathrm{eBH}}_\alpha^{\bm\lambda}(\mathbf e)$ for all $\mathbf e\in[0,\infty]^K$. 
\end{proposition}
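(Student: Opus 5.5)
The plan is to reduce the point statement to the simultaneous one, Proposition~\ref{prop:symmetric-weighted-ebh-representation}. Given a symmetric point procedure $\mathfrak D\in\mathrm{PP}_\alpha^{\mathrm{sym}}$, define the simultaneous procedure $\mathcal D(\mathbf e)=\{\mathfrak D(\mathbf e)\}$. This is the identification described in Section~\ref{sec:setting}.

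First, I would check the two properties needed to invoke that proposition. Measurability of $\mathcal D$ is inherited from $\mathfrak D$, since $\mathcal D$ is just $\mathfrak D$ composed with the singleton map on the finite set $2^{\cK}$. FDR control is immediate because the maximum over a singleton collection is the FDP of that single set, so $\FDR_{\mathcal D}(\mathbf E)=\FDR_{\mathfrak D}(\mathbf E)\le\alpha$ for every $\p$, every $(\cP_1,\dots,\cP_K)$ and every e-vector $\mathbf E$; hence $\mathcal D\in\mathrm{SP}_\alpha$. For symmetry, for every permutation $\sigma$ we have
$$\mathcal D(\mathbf e_\sigma)=\{\mathfrak D(\mathbf e_\sigma)\}=\{\sigma^{-1}(\mathfrak D(\mathbf e))\}=\sigma^{-1}(\mathcal D(\mathbf e)),$$
where $\sigma^{-1}$ acts elementwise on a collection of sets. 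Thus $\mathcal D\in\mathrm{SP}_\alpha^{\mathrm{sym}}$.

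Next, I apply Proposition~\ref{prop:symmetric-weighted-ebh-representation}. It gives a symmetric weighted-mean $\overline{\mathrm{eBH}}_\alpha^{\bm\lambda}$ with $\mathcal D\preceq_{\mathrm s}\overline{\mathrm{eBH}}_\alpha^{\bm\lambda}$, meaning $\{\mathfrak D(\mathbf e)\}\subseteq\overline{\mathrm{eBH}}_\alpha^{\bm\lambda}(\mathbf e)$ for every $\mathbf e$. That is exactly $\mathfrak D(\mathbf e)\in\overline{\mathrm{eBH}}_\alpha^{\bm\lambda}(\mathbf e)$. By Proposition~\ref{prop:symmetric-e-collection}, the parameter can moreover be taken of the symmetric form \eqref{eq:symmetry_e_collection} if desired.

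There is no real obstacle here. The only points needing care are the convention that the FDR of a singleton-valued simultaneous procedure coincides with the point FDR, and that the permutation action on collections is elementwise; both follow directly from the definitions in Section~\ref{sec:setting}.
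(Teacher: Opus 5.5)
Your proposal is correct and is the argument the paper has in mind; the paper omits the proof, saying only that it follows by viewing $\mathfrak D$ as the singleton-valued simultaneous procedure $\mathbf e\mapsto\{\mathfrak D(\mathbf e)\}$ and applying Proposition~\ref{prop:symmetric-weighted-ebh-representation}. Your checks that the FDR is unchanged, that measurability carries over, and that the permutation acts elementwise on collections (so the singleton procedure lies in $\mathrm{SP}_\alpha^{\mathrm{sym}}$) are exactly the routine verifications the paper leaves implicit.
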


The representation respects symmetry for simultaneous procedures. If the original procedure treats all labels equally, then the dominating weighted-mean $\overline{\mathrm{eBH}}$ procedure can also be chosen to treat labels equally. However, the situation is different for point procedures. Specifically, we can find a dominating point weighted-mean $\overline{\mathrm{eBH}}$ procedure, but this procedure need not itself be symmetric. The reason is that a point weighted-mean $\overline{\mathrm{eBH}}$ procedure is defined to reject any maximal set of the corresponding weighted-mean $\overline{\mathrm{eBH}}$ procedure. This failure is illustrated by Example~\ref{ex:dominator-need-not-symmetric} in Appendix~\ref{app:counterexamples}.

\subsection{Mean closed eBH within the symmetric class}\label{sec:mean-procedures}

The mean $\overline{\mathrm{eBH}}$ procedure $\overline{\mathrm{eBH}}_\alpha^{\mathsf{m}}$ is the most commonly used symmetric procedure. By Theorem~\ref{thm:constant-free-weighted-ebh-admissible}, it is admissible in $\mathrm{SP}_\alpha$, and hence admissible in the smaller class $\mathrm{SP}_\alpha^{\mathrm{sym}}$, for every $\alpha\in(0,1)$. The following theorem further clarifies how to identify it within $\mathrm{SP}_\alpha^{\mathrm{sym}}$.


\begin{theorem}\label{thm:mean-ebh-symmetric-largest}
With respect to the class $\mathrm{SP}_\alpha^{\mathrm{sym}}$,
the following are equivalent:
\begin{enumerate}[(i)]

\item 
$\alpha<1/K$;
\item the mean $\overline{\mathrm{eBH}}$ procedure $\overline{\mathrm{eBH}}_\alpha^{\mathsf{m}}$ is the largest element under $\preceq_{\mathrm s}$; 
\item the mean $\overline{\mathrm{eBH}}$ procedure $\overline{\mathrm{eBH}}_\alpha^{\mathsf{m}}$ is the largest element under $\preceq_{\mathrm w}$; 
\item there exists a largest element under either  $\preceq_{\mathrm s}$ or $\preceq_{\mathrm w}$.
\end{enumerate}
\end{theorem}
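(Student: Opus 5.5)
We prove the cycle (i)$\Rightarrow$(ii)$\Rightarrow$(iii)$\Rightarrow$(iv)$\Rightarrow$(i), taking Propositions~\ref{prop:symmetric-e-collection} and~\ref{prop:symmetric-weighted-ebh-representation} and Theorem~\ref{thm:constant-free-weighted-ebh-admissible} as given.

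\textbf{(i)$\Rightarrow$(ii).} Let $\mathcal D\in\mathrm{SP}_\alpha^{\mathrm{sym}}$. By Proposition~\ref{prop:symmetric-weighted-ebh-representation} and Proposition~\ref{prop:symmetric-e-collection}, $\mathcal D\preceq_{\mathrm s}\overline{\mathrm{eBH}}_\alpha^{\bm\lambda}$ for some procedure built on the e-collection \eqref{eq:symmetry_e_collection} with constants $(\lambda_0^1,\dots,\lambda_0^K)$. By transitivity it suffices to show $\overline{\mathrm{eBH}}_\alpha^{\bm\lambda}\preceq_{\mathrm s}\overline{\mathrm{eBH}}_\alpha^{\mathsf m}$. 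Fix $R\in\overline{\mathrm{eBH}}_\alpha^{\bm\lambda}(\mathbf e)$ and $A\subseteq\cK$ with $A\cap R\neq\varnothing$, and assume $\mathbf e^A$ is finite; otherwise both means are $\infty$ and there is nothing to check. Put $c=\FDP_A(R)/\alpha$. Then
\[
c\ \ge\ \frac{1}{\alpha|R|}\ \ge\ \frac{1}{\alpha K}\ >\ 1,
\]
where the last inequality uses $\alpha<1/K$. The hypothesis gives $\lambda_0^{|A|}+(1-\lambda_0^{|A|})\bar e_A\ge c$. A convex combination of $1$ and $\bar e_A$ can reach a value $c>1$ only if $\bar e_A\ge c$ (and $\lambda_0^{|A|}<1$). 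Hence $\bar e_A\ge c$, so $R\in\overline{\mathrm{eBH}}_\alpha^{\mathsf m}(\mathbf e)$. Since $\overline{\mathrm{eBH}}^{\mathsf m}_\alpha$ is itself symmetric, it is the largest element of $\mathrm{SP}_\alpha^{\mathrm{sym}}$ under $\preceq_{\mathrm s}$.

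\textbf{(ii)$\Rightarrow$(iii)$\Rightarrow$(iv).} The first step holds because $\preceq_{\mathrm s}$ implies $\preceq_{\mathrm w}$ by Proposition~\ref{prop:dominance-orders}. The second step is immediate.

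\textbf{(iv)$\Rightarrow$(i).} Suppose $\alpha\ge 1/K$ and that a largest element $\mathcal G$ exists under $\preceq_{\mathrm s}$ or $\preceq_{\mathrm w}$. In either case $\overline{\mathrm{eBH}}^{\mathsf m}_\alpha\preceq_{\mathrm w}\mathcal G$. By Theorem~\ref{thm:constant-free-weighted-ebh-admissible}, $\overline{\mathrm{eBH}}^{\mathsf m}_\alpha$ is admissible in $\mathrm{SP}_\alpha$, so $\mathcal G=\overline{\mathrm{eBH}}^{\mathsf m}_\alpha$. It therefore suffices to exhibit a symmetric procedure $\mathcal D'\in\mathrm{SP}_\alpha^{\mathrm{sym}}$ that is not weakly dominated by the mean procedure. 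Take the symmetric weighted-mean procedure \eqref{eq:symmetry_e_collection} with $\lambda_0^1=1$ and $\lambda_0^k=0$ for $k\ge2$. It lies in $\mathrm{SP}_\alpha^{\mathrm{sym}}$ by the FDR guarantee of \cite{XSF25}. Evaluate both procedures at $\mathbf e=(0,\infty,\dots,\infty)$.
\begin{itemize}
\item For $\mathcal D'$: the only constraints involving finite e-values are $A=\{1\}$, where $E_{\{1\}}=1\ge 1/(\alpha K)=\FDP_{\{1\}}(\cK)/\alpha$ because $\alpha\ge 1/K$, and $A=\varnothing$, which is trivial. Every other $A$ contains a coordinate equal to $\infty$. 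Hence $\cK\in\mathcal D'(\mathbf e)$, so $\mathcal D'(\mathbf e)^\downarrow=2^{\cK}$.
\item For the mean procedure: taking $A=\{1\}$ gives $0<1/(\alpha K)$, so $\cK\notin\overline{\mathrm{eBH}}^{\mathsf m}_\alpha(\mathbf e)$.
\end{itemize}
Thus $\mathcal D'(\mathbf e)\not\subseteq\overline{\mathrm{eBH}}^{\mathsf m}_\alpha(\mathbf e)$. Also $\mathcal D'(\mathbf e)^\downarrow=2^{\cK}$ cannot be strictly contained in any downset. Both alternatives in Definition~\ref{def:dominance}(ii) fail, so $\mathcal D'\not\preceq_{\mathrm w}\overline{\mathrm{eBH}}^{\mathsf m}_\alpha$, and a fortiori $\mathcal D'\not\preceq_{\mathrm s}\overline{\mathrm{eBH}}^{\mathsf m}_\alpha$. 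This is a contradiction.

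The only delicate point is the inequality $c>1$ in (i)$\Rightarrow$(ii). It is exactly where the threshold $1/K$ enters, and the same threshold drives the counterexample in (iv)$\Rightarrow$(i). The rest is bookkeeping of the two orders.
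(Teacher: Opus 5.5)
Your proof is correct and follows essentially the same route as the paper. For (i)$\Rightarrow$(ii) you use the same convex-combination argument with $\FDP_A(R)/\alpha\ge 1/(\alpha K)>1$. For (iv)$\Rightarrow$(i) you use the same counterexample: your procedure with $\lambda_0^1=1$ and $\lambda_0^k=0$ for $k\ge2$ is the paper's $E'$ up to its value on infinite singletons, evaluated at $(0,\infty,\dots,\infty)$ and combined with the admissibility of the mean procedure from Theorem~\ref{thm:constant-free-weighted-ebh-admissible}. You also make explicit a step the paper leaves implicit, by noting that any largest element under either $\preceq_{\mathrm s}$ or $\preceq_{\mathrm w}$ weakly dominates the mean procedure.
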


\begin{proof}  
\underline{(i)$\Rightarrow$(ii)}: For any $\mathcal D\in\mathrm{SP}_\alpha^{\mathrm{sym}}$, Proposition~\ref{prop:symmetric-weighted-ebh-representation} gives a symmetric weighted-mean $\overline{\mathrm{eBH}}$ procedure $\overline{\mathrm{eBH}}_\alpha^{\bm{\lambda}}$ with constant terms $(\lambda_0^1,\dots,\lambda_0^K)\in [0,1]^K$ such that $\mathcal D\preceq_{\mathrm s} \overline{\mathrm{eBH}}_\alpha^{\bm{\lambda}}$. It suffices to show $\overline{\mathrm{eBH}}_\alpha^{\bm{\lambda}} \preceq_{\mathrm s}\overline{\mathrm{eBH}}_\alpha^{\mathsf{m}}$. For any $\mathbf e$, take $R\in\overline{\mathrm{eBH}}_\alpha^{\bm{\lambda}}(\mathbf e)$ and a nonempty $A\subseteq\cK$. If $A\cap R=\varnothing$ or $\max (\mathbf e^{A})=\infty$, there is nothing to prove. It remains to consider the case where $A\cap R\neq \varnothing$ and $e_i<\infty$ for all $i\in A$. When $\alpha<1/K$, it is easy to see $|A\cap R|/(\alpha |R|)\ge 1/(\alpha K)>1$. From 
$$\lambda_0^{|A|}+(1-\lambda_0^{|A|})\bar e_A\ge \frac{|A\cap R|}{\alpha |R|},$$
we must have $\lambda_0^{|A|}<1$, and then
$$\bar e_A\ge\frac{|A\cap R|/(\alpha |R|)-\lambda_0^{|A|}}{1-\lambda_0^{|A|}}\ge \frac{|A\cap R|}{\alpha |R|}.$$
Thus $R\in\overline{\mathrm{eBH}}_\alpha^{\mathsf{m}}(\mathbf e)$. Therefore $\overline{\mathrm{eBH}}_\alpha^{\bm{\lambda}}\preceq_{\mathrm s}\overline{\mathrm{eBH}}_\alpha^{\mathsf{m}}$, so $\overline{\mathrm{eBH}}_\alpha^{\mathsf{m}}$ is largest.

The implications \underline{(ii)$\Rightarrow$(iii)$\Rightarrow$(iv)} are trivial. 

\underline{(iv)$\Rightarrow$(i)}: Suppose $\alpha\ge1/K$. Define
$$
E'_A(\mathbf e)
=
\begin{cases}
1, & |A|=1,\\
\bar e_A, & |A|\ge2\text{ and }e_i<\infty\text{ for all }i\in A,\\
\infty, & |A|\ge2\text{ and }e_i=\infty\text{ for some }i\in A.
\end{cases}
$$
Let $\mathcal D'$ be the $\overline{\mathrm{eBH}}$ procedure based on the e-collection $E'_A$. Then $\mathcal D'\in\mathrm{SP}_\alpha^{\mathrm{sym}}$. Choose $\mathbf y=(0,\infty,\dots,\infty)$. We claim $\cK \in\mathcal D'(\mathbf y)$. When $|A|=1$, we have $E'_A(\mathbf y)=1\ge 1/ (\alpha K)$. When $|A|\ge2$, $E'_A(\mathbf y)\ge |A| / (\alpha K)$ is straightforward because $E'_A(\mathbf y)=\infty$. But $\cK \notin\overline{\mathrm{eBH}}_\alpha^{\mathsf{m}}(\mathbf y)$, because taking $A=\{1\}$ we have $\bar e_A=0<1/ (\alpha K)$. Thus $\mathcal D'\not\preceq_{\mathrm w}\overline{\mathrm{eBH}}_\alpha^{\mathsf{m}}$, so $\overline{\mathrm{eBH}}_\alpha^{\mathsf{m}}$ is not the largest under $\preceq_{\rm w}$.
Because 
$\overline{\mathrm{eBH}}_\alpha^{\mathsf{m}}$ is admissible within  $\mathrm{SP}_\alpha$ (Theorem~\ref{thm:constant-free-weighted-ebh-admissible}),
no procedure in  $\mathrm{SP}_\alpha^{\mathrm{sym}}$ strictly weakly dominates $\overline{\mathrm{eBH}}_\alpha^{\mathsf{m}}$. Therefore,
there exists no largest element in the class $\mathrm{SP}_\alpha^{\mathrm{sym}}$ under $\preceq_{\rm w}$. Therefore, (iv) implies $\alpha<1/K$.
\end{proof}


Unlike $\overline{\mathrm{eBH}}_\alpha^{\mathsf{m}}$, the characterization of the commonly used point procedure $\overline{\mathrm{ebh}}_\alpha^{\mathsf{m}}$ must be split into two cases. The reason for this is as follows:
although a point mean $\overline{\mathrm{eBH}}$ procedure generally depends on the choice of a maximal set, this choice is always unique when $K=2$: if both $\{1\}$ and $\{2\}$ belong to $\overline{\mathrm{eBH}}_\alpha^{\mathsf m}(\mathbf e)$, then $\{1,2\}\in\overline{\mathrm{eBH}}_\alpha^{\mathsf m}(\mathbf e)$ necessarily holds. Clearly, this uniqueness may fail for $K\ge 3$, leading to a different situation.


\begin{proposition}\label{prop:symmetric-point-largest}
Let $\alpha\in (0,1)$ and consider the class $\mathrm{PP}_\alpha^{\mathrm{sym}}$.
\begin{enumerate}[(i)]
\item If  $\alpha\ge (K-1)/K$,  $\mathrm{PP}_\alpha^{\mathrm{sym}}$ has the largest element given by 
\begin{equation}\label{eq:ppsym-max}
\mathfrak D_\alpha(\mathbf e)
=
\begin{cases}
\mathcal K, & \sum_{i=1}^K e_i\ge K/\alpha,\\
\varnothing, & \sum_{i=1}^K e_i<K/\alpha.
\end{cases}
\end{equation} 
\item For $K=2$, $\mathrm{PP}_\alpha^{\mathrm{sym}}$ 
has the largest element $\overline{\mathrm{ebh}}_\alpha^{\mathsf{m}}$  if and only if $\alpha<1/2$.

\item For $K\ge 3$, $\mathrm{PP}_\alpha^{\mathrm{sym}}$ has no largest element if  $\alpha < (K-1)/K$. 
\end{enumerate}
\end{proposition}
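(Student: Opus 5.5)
The plan is to treat the three parts separately. Part~(i) is a direct Markov-type argument. Part~(ii) reduces to Proposition~\ref{prop:symmetric-point-representation} together with the comparison argument from the proof of Theorem~\ref{thm:mean-ebh-symmetric-largest}. Part~(iii) is a construction of incompatible symmetric procedures.

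\emph{Part (i).} First I check $\mathfrak D_\alpha\in\mathrm{PP}_\alpha^{\mathrm{sym}}$; symmetry is clear. If $|N|\le K-1$, then $\FDP_N(\mathfrak D_\alpha(\mathbf E))\le (K-1)/K\le\alpha$ pointwise. If $N=\cK$, the FDR equals $\p(\sum_i E_i\ge K/\alpha)$, and Markov's inequality gives $\p(\sum_i E_i\ge K/\alpha)\le \alpha\,\E[\sum_i E_i]/K\le\alpha$. For maximality, take any $\mathfrak D\in\mathrm{PP}_\alpha^{\mathrm{sym}}$ and suppose $\mathfrak D(\mathbf e)\neq\varnothing$ while $s=\sum_i e_i<K/\alpha$. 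Let all hypotheses be null and let $\mathbf X$ equal $\mathbf e_\sigma$ with probability $p/K!$ for each permutation $\sigma$, and $\mathbf 0$ otherwise, with $p=\min(1,K/s)$. Then $\E[X_i]=p s/K\le1$. By symmetry, $\mathfrak D(\mathbf e_\sigma)=\sigma^{-1}(\mathfrak D(\mathbf e))\neq\varnothing$, so the FDP equals $1$ on each of these points and the FDR is at least $p>\alpha$, a contradiction. Hence $\mathfrak D(\mathbf e)\neq\varnothing$ forces $\sum_i e_i\ge K/\alpha$, i.e.\ $\mathfrak D\subseteq\mathfrak D_\alpha$. (Infinite coordinates give $\sum_i e_i=\infty$ and cause no issue.)

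\emph{Part (ii).} Let $K=2$ and $\alpha<1/2$, and take $\mathfrak D\in\mathrm{PP}_\alpha^{\mathrm{sym}}$. Proposition~\ref{prop:symmetric-point-representation} gives constants $(\lambda_0^1,\lambda_0^2)$ with $\mathfrak D(\mathbf e)\in\overline{\mathrm{eBH}}_\alpha^{\bm\lambda}(\mathbf e)$. Every nontrivial threshold $|A\cap R|/(\alpha|R|)$ is at least $1/(2\alpha)>1$. The argument of (i)$\Rightarrow$(ii) in Theorem~\ref{thm:mean-ebh-symmetric-largest} therefore applies verbatim and yields $\overline{\mathrm{eBH}}_\alpha^{\bm\lambda}\preceq_{\mathrm s}\overline{\mathrm{eBH}}_\alpha^{\mathsf m}$. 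For $K=2$, the remark before the statement shows that the maximal set of $\overline{\mathrm{eBH}}_\alpha^{\mathsf m}(\mathbf e)$ is unique and contains every member. Hence $\mathfrak D(\mathbf e)\subseteq\overline{\mathrm{ebh}}_\alpha^{\mathsf m}(\mathbf e)$, and this unique point procedure is symmetric.

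For the converse, assume $\alpha\ge1/2=(K-1)/K$. By part~(i) the largest element is $\mathfrak D_\alpha$, and it differs from $\overline{\mathrm{ebh}}_\alpha^{\mathsf m}$. Take $\mathbf e=(1/\alpha,0)$. The mean procedure rejects $\{1\}$, since the constraints read $e_1\ge1/\alpha$ and $\bar e\ge1/(2\alpha)$. But $\mathfrak D_\alpha(\mathbf e)=\varnothing$ because $e_1+e_2<2/\alpha$. So $\overline{\mathrm{ebh}}_\alpha^{\mathsf m}$ is not largest; by uniqueness of a largest element, no other conclusion is possible.

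\emph{Part (iii).} This is the main obstacle. Let $K\ge3$ and $\alpha<(K-1)/K$. The plan is to exhibit two valid symmetric point procedures $\mathfrak D^{1},\mathfrak D^{2}$ such that any common upper bound $\mathfrak L$ must reject $\cK$ on the permutation orbits of two points $\mathbf v^1,\mathbf v^2$. I then want a null configuration $N$ that mixes these orbits with FDR exceeding $\alpha$, mirroring the (iv)$\Rightarrow$(i) step of Theorem~\ref{thm:mean-ebh-symmetric-largest}. A natural first try takes points with $\infty$ on non-null coordinates and a single small null coordinate, e.g.\ $\mathbf v^1=(0,\infty,\dots,\infty)$, which has FDP $1/K$ with $N=\{1\}$. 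For a second point I would take $\mathbf v^2=(x,\dots,x,\infty)$ with $K-1$ finite coordinates, which has FDP $(K-1)/K>\alpha$ with $N=\{1,\dots,K-1\}$; here I would treat the case $\alpha<1/K$ separately. Each procedure alone is valid because its bad event either has small FDP or has probability controlled by Markov. Their union, however, lets the null mass be shifted between the two orbits so that the FDR exceeds $\alpha$.

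If such a single pair fails across the whole range $\alpha<(K-1)/K$, I would fall back on the non-uniqueness of maximal sets for $K\ge3$. Concretely, I would use two point versions of $\overline{\mathrm{eBH}}_\alpha^{\mathsf m}$, or of a symmetric constant-term variant, that select incomparable maximal sets at a tie point. Symmetry forces each to select consistently along orbits, and a largest element would then have to reject the union, which is not certified. The delicate part is choosing parameters so that each procedure is genuinely in $\mathrm{PP}_\alpha^{\mathrm{sym}}$ while the combined rejections violate the FDR constraint.
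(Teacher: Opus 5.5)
Your part (i) and the ``if'' half of (ii) follow the paper's argument. Part (iii), the substantive part, is not proved, and your first try cannot work for any $\alpha$. Null e-values are a.s.\ finite, so the orbit of $\mathbf v^1=(0,\infty,\dots,\infty)$ can carry mass only under null sets with $|N|\le 1$. There, rejecting $\mathcal K$ on either orbit has FDP at most $1/K$. Under $|N|\ge 2$, only the orbit of $\mathbf v^2$ is reachable. Hence for $\alpha\ge 1/K$, rejecting $\mathcal K$ on both orbits controls FDR exactly when rejecting it on the orbit of $\mathbf v^2$ alone does, so there is no null mass to shift between the orbits. For $\alpha<1/K$, no valid procedure can output a set containing $1$ at $\mathbf v^1$ (take $N=\{1\}$ and $\mathbf X\equiv\mathbf v^1$).

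Your fallback points in the right direction: two symmetric procedures choosing incomparable sets at the same input. As stated, however, it breaks. At an input with tied coordinates, symmetry forces the output to be invariant under the stabilizer, so you cannot output $\mathcal K\setminus\{1\}$ when $e_1=e_2$. Combined with maximality, this is why point versions of $\overline{\mathrm{eBH}}^{\mathsf m}_\alpha$ need not be symmetric. Also, the union not being certified by $\overline{\mathrm{eBH}}^{\mathsf m}_\alpha$ is not by itself an FDR violation.

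The missing construction (the paper's) is as follows.
\begin{itemize}
\item[(a)] Take $\mathbf y=(y_1,\dots,y_{K-1},\infty)$ with \emph{distinct} $y_i\in\bigl((K-2)/(\alpha(K-1)),\,(K-1)/(\alpha K)\bigr)$. This interval is nonempty because $K(K-2)<(K-1)^2$.
\item[(b)] Then $R_j=\mathcal K\setminus\{j\}$, $j=1,2$, both lie in $\overline{\mathrm{eBH}}^{\mathsf m}_\alpha(\mathbf y)$. Only sets $A$ with $K\notin A$ matter, and for those $\FDP_A(R_j)\le (K-2)/(K-1)$.
\item[(c)] Define $\mathfrak D_j(\mathbf y_\sigma)=\sigma^{-1}(R_j)$ on the orbit of $\mathbf y$, which is well defined because $\mathbf y$ has trivial stabilizer, and $\mathfrak D_j=\varnothing$ elsewhere. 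These are symmetric and control FDR because their outputs always lie in $\overline{\mathrm{eBH}}^{\mathsf m}_\alpha$; no maximality is needed.
\item[(d)] A largest element must output $R_1\cup R_2=\mathcal K$ at $\mathbf y$. Take $N=\{1,\dots,K-1\}$, and let $\mathbf X=\mathbf y$ with probability $p=\min\{1,1/\max_i y_i\}$ and $\mathbf X=\mathbf 0$ otherwise. Then $\E[X_i]\le 1$ for $i\in N$, and the FDR is at least $p(K-1)/K>\alpha$. No separate case $\alpha<1/K$ is needed.
\end{itemize}

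Separately, your witness for the ``only if'' half of (ii) is miscomputed. For $R=\{1\}$ and $A=\{1,2\}$ one has $\FDP_A(R)=1$, so the mean procedure rejects nothing at $(1/\alpha,0)$; otherwise part (i) would be contradicted. Use $(2/\alpha,0)$ instead, where $\overline{\mathrm{ebh}}^{\mathsf m}_\alpha$ outputs $\{1\}$ and $\mathfrak D_\alpha$ outputs $\{1,2\}$.
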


We can directly check that the point procedure  $\mathfrak D_\alpha$ in \eqref{eq:ppsym-max} has FDR control at $\alpha$ only when $\alpha \ge (K-1)/K$.

For a symmetric simultaneous procedure $\mathcal D$ and every $\mathbf e\in[0,\infty]^K$, define 
$$
\mathcal D^{\mathrm{inv}}(\mathbf e)
=
\left\{
R\in\mathcal D(\mathbf e):
\sigma(R)=R
\text{ for every permutation }\sigma \text{ on } \cK
\text{ with }\mathbf e_\sigma=\mathbf e
\right\}.
$$
We discuss below some subtle issues on symmetric point weighted-mean $\overline{\mathrm{eBH}}$ procedures. 

\begin{itemize}

\item[(i)] \textbf{Incompleteness.}
As Example~\ref{ex:dominator-need-not-symmetric} shows, the class of symmetric point weighted-mean $\overline{\mathrm{eBH}}$ procedures does not form a complete class for symmetric point procedures.

\item[(ii)] \textbf{Maximality versus symmetry.}
Suppose that a procedure $\overline{\mathrm{ebh}}_\alpha^{\bm \lambda}$ is symmetric. For every permutation $\sigma$ on $\cK$ with $\mathbf e_\sigma=\mathbf e$, we have $\overline{\mathrm{ebh}}_\alpha^{\bm \lambda}(\mathbf e)=\overline{\mathrm{ebh}}_\alpha^{\bm \lambda}(\mathbf e_\sigma)=\sigma^{-1}(\overline{\mathrm{ebh}}_\alpha^{\bm \lambda}(\mathbf e))$. Hence $\overline{\mathrm{ebh}}_\alpha^{\bm \lambda}(\mathbf e)\in(\overline{\mathrm{eBH}}_\alpha^{\bm \lambda})^{\mathrm{inv}}(\mathbf e)$. In particular, if $e_i=e_j$, then symmetry does not allow the procedure to include exactly one of $i$ and $j$. This may violate the definition that $\overline{\mathrm{ebh}}_\alpha^{\bm \lambda}$ rejects one maximal set of $\overline{\mathrm{eBH}}_\alpha^{\bm \lambda}$, which is the failure mechanism behind Example~\ref{ex:dominator-need-not-symmetric}.

\item[(iii)] \textbf{Monotonicity.}
Let $K=3$. Choose $1 / 2 <\alpha x<\alpha y< 2/ 3$, and set $\mathbf u=(y,x,\infty)$ and $\mathbf v=(y,y,\infty)$. We have $\mathbf u\le\mathbf v$ and, by direct verification, $\overline{\mathrm{eBH}}_\alpha^{\mathsf{m}}(\mathbf u)=\overline{\mathrm{eBH}}_\alpha^{\mathsf{m}}(\mathbf v)=\{\varnothing,\{3\},\{1,3\},\{2,3\}\}$. Then the sets of the maximal sets of $(\overline{\mathrm{eBH}}_\alpha^{\mathsf{m}})^{\mathrm{inv}}(\mathbf u)$ and $(\overline{\mathrm{eBH}}_\alpha^{\mathsf{m}})^{\mathrm{inv}}(\mathbf v)$ are $\{\{1,3\},\{2,3\}\}$ and $\{\{3\}\}$, respectively. Therefore increasing a coordinate can force the symmetric procedure to discard a rejection. Hence symmetry comes at a cost of monotonicity of the point  procedures.

\end{itemize}

These issues highlight that symmetric point weighted-mean $\overline{\mathrm{eBH}}$ procedures are not a natural class of procedures in theory. The core issue is that maximality conflicts with symmetry. Therefore, we will  only consider symmetry within weighted-mean simultaneous procedures in the rest of this section.


\subsection{A rule of thumb for choosing constant terms}\label{sec:constant-term}

The constant terms in symmetric weighted-mean $\overline{\mathrm{eBH}}$ procedures deserve careful thought because they offer a simple way to modify conservativeness. They cannot be chosen freely; for example, a large constant can make the e-collection less responsive to actual large input e-values, thereby yielding a procedure that is inadmissible. In this subsection, we provide a rule of thumb for choosing the constant terms $\bm{\lambda}$ in a symmetric $\overline{\mathrm{eBH}}_\alpha^{\bm{\lambda}}$, without any side information as in Section \ref{sec:prior-score-weights}. 

We first give a clean sufficient condition on the sequence of constant terms to make the corresponding symmetric weighted-mean $\overline{\mathrm{eBH}}$ procedures admissible.

\begin{theorem}\label{thm:constant-term-admissibility-criterion}
For $\alpha\in(0,1)$, if
\begin{equation}\label{eq:simple-constant-cap} 
\lambda_0^K\le\cdots\le\lambda_0^2\le\lambda_0^1 \le \frac{1}{\alpha(\lfloor1/\alpha\rfloor+1)}, 
\end{equation} 
\begin{equation}\label{eq:simple-constant-tail} 
\lambda_0^n=0 \text{ for all } n>K-\lfloor1/\alpha\rfloor, 
\end{equation} 
and 
\begin{equation}\label{eq:simple-constant-smooth} 
\lambda_0^{n-1}-\lambda_0^n < \frac{1-\lambda_0^n}{n(1-\alpha)} \text{ for all } n\ge 2,
\end{equation}
then the corresponding procedure $\overline{\mathrm{eBH}}_\alpha^{\bm{\lambda}}$ is admissible in $\mathrm{SP}_\alpha$.
\end{theorem}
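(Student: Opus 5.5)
The plan is to follow the proof of Theorem~\ref{thm:constant-free-weighted-ebh-admissible}. Suppose $\mathcal G\in\mathrm{SP}_\alpha$ satisfies $\overline{\mathrm{eBH}}_\alpha^{\bm\lambda}\preceq_{\mathrm w}\mathcal G$. We aim to show $\mathcal G\preceq_{\mathrm s}\overline{\mathrm{eBH}}_\alpha^{\bm\lambda}$; admissibility then follows by Proposition~\ref{prop:strong-weak-admissibility-equivalence}. Suppose this fails. Then there are $\mathbf e$, a set $R\in\mathcal G(\mathbf e)\setminus\overline{\mathrm{eBH}}_\alpha^{\bm\lambda}(\mathbf e)$ and a nonempty witness $A$ with $|A|=n$, $\mathbf e^A$ finite, and
\[
\lambda_0^n+(1-\lambda_0^n)\bar e_A<\frac{|A\cap R|}{\alpha|R|}.
\]
We declare $A$ to be the true null set. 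We then build finitely supported $\mathbf X$ with $\mathbb E[X_i]=1$ for $i\in A$ and $\mathrm{FDR}_{\mathcal G}(\mathbf X)>\alpha$. The atoms are $\mathbf e$ with small probability $p$, plus some ``bad'' points $\mathbf y$ with $\mathbf y^{A^c}=\infty$. At each bad point we exhibit a set $R_{\mathbf y}\in\overline{\mathrm{eBH}}_\alpha^{\bm\lambda}(\mathbf y)$ that is comparable with $\cK$, so that weak dominance forces $R_{\mathbf y}\in\mathcal G(\mathbf y)$, exactly as in the constant-free proof.

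The difference from the constant-free case is bookkeeping. By linearity, $\mathbb E[\mathbb M_{\bm\lambda^A}(\mathbf X^A)]=1$. The constant-free argument gains because every atom other than $\mathbf e$ contributes $\FDP_A\ge\alpha\,\mathbb M_{\bm\lambda^A}$, with equality on the bad point. The zero atom contributed $0=\alpha\,\mathbb M_{\bm\lambda^A}(\mathbf 0)$ only because $\lambda_0^n=0$. I therefore split into two cases.

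\textbf{Case $\lambda_0^n=0$.} Reuse the earlier construction verbatim, with $y_i=(1-pe_i)/q$ on $A$ and $R_{\mathbf y}=A$. For $B\subsetneq A$ we need $\lambda_0^{|B|}+(1-\lambda_0^{|B|})\bar y_B\ge |B|/(\alpha n)$. This holds for small $p$, because $\bar y_B\approx 1/\alpha$ and $|B|/n<1$.

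\textbf{Case $\lambda_0^n>0$.} Condition \eqref{eq:simple-constant-tail} gives $n\le K-\lfloor1/\alpha\rfloor$. Hence there are $m=\lfloor 1/\alpha\rfloor$ spare indices outside $A$, which we may append to $A$. This gives candidate sets $R_c=A\cup C$ with $|C|=c\in\{0,\dots,m\}$ and $\FDP_A(R_c)=n/(n+c)$. Here I would try to eliminate the wasteful zero atom. Instead of mixing with $\mathbf 0$, I would put all remaining mass on bad points $\mathbf y^{(c)}$ whose $A$-coordinates equal a common value $t_c$. The values $t_c$ are chosen so that
\[
\lambda_0^n+(1-\lambda_0^n)t_c=\frac{n}{\alpha(n+c)},
\]
and the probabilities $q_c$ are chosen so that $\sum_c q_c t_c$ matches the remaining mean $1-pe_i$, perturbed coordinatewise to absorb $pe_i$. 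Every bad atom then has $\FDP_A=\alpha\,\mathbb M_{\bm\lambda^A}$ exactly, and the total mass is $1$. The FDR computation of the constant-free proof then gives $\mathrm{FDR}>\alpha$. This requires the reachable means $t_c$ to bracket the value $1$ (one mean below $1$, one above), so that a convex combination with total probability $1-p$ has mean $\approx1$. Taking $c=m$ gives $n/(\alpha(n+m))$, and with $m=\lfloor1/\alpha\rfloor$ one needs $\lambda_0^n$ small enough that $t_m\le 1$. This is where the cap $\lambda_0^1\le 1/(\alpha(\lfloor1/\alpha\rfloor+1))$ together with monotonicity \eqref{eq:simple-constant-cap} should enter, at least for $n=1$; for larger $n$ I expect to use subsets or more spare indices. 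Taking $c=0$ gives a mean exceeding $1$.

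\textbf{Verifying membership.} We must check $R_c\in\overline{\mathrm{eBH}}_\alpha^{\bm\lambda}(\mathbf y^{(c)})$ and that $R_c\cup$(rest of $A^c$) closes up to $\cK$. Only the constraints for $B\subseteq A$ are nontrivial:
\[
\lambda_0^{|B|}+(1-\lambda_0^{|B|})t_c\ge \frac{|B|}{\alpha(n+c)}.
\]
At $B=A$ this holds with equality. Going from $|B|=k$ to $k-1$, the left side may decrease while the right side drops by $1/(\alpha(n+c))$. The smoothness condition \eqref{eq:simple-constant-smooth}, $\lambda_0^{k-1}-\lambda_0^k<(1-\lambda_0^k)/(k(1-\alpha))$, is what I expect guarantees this step, with strict inequality leaving room for the $O(p)$ perturbation. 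I anticipate the main obstacle to be this simultaneous calibration: finding $t_c$ and $q_c$ that exactly exhaust the probability mass while keeping every subset constraint valid, for every $n$. I would first settle $n=1$ and $n=K-\lfloor1/\alpha\rfloor$ explicitly, then induct downward on $|B|$ using \eqref{eq:simple-constant-smooth}.
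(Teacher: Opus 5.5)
Your skeleton matches the paper's. You split on whether $\lambda_0^n=0$ and reuse the constant-free construction verbatim when it is. When $\lambda_0^n>0$, you drop the zero atom and mix $\mathbf e$ with a high point (all $A$-coordinates near $a=(1/\alpha-\lambda_0^n)/(1-\lambda_0^n)$, certifying $A$ and $\cK$) and a low point. For $n=1$ your sets $R_0=A$ and $R_m$ (with $|R_m|=\lfloor1/\alpha\rfloor+1$) are exactly the paper's. There the cap is what makes $t_m\ge0$; the bound $t_m\le1$ is automatic.

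For $n\ge2$, however, Case~2 has a genuine gap that no choice of supersets, subsets or spare indices repairs within your scheme. Because $\mathbb E[\mathbb M_{\bm\lambda^A}(\mathbf X^A)]=1$ and the high point has $\mathbb M_{\bm\lambda^A}=1/\alpha$, you need a low atom $\mathbf y$ with $\mathbb M_{\bm\lambda^A}(\mathbf y^A)<1$. Condition \eqref{eq:simple-constant-tail} allows $n/K>\alpha$ (e.g.\ $\alpha=0.1$, $K=12$, $n=2$). In that case, at such a $\mathbf y$ the constraint indexed by $A$ fails for $\cK$ and for every superset of $A$, since it would require $\mathbb M_{\bm\lambda^A}(\mathbf y^A)\ge n/(\alpha K)>1$. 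So your $R_c\supseteq A$ are unavailable there. For any certified set missing part of $A$, $\cK$ is not certified at $\mathbf y$, so weak dominance no longer forces $\overline{\mathrm{eBH}}_\alpha^{\bm\lambda}(\mathbf y)\subseteq\mathcal G(\mathbf y)$.

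The missing idea is to use the other branch of weak dominance. Let $\tau=1/(\alpha(\lfloor1/\alpha\rfloor+1))$. Pick $L$ with $|L|=\lfloor1/\alpha\rfloor+1$ and $|A\cap L|=1$, which \eqref{eq:simple-constant-tail} permits. Let $\mathbf y^-$ have all $A$-coordinates equal to $h=(\tau-\lambda_0^n)/(1-\lambda_0^n)\in[0,1)$ and $\infty$ elsewhere. Since $\lambda_0^b\ge\lambda_0^n$ and $h<1$, you get $\lambda_0^b+(1-\lambda_0^b)h\ge\tau$ for every $b\le n$. Hence \emph{every} $S$ with $|A\cap S|/(\alpha|S|)\le\tau$ is certified at $\mathbf y^-$. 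Now there are two cases. Either $\overline{\mathrm{eBH}}_\alpha^{\bm\lambda}(\mathbf y^-)\subseteq\mathcal G(\mathbf y^-)$, so $L\in\mathcal G(\mathbf y^-)$. Or $\mathcal G(\mathbf y^-)$ contains a set outside $\overline{\mathrm{eBH}}_\alpha^{\bm\lambda}(\mathbf y^-)^\downarrow$, whose $\FDP_A$ must then exceed $\alpha\tau$. In both cases $\max_{S\in\mathcal G(\mathbf y^-)}\FDP_A(S)\ge\alpha\tau=\alpha\,\mathbb M_{\bm\lambda^A}((y^-_i)_{i\in A})$. One high point plus this single low point then finish the FDR computation; no family $t_c$ is needed.

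Separately, your Case~1 justification is incorrect. Even when $\lambda_0^n=0$, $\lambda_0^b$ can be positive for $b<n$. Then $\lambda_0^b+(1-\lambda_0^b)/\alpha>b/(\alpha n)$ holds exactly when $\lambda_0^b<(n-b)/(n(1-\alpha))$. This bound, and the subset constraints at the high point, follow from the pairwise bound $\lambda_0^b<\lambda_0^n+\frac{n-b}{n}\frac{1-\lambda_0^n}{1-\alpha}$. That bound is obtained by iterating \eqref{eq:simple-constant-smooth} multiplicatively as $1-\lambda_0^{j-1}>(1-\lambda_0^j)\bigl(1-\frac{1}{j(1-\alpha)}\bigr)$, with the cap handling $b\le\alpha n$. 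A step-by-step comparison of the two sides' decrements, as you suggest, does not suffice. The left side can drop by more than $1/(\alpha n)$ in one step while the constraint still holds.
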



By Proposition~\ref{prop:esc-dominated-by-weighted-ebh}, we have $\mathrm{eSC}_\alpha\preceq_{\mathrm s}\overline{\mathrm{eBH}}_\alpha^{\bm{\lambda}}$ if and only if, for all $n\in \cK$,
$$\lambda_0^n \le \frac{K-n}{K-\alpha n}.$$
For a given $\beta\in (0,1]$, define a sequence of constant terms
\begin{equation*}
\lambda_0^n(\beta)
=
\begin{cases}
\dfrac{\beta K-n}{\beta K-\alpha n}, & 1\le n<\beta K,\\[10pt]
0, & \beta K\le n\le K,
\end{cases}
\end{equation*}
Let $\overline{\mathrm{eBH}}_\alpha^{\bm{\lambda}(\beta)}$ be the corresponding symmetric weighted-mean $\overline{\mathrm{eBH}}$ procedure. Naturally, we can extend this notation by defining $\overline{\mathrm{eBH}}_\alpha^{\bm{\lambda}(0)}$ to be the mean $\overline{\mathrm{eBH}}$ procedure.
Since $(\beta K-n) / (\beta K-\alpha n) \le (K-n) / (K-\alpha n)$ for every $n<\beta K$, we have $\mathrm{eSC}_\alpha\preceq_{\mathrm s}\overline{\mathrm{eBH}}_\alpha^{\bm{\lambda}(\beta)}$ for every $\beta\in (0,1]$. However, the one-parameter family $\overline{\mathrm{eBH}}_\alpha^{\bm{\lambda}(\beta)}$, although dominating $\mathrm{eSC}_\alpha$, is not necessarily admissible. In contrast to the sufficient condition in Theorem~\ref{thm:constant-term-admissibility-criterion}, this family has a sharp admissibility bound. Let
\begin{equation}
\label{eq:beta-star-explicit}
\beta^*
=
\frac1K
\min\left\{
\frac{\alpha \lfloor 1/\alpha\rfloor}
{\alpha(\lfloor 1/\alpha\rfloor+1)-1},
\,
\max\left\{1,K-\left\lfloor \frac1\alpha\right\rfloor+1\right\}
\right\}.
\end{equation}
For example, if $1/\alpha$ is an integer and $K\ge 2/\alpha-1$, then $\beta^*=1 / (\alpha K)$. 


\begin{proposition}\label{prop:beta-threshold-admissibility}
For $\alpha\in(0,1)$, $\overline{\mathrm{eBH}}_\alpha^{\bm{\lambda}(\beta)}$ is admissible in $\mathrm{SP}_\alpha$ if and only if $\beta\le\beta^*$. 
\end{proposition}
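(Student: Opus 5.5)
The plan is to prove the two directions separately. The ``if'' direction reduces to Theorem~\ref{thm:constant-term-admissibility-criterion}. The ``only if'' direction is an explicit strict improvement, certified to lie in $\mathrm{SP}_\alpha$ by exhibiting it as another weighted-mean $\overline{\mathrm{eBH}}$ procedure, which is valid by \cite{XSF25}. Throughout write $m=\lfloor 1/\alpha\rfloor$, so that $\alpha(m+1)>1$. Set $x=\beta K$.

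\textbf{Sufficiency.} For $\beta\le\beta^*$, I verify the three conditions of Theorem~\ref{thm:constant-term-admissibility-criterion} for $\lambda_0^n(\beta)=(x-n)/(x-\alpha n)$, $n<x$.
\begin{enumerate}[(a)]
\item \emph{Monotonicity in \eqref{eq:simple-constant-cap}.} The map $n\mapsto (x-n)/(x-\alpha n)$ is decreasing on $[0,x)$ because $\alpha<1$, and the sequence then drops to $0$.
\item \emph{Cap in \eqref{eq:simple-constant-cap}.} The condition $\lambda_0^1=(x-1)/(x-\alpha)\le 1/(\alpha(m+1))$ rearranges to $x(\alpha(m+1)-1)\le \alpha m$. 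This is exactly the first term inside the minimum in \eqref{eq:beta-star-explicit}.
\item \emph{Tail condition \eqref{eq:simple-constant-tail}.} It holds as soon as $x\le K-m+1$, since then $\lambda_0^n=0$ for all $n\ge x$, which includes every $n>K-m$. The $\max\{1,\cdot\}$ in \eqref{eq:beta-star-explicit} covers $K<m$: then $x\le1$, and every constant vanishes.
\item \emph{Smoothness \eqref{eq:simple-constant-smooth}.} For $2\le n<x$, the difference is
$$\lambda_0^{n-1}-\lambda_0^n=\frac{x(1-\alpha)}{(x-\alpha n+\alpha)(x-\alpha n)},\qquad 1-\lambda_0^n=\frac{n(1-\alpha)}{x-\alpha n}.$$
The inequality therefore becomes $x(x-\alpha n)<n^2(x-\alpha n+\alpha)$, which I expect to follow from $n\ge2$ together with the bound on $x$. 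At the jump $n=\lceil x\rceil$, it reduces to $\lambda_0^{n-1}<1/(n(1-\alpha))$, again a short computation.
\end{enumerate}

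\textbf{Necessity.} Suppose $\beta>\beta^*$. Then either (a) $\lambda_0^1>1/(\alpha(m+1))$, or (b) some $n>K-m$ has $n<x$, so $\lambda_0^n>0$. In each case I would construct a parameter $\bm\lambda'$, generally asymmetric, with
$$\overline{\mathrm{eBH}}^{\bm\lambda(\beta)}_\alpha\preceq_{\mathrm s}\overline{\mathrm{eBH}}^{\bm\lambda'}_\alpha$$
and with a witness input where the inclusion is strict. By Proposition~\ref{prop:strong-weak-admissibility-equivalence}, this proves inadmissibility.

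The witnesses mimic the proof of Theorem~\ref{thm:mean-ebh-symmetric-largest}. I take inputs of the form $(0,\dots,0,\infty,\dots,\infty)$, with the zeros placed on a small set $A$. Only constraints with $B\subseteq A$ are active, and $\mathbb M(\mathbf e^B)=\lambda_0^{|B|}$ there.
\begin{itemize}
\item In case (b), a positive constant $\lambda_0^n$ with $|A|=n>K-m$ is ``wasted'': no set $R$ with $A\cap R\ne\varnothing$ can have $\FDP_A(R)$ small enough to use it fully. The reason is $|R|\le K<n+m$, which forces $|A\cap R|/|R|>\dots$.
\item Both cases share one mechanism: some constant exceeds what the integer constraints on $|A\cap R|/|R|$ can exploit.
\end{itemize}
I then shift the excess mass from $\lambda_0^n$ into the linear weights $\lambda_i^A$. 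The shift is chosen so that no previously active constraint $\lambda_0^{|A|}\ge |A\cap R|/(\alpha|R|)$ is lost, because those values lie on a discrete grid. Simultaneously, $E_A$ increases at inputs with some large finite coordinate, which creates a new rejection set.

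\textbf{Main obstacle.} The delicate step is the necessity direction, and specifically case (a). There one must check that lowering $\lambda_0^1$ to the grid value $1/(\alpha(m+1))$ loses nothing at inputs with zeros. One must also check, for every $\mathbf e$, that the added weight never breaks an old rejection. If a global reparametrization fails, I would fall back on a local modification as in the proof sketch of Proposition~\ref{prop:point-mean-admissibility-threshold}. That is, I add one rejection set on a carefully chosen region and verify FDR directly by a mixture argument over the true-null configurations, splitting by the number of true nulls as there.
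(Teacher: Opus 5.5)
Your route is the paper's: sufficiency by checking the three hypotheses of Theorem~\ref{thm:constant-term-admissibility-criterion}, and necessity by lowering one offending constant to the largest value below $1$ that a threshold $|A\cap R|/(\alpha|R|)$ with $|A|=n$ can take. That change keeps you inside the symmetric weighted-mean $\overline{\mathrm{eBH}}$ class, so you need neither an asymmetric $\bm\lambda'$ nor a fallback FDR argument.

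Your dichotomy (a)/(b) is correct, and it locates an offending $n$ more directly than the paper's $\rho_n$-lemma. In (a), the sub-unit thresholds for $|A|=1$ are $1/(\alpha r)$ with $r\ge m+1$. In (b), every nonzero threshold is at least $1$, because $|R|\le K-n+|A\cap R|\le m-1+|A\cap R|$; it is this bound, not $|R|\le K$, that does the work. Two steps, however, fail as written.

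\textbf{First, the reduction in (d) is wrong.} Since $(1-\lambda_0^n)/(n(1-\alpha))=1/(x-\alpha n)$, the smoothness condition is equivalent to $x(1-\alpha)<x-\alpha n+\alpha$, i.e.\ $n-1<x$, which holds automatically. The inequality $x(x-\alpha n)<n^2(x-\alpha n+\alpha)$ that you wrote is false in admissible configurations. For example, $\alpha=0.05$, $K\ge 39$, $x=20=\beta^*K$ and $n=2$ give $398<79.8$. So it cannot follow from the bound on $x$.

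\textbf{Second, necessity is only a plan, and what you call the main obstacle is the easy part.} Take $d=1/(\alpha(m+1))$ in (a) and $d=0$ in (b), and set $E_A=d+(1-d)\bar e_A$ for $|A|=n$. Strong dominance is then immediate. Thresholds in $(0,1)$ are at most $d$. A threshold $t\ge1$ can only have been met with $\bar e_A\ge1$, and there a smaller constant gives a larger $E_A$.

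What is missing is the strict witness. Your inputs $(0,\dots,0,\infty,\dots,\infty)$ cannot provide it, since at such inputs lowering the constant only decreases $E_A$. Instead take $y_i=q$ for $i\in A$ and $y_i=\infty$ otherwise. At $q=x/(\alpha n)$ the old $E_A$ equals exactly $1/\alpha$, while the new one exceeds it. The proper sub-constraints hold strictly, because
\[
\lambda_0^b+(1-\lambda_0^b)\frac{x}{\alpha n}-\frac{b}{\alpha n}=\frac{(n-b)(x-b)}{n(x-\alpha b)}>0\qquad(1\le b<n<x).
\]
Hence for $q$ slightly below $x/(\alpha n)$, the new procedure rejects $A$ and the old one does not. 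Without this computation, strictness in case (b) with $n\ge 2$ is not established.
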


\begin{proofidea}
When $\beta\le\beta^*$, the constants satisfy the admissibility conditions of Theorem~\ref{thm:constant-term-admissibility-criterion}. Conversely, if $\beta>\beta^*$, we can find some constant $\lambda_0^n(\beta)$ that is strictly larger than a constructed threshold. Lowering only this constant to the threshold yields a procedure that is strictly better.
\end{proofidea}

In practice, this result gives a recommendation for choosing constant terms: using $\overline{\mathrm{eBH}}_\alpha^{\bm{\lambda}(\beta)}$ for $\beta\le\beta^*$. A smaller $\beta$ gives a procedure closer to $\overline{\mathrm{eBH}}_\alpha^{\mathsf{m}}$, while a larger $\beta$ keeps positive constants for more intersection sizes. The extreme choice $\beta=\beta^*$ specifies the boundary for $\overline{\mathrm{eBH}}_\alpha^{\bm{\lambda}(\beta)}$ to be admissible.
In Section \ref{sec:sim-constant-terms}, we show by simulation that inadmissible choices of $\beta$ can lead to inferior  performance. 

\section{Simulation studies}\label{sec:simulations}

The theoretical results leave a practically important choice within the class of weighted-mean $\overline{\mathrm{eBH}}$ procedures: the allocation of weights and the selection of constant terms. We use two simple Gaussian mean testing experiments to illustrate the advantages of our methods given in Sections \ref{sec:prior-score-weights} and \ref{sec:constant-term}.

In the simulation studies, we only consider e-testing procedures, and avoid a comparison with p-testing procedures, for two possible reasons:  first, there may be  dependence among data unknown to the user (although we run simulation for independent data); second, they work in the situation where only the e-values are available, but not the full distributions of the test statistics.
 
\subsection{Learning asymmetric weights}
\label{sec:sim-asymmetric-weights}

We consider the standard problem of testing one-sided Gaussian means using e-values. Let $X_i \sim \mathcal N(\mu_i,1)$ for $i\in\cK$. We test $H_i:\mu_i\leq 0$ against the alternative $\mu_i>0$. Let the data be independent across $i$, set $K=200$ and $\alpha=0.05$, and let the number of non-nulls be $|N^c|\in\{20,30\}$, corresponding to non-null proportions $10\%$ and $15\%$.

For each hypothesis $i$, there are 20 pre-screening (or training) data
$X_{i,1},\dots,X_{i,20}$ and 40   testing data 
$Y_{i,1},\dots,Y_{i,40}$. 
For a parameter $\eta>0$, let 
$$E_{i,P}= e^{\sum_{t=1}^{20}(\eta X_{i,t}-  \eta^2/2) },~ 
E_{i,T}=e^{\sum_{t=1}^{40}(\eta Y_{i,t}-  \eta^2/2) }, \mbox{ and }
E_{i,F}= E_{i,P}E_{i,T}.$$
The statistics $E_{i,P}$, $E_{i,T}$
and $E_{i,F}$ are computed from the pre-screening data, the testing data, and the full data, respectively. 
The testing data $Y_{i,1},\dots,Y_{i,40}$
are iid copies of $X_i$, and hence $E_{i,T}$ is a valid e-value for hypothesis $i$, as used by \cite{VW21}. In contrast, the pre-screening data $X_{i,1},\dots,X_{i,20}$ can have lower quality than the testing data (specified below), and hence $E_{i,P}$ and $E_{i,F}$ may not be valid e-values. 
We take $\mu_i=0$ for every null hypothesis and the common signal strength $\mu_i=0.45$ for every alternative hypothesis, and let $\eta=0.45$. The effective signal strength for the testing data is $0.45\sqrt{40}\approx 2.846$.

We consider three settings of the pre-screening data for each hypothesis $i$. In all these settings, the pre-screening data are independent of the testing data.
\begin{enumerate}[(a)]
\item Clean data: The pre-screening data and the testing data are iid.  

\item Power distortion:    $X_{i,t}$ are iid across $t$ following the distribution of $X_{i}^2\operatorname{sgn}(X_{i})$. This power distortion models nonlinear miscalibration: the recorded measurements preserve their signs and ordering, but large responses are disproportionately amplified.  

\item Persistent measurement error: $X_{i,t}=Z_{i,t}+\varepsilon_i$, where $Z_{i,t} $ are iid across $t$ from $X_i$ and $\varepsilon_i \sim \mathcal N(0,0.1)$. This models measurement error in pre-screening subjects: a cheaper and less accurate measurement device is used for the training subjects. The cheaper device induces an error $\varepsilon_i$ shared by all pre-screening data for the same hypothesis. 
\end{enumerate}

We consider five methods, with the first four applied with e-values  $(E_{i,T})_{i\in \cK}$ based on the testing data:
the base eBH, 
the mean $\overline{\mathrm{eBH}}$,
the normalized weighted-mean $\overline{\mathrm{eBH}}$,
the
adjusted-normalized weighted-mean $\overline{\mathrm{eBH}}$,
and the mean $\overline{\mathrm{eBH}}$ applied with full data $(E_{i,F})_{i\in \cK}$.  
For the two weighted-mean $\overline{\mathrm{eBH}}$ procedures  in Section \ref{sec:prior-score-weights}, we compute the weights $\bm\lambda_{\rm nor}$ and 
 $\bm\lambda_{\rm adj}$  from the   pre-screening data by assigning $H_i$ the likelihood-ratio score
$s_i=E_{i,P}$  in \eqref{eq:prior-score-weights}. We set the parameter $\gamma=1$. 


In Settings (b) and (c),  the full-data mean $\overline{\mathrm{eBH}}$ 
does not have theoretical FDR control, because  $E_{1,F},\dots,E_{K,F}$ are
 not necessarily valid e-values. 
All other methods have theoretical FDR control.

For the  mean and weighted-mean $\overline{\mathrm{eBH}}$ procedures, we report a rejection set with the largest size, which may need some tie breaking; tie breaking is done through favoring hypotheses with larger e-values.
After obtaining a reported rejection set $R$, performance is measured by the empirical 
values of the
true positive rate $\mathrm{TPR}:=\mathbb E[|R\cap N^c|/|N^c|]$ and $\FDR$. All results are based on the average of  $1{,}000$ independent replications.

\begin{table}[t]
\centering
\footnotesize
\caption{Empirical TPR and FDR at $\alpha=0.05$ based on $1{,}000$ independent replications. Monte Carlo standard errors are shown in parentheses.}
\label{tab:learned-weight-power}
\begingroup
\setlength{\tabcolsep}{3.5pt}
\renewcommand{\arraystretch}{1.2}

\begin{tabularx}{\textwidth}{@{}Xcccc@{}}
\toprule 
& \multicolumn{2}{c}{$|N^c|=20$}
& \multicolumn{2}{c}{$|N^c|=30$} \\
\cmidrule(lr){2-3}\cmidrule(lr){4-5}
& TPR  & FDR 
& TPR  & FDR  \\
\midrule

\multicolumn{5}{@{}l}{\textit{Only testing data}}\\
\addlinespace[2pt]

eBH, test-only
& 0.1584 (0.0039) & 0.0026 (0.0012)
& 0.2096 (0.0038) & 0.0029 (0.0007) \\

mean $\overline{\mathrm{eBH}}$, test-only
& 0.1790 (0.0042) & 0.0036 (0.0012)
& 0.2506 (0.0040) & 0.0040 (0.0007) \\\midrule
\addlinespace[2pt]

\multicolumn{5}{@{}l}{\textit{Clean  data}}\\
\addlinespace[2pt]

normalized 
& 0.4708 (0.0055) & 0.0044 (0.0006)
& 0.5174 (0.0049) & 0.0041 (0.0005) \\

adjusted-normalized 
& 0.2478 (0.0051) & 0.0036 (0.0012)
& 0.3500 (0.0046) & 0.0043 (0.0006) \\

mean $\overline{\mathrm{eBH}}$, all data
& 0.5370 (0.0043) & 0.0071 (0.0008)
& 0.6027 (0.0034) & 0.0056 (0.0005) \\

\midrule

\multicolumn{5}{@{}l}{\textit{Power-distortion in the pre-screening data}}\\
\addlinespace[3pt]



normalized 
& 0.2742 (0.0060) & 0.0020 (0.0006)
& 0.2747 (0.0059) & 0.0016 (0.0004) \\

adjusted-normalized
& 0.2304 (0.0050) & 0.0033 (0.0012)
& 0.3151 (0.0046) & 0.0038 (0.0006) \\

mean $\overline{\mathrm{eBH}}$, all data
& 0.7652 (0.0032) & \textbf{0.0710} (0.0020)
& 0.7945 (0.0025) & \textbf{0.0580} (0.0015) \\

\midrule

\multicolumn{5}{@{}l}{\textit{Persistent measurement error in the pre-screening data}}\\
\addlinespace[3pt]



normalized
& 0.1768 (0.0044) & 0.0045 (0.0010)
& 0.2038 (0.0041) & 0.0040 (0.0007) \\

adjusted-normalized
& 0.2016 (0.0044) & 0.0032 (0.0012)
& 0.2857 (0.0043) & 0.0038 (0.0006) \\

mean $\overline{\mathrm{eBH}}$, all data
& 0.5388 (0.0042) & \textbf{0.0708} (0.0023)
& 0.5853 (0.0033) & \textbf{0.0581} (0.0017) \\

\bottomrule
\end{tabularx}
\endgroup
\end{table}

Table~\ref{tab:learned-weight-power} presents the results. Because the results for $20$ non-nulls and $30$ non-nulls are similar, we focus on the $20$ non-nulls setting. In the case of clean data, all empirical FDRs are well below the level $\alpha$, as theoretically anticipated. 
The mean $\overline{\mathrm{eBH}}$ procedure using all data attains a TPR of $0.5370$, whereas withholding the pre-screening data reduces the TPR to $0.1790$. This gap is expected because the full-data procedure uses more observations. The procedure $\overline{\mathrm{eBH}}_\alpha^{\bm\lambda_{\rm nor}}$ raises the TPR to $0.4708$, thereby recovering approximately $82\%$ of the empirical power loss caused by data splitting. The procedure $\overline{\mathrm{eBH}}_\alpha^{\bm\lambda_{\rm adj}}$ attains a TPR of $0.2478$, again better than the test-only procedures,  and its   rejection set contains the eBH rejection set by construction.

By Proposition~\ref{prop:esc-dominated-by-weighted-ebh}, the point procedures induced by $\overline{\mathrm{eBH}}_\alpha^{\bm\lambda_{\rm nor}}$ do not dominate the base eBH procedure. Nevertheless, in the iid setting with $20$ non-nulls, the reported globally largest rejection set contains the test-only eBH rejection set in $872/1000$ replications. Thus, the reported rejection set contains the eBH rejection set with high probability.

Under power distortion, both learned procedures improve upon the test-only mean $\overline{\mathrm{eBH}}$ procedure: their TPRs are 
$0.2742$ and $0.2304$, compared with $0.1790$ for the test-only mean. Persistent measurement error yields a different pattern. The TPR of $\overline{\mathrm{eBH}}_\alpha^{\bm\lambda_{\rm nor}}$ falls to $0.1768$, whereas $\overline{\mathrm{eBH}}_\alpha^{\bm\lambda_{\rm adj}}$ attains a TPR of $0.2016$. This is  
because the measurement errors reduce the quality of the signals in pre-screening and induce noises. The empirical FDRs of both learned procedures remain well below $\alpha$. By contrast, the mean $\overline{\mathrm{eBH}}$ procedure using all data has empirical FDRs of $0.0710$ under power distortion and $0.0708$ under persistent measurement error, with $95\%$ normal Monte Carlo intervals $[0.0670,0.0749]$ and $[0.0662,0.0754]$, respectively. These intervals lie   above $\alpha=0.05$. Thus,  directly pooling the pre-screening and testing observations can lose FDR control if contamination is possible but unknown. 

Together, the results illustrate that asymmetric weights learned from side information can yield substantial power gains without losing FDR control in case of unknown contamination in the pre-screening data. This  supports  our proposed rules of thumb for choosing asymmetric weights.

\subsection{Effect of admissible constant terms}
\label{sec:sim-constant-terms}

We continue with the setting of testing one-sided Gaussian means. Let $X_i\sim\mathcal{N}(\mu_i,1)$ for $i\in\cK$, and test $H_i: \mu_i\leq 0$ against the alternative $\mu_i>0$. The observations are independent across $i$, and we set $K=5000$ and $\alpha=0.05$. Consider three cases $|N^c|= 500$, $1000$, and $1500$, corresponding to non-null proportions $10\%$, $20\%$, and $30\%$, respectively. Let $X_i\sim\mathcal{N}(0,1)$ for $i\in N$ and $X_i\sim\mathcal{N}(3,1)$ for $i\in N^c$. Define $E_i=\exp(3 X_i- 9/2)$, which is a valid e-value for hypothesis $i$.

By Proposition~\ref{prop:beta-threshold-admissibility}, the exact admissibility threshold $\beta^*=0.004$. We will take $\beta\in\{0,0.004,0.80,0.90,1\}$. Thus, the first two values of $\beta$ yield admissible procedures, whereas the last three choices do not. The procedures corresponding to different values of $\beta$ are generally incomparable. For each $\beta$, we report a rejection set of maximum size consisting of the hypotheses with the largest e-values. By Proposition~\ref{prop:esc-dominated-by-weighted-ebh}, the resulting rejection set contains the base $\mathrm{eBH}_\alpha$ rejection set.
We use TPR to measure the performance. 

\begin{table}[t]
\centering
\footnotesize
\caption{Empirical TPR at $\alpha=0.05$ based on $1{,}000$ independent replications. Monte Carlo standard errors are shown in parentheses.}
\label{tab:constant-terms-tpr}
\begingroup
\setlength{\tabcolsep}{6pt}
\renewcommand{\arraystretch}{1.2}

\begin{tabularx}{\textwidth}{@{}Xccc@{}}
\toprule
& \multicolumn{3}{c}{Number of non-nulls $|N^c|$} \\
\cmidrule(lr){2-4}
& $500$ & $1000$ & $1500$ \\
\midrule
\addlinespace[2pt]

$\mathrm{eBH}_{\alpha}$
& 0.2227 (0.001041)
& 0.3509 (0.000767)
& 0.4275 (0.000600) \\

\midrule
\addlinespace[2pt]

\multicolumn{4}{@{}l}{\textit{Admissible choices:
$\beta\leq\beta^{*}=0.004$}}\\
\addlinespace[2pt]

$\beta=0$ ($\overline{\mathrm{eBH}}_{\alpha}^{\mathsf m}$) and $\beta=0.004$
& 0.2752 (0.001095)
& 0.4320 (0.000748)
& 0.5317 (0.000573) \\

\midrule
\addlinespace[2pt]

\multicolumn{4}{@{}l}{\textit{Inadmissible choices:
$\beta>\beta^{*}=0.004$}}\\
\addlinespace[2pt]

$\beta=0.80$
& 0.2752 (0.001095)
& 0.4306 (0.000752)
& 0.5233 (0.000588) \\

$\beta=0.90$
& 0.2724 (0.001100)
& 0.4211 (0.000769)
& 0.5112 (0.000599) \\

$\beta=1.00$
& 0.2623 (0.001106)
& 0.4090 (0.000783)
& 0.4980 (0.000609) \\

\bottomrule
\end{tabularx}
\endgroup
\end{table}

Table~\ref{tab:constant-terms-tpr} shows that, in each setting, the mean $\overline{\mathrm{eBH}}$ procedure and the admissible choice $\beta=0.004$ give the same empirical TPR, they reject the same sets in all replications. This occurs because $0.004$ is very close to $0$. The two underlying admissible procedures are not comparable.\footnote{ For example, at $\mathbf e=(10000,19,\ldots,19)$, the two procedures reject
$\{1,\ldots,20\}$ and $\{1\}$, respectively, whereas at $\mathbf e=(21,\ldots,21,0,\ldots,0)$, with $4990$ copies of $21$ and $10$ zeros, they reject $\{1,\ldots,4990\}$ and $\{1,\ldots,5000\}$, respectively. Hence, neither procedure dominates the other.} 
The inadmissible choices can yield lower empirical TPRs than the admissible procedures, although this loss of power is relatively small; for $\beta$ between $0.004$ and $0.8$, the change in TPR is very small, and we omit it. 
This small experiment shows that the study of admissibility has practical relevance in terms of power: in the settings considered, choosing $\beta$ substantially above the admissibility threshold $\beta^*$ can lead to fewer true discoveries, even though the resulting point procedure still dominates eBH.

\section{Conclusion}\label{sec:conclusion}

We establish a complete class theory for FDR-controlling procedures based on e-values under arbitrary dependence. We propose  two dominance relations  for simultaneous procedures, and they lead to the same notion of admissibility.  
As a main result, weighted-mean $\overline{\mathrm{eBH}}$ procedures form a complete class of admissible e-testing procedures. 
Within this complete class, every constant-free weighted-mean $\overline{\mathrm{eBH}}$ procedure is admissible at every level. 
This result is the first established admissibility for FDR-controlling procedures with inputs being either p-values or e-values, although we do not have the corresponding results on the admissibility of p-testing procedures. 


Point procedures have a more complicated theory than simultaneous procedures, and we establish several parallel results for point procedures on the complete class theory.
For instance, we show that every point mean $\overline{\mathrm{eBH}}$ procedure is admissible if and only if $\alpha<1/2$. 
Several results are also obtained for symmetric  weighted-mean $\overline{\mathrm{eBH}}$ procedures, a class that is simpler  to work with in practice. 

The weight parameters in the weighted-mean $\overline{\mathrm{eBH}}$ procedures have high degrees of freedom, leading to much flexibility. 
Our theoretical results   give rise to several useful ways to choose the parameters in practice, which we illustrate with simulation studies. For instance, the normalized and adjusted-normalized rules of choosing weight vectors are useful in the case of data contamination. 
With both the theoretical and the simulation results, we offer a more complete picture on the weighted-mean $\overline{\mathrm{eBH}}$ procedures, as well as their central role in multiple testing. 


\newpage

\begin{center}
   \Large Supplementary Material for ``Admissibility and Complete Classes for False Discovery Rate Control with E-values''
\end{center}

This supplement contains some proofs and examples omitted from the main paper.

\appendix 
 
\section{Omitted proofs}
\label{app:proofs}

\subsection{Omitted proofs from Section~\ref{sec:setting}} 
\label{app:A1}

\begin{proof}[Proof of Proposition~\ref{prop:dominance-orders}]
(i) It suffices to argue pointwise in $\mathbf e$. For $\preceq_{\mathrm s}$, this is just the usual inclusion order on collections of sets. For $\preceq_{\mathrm w}$, consider any two collections of sets $\mathcal A$ and $\mathcal B$. Write $\mathcal A\le_{\mathrm w}\mathcal B$ if either
$\mathcal A^\downarrow\subsetneq\mathcal B^\downarrow$ or $\mathcal A\subseteq\mathcal B$. Then $\mathcal A\le_{\mathrm w}\mathcal B$ always implies
$\mathcal A^\downarrow\subseteq\mathcal B^\downarrow$. Reflexivity follows from
$\mathcal A\subseteq\mathcal A$. If $\mathcal A\le_{\mathrm w}\mathcal B$ and
$\mathcal B\le_{\mathrm w}\mathcal A$, then
$\mathcal A^\downarrow=\mathcal B^\downarrow$; hence the strict-lower-closure alternative is impossible in both directions, and we get
$\mathcal A\subseteq\mathcal B\subseteq\mathcal A$. Thus $\mathcal A=\mathcal B$.
If $\mathcal A\le_{\mathrm w}\mathcal B\le_{\mathrm w}\mathcal C$, then
$\mathcal A^\downarrow\subseteq\mathcal C^\downarrow$. If the inclusion is strict, then
$\mathcal A\le_{\mathrm w}\mathcal C$; if it is equality, then all intermediate lower closures are equal, so
$\mathcal A\subseteq\mathcal B\subseteq\mathcal C$, again giving
$\mathcal A\le_{\mathrm w}\mathcal C$. Hence $\preceq_{\mathrm w}$ is also a partial order.

(ii) Strong dominance implies weak dominance by definition. The converse fails already for
$K=2$: take the constant procedures
$\mathcal D(\mathbf e)=\{\{1\}\}$ and
$\mathcal G(\mathbf e)=\{\{1,2\}\}$. Then
$\mathcal D^\downarrow(\mathbf e)\subsetneq\mathcal G^\downarrow(\mathbf e)$ for all $\mathbf e$, so
$\mathcal D\preceq_{\mathrm w}\mathcal G$, but
$\mathcal D(\mathbf e)\nsubseteq\mathcal G(\mathbf e)$, so
$\mathcal D\not\preceq_{\mathrm s}\mathcal G$.
\end{proof}

A simultaneous procedure $\mathcal D$ is said to be increasing if $\mathcal D(\mathbf x)\subseteq\mathcal D(\mathbf y)$ whenever $\mathbf x\le\mathbf y$. Before proving Proposition~\ref{prop:strong-weak-admissibility-equivalence}, a useful lemma will be given first. 

\begin{lemma}\label{lem:s-admissible-increasing}
For $\alpha\in(0,1)$, every member in $\mathrm{SP}_\alpha$ is s-admissible only if it is increasing.
\end{lemma}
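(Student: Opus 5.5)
We argue by contraposition: if $\mathcal D\in\mathrm{SP}_\alpha$ is not increasing, we build a procedure $\mathcal D'\in\mathrm{SP}_\alpha$ with $\mathcal D\preceq_{\mathrm s}\mathcal D'$ and $\mathcal D'\neq\mathcal D$. The natural candidate is the monotone hull
$$\mathcal D'(\mathbf y)=\bigcup_{\mathbf x\le \mathbf y}\mathcal D(\mathbf x),\qquad \mathbf y\in[0,\infty]^K .$$
Taking $\mathbf x=\mathbf y$ gives $\mathcal D(\mathbf y)\subseteq\mathcal D'(\mathbf y)$, so $\mathcal D\preceq_{\mathrm s}\mathcal D'$. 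If $\mathcal D$ is not increasing, there are $\mathbf x\le\mathbf y$ and $R\in\mathcal D(\mathbf x)\setminus\mathcal D(\mathbf y)$. Then $R\in\mathcal D'(\mathbf y)\setminus\mathcal D(\mathbf y)$, so $\mathcal D'\neq\mathcal D$.

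The main step is to show $\mathcal D'\in\mathrm{SP}_\alpha$. Fix $\p$, a null set $N$, and null e-values $\mathbf E$. For each outcome, the finitely many sets in $\mathcal D'(\mathbf E)$ each come from some $\mathbf x\le\mathbf E$. Let $R^*(\mathbf E)$ be a set in $\mathcal D'(\mathbf E)$ that maximizes $\FDP_N$, and let $\mathbf X\le\mathbf E$ be a point with $R^*(\mathbf E)\in\mathcal D(\mathbf X)$. Then
$$\max_{R\in\mathcal D'(\mathbf E)}\FDP_N(R)=\FDP_N(R^*(\mathbf E))\le\max_{R\in\mathcal D(\mathbf X)}\FDP_N(R).$$
Since $X_i\le E_i$, we have $\E[X_i]\le 1$ for all $i\in N$. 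So $\mathbf X$ is a legitimate vector of e-values under the same $\p$ and hypotheses. Because $\mathcal D\in\mathrm{SP}_\alpha$ for all inputs, $\FDR_{\mathcal D'}(\mathbf E)\le\FDR_{\mathcal D}(\mathbf X)\le\alpha$.

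The obstacle I expect is measurability, on two fronts. First, $\mathcal D'$ must be a measurable map. Second, the selection $\mathbf E\mapsto\mathbf X$ must give a random variable. The sup over all $\mathbf x\le\mathbf y$ is over an uncountable set, so neither holds automatically.

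I would avoid the issue by not using the full hull and instead adding only what is needed to break monotonicity at one witness. Take the $\mathbf x\le\mathbf y$ and $R$ found above and set
$$\mathcal D'(\mathbf z)=\mathcal D(\mathbf z)\cup\{R\}\ \text{ if } \mathbf z=\mathbf y,\qquad \mathcal D'(\mathbf z)=\mathcal D(\mathbf z)\ \text{ otherwise}.$$
This differs from $\mathcal D$ only on a singleton, so it is measurable. For FDR, define $\mathbf X=\mathbf x$ on the event $\{\mathbf E=\mathbf y\}$ and $\mathbf X=\mathbf E$ elsewhere. This is a measurable random vector with $\mathbf X\le\mathbf E$, so $\E[X_i]\le 1$ for $i\in N$. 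On $\{\mathbf E=\mathbf y\}$ we have $\mathcal D'(\mathbf E)=\mathcal D(\mathbf y)\cup\{R\}$. Then
$$\max_{R'\in\mathcal D'(\mathbf y)}\FDP_N(R')\le\max\Bigl\{\max_{R'\in\mathcal D(\mathbf y)}\FDP_N(R'),\ \max_{R'\in\mathcal D(\mathbf x)}\FDP_N(R')\Bigr\}.$$
This pointwise maximum of two terms is not directly bounded by a single FDR of $\mathcal D$. I would handle it by choosing between $\mathbf x$ and $\mathbf y$ on that event according to which term is larger. This choice depends only on the fixed point $\mathbf y$, so it is deterministic.

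Concretely, let $\mathbf X=\mathbf x$ on $\{\mathbf E=\mathbf y\}$ if $\max_{\mathcal D(\mathbf x)}\FDP_N\ge\max_{\mathcal D(\mathbf y)}\FDP_N$, and $\mathbf X=\mathbf E$ otherwise. Then pointwise $\max_{\mathcal D'(\mathbf E)}\FDP_N\le\max_{\mathcal D(\mathbf X)}\FDP_N$. Taking expectations gives $\FDR_{\mathcal D'}(\mathbf E)\le\FDR_{\mathcal D}(\mathbf X)\le\alpha$. Hence $\mathcal D'\in\mathrm{SP}_\alpha$ strictly strongly dominates $\mathcal D$, so $\mathcal D$ is not s-admissible. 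This proves the lemma.
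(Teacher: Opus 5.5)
Your final argument is correct and is essentially the paper's proof: modify $\mathcal D$ only at the single point $\mathbf y$. FDR control then follows from the case split on whether $\max_{R'\in\mathcal D(\mathbf x)}\FDP_N(R')$ exceeds $\max_{R'\in\mathcal D(\mathbf y)}\FDP_N(R')$, with $\mathbf E$ replaced by $\mathbf x$ on $\{\mathbf E=\mathbf y\}$ in the second case. The only difference is that you add the single witness $R$ rather than all of $\mathcal D(\mathbf x)$, which is immaterial, and the monotone-hull detour can simply be dropped.
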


\begin{proof}
Suppose, for contradiction, that there exist an s-admissible $\mathcal D\in \mathrm{SP}_\alpha$, and some $\mathbf x,\mathbf y\in[0,\infty]^K$ such that $\mathbf x\le\mathbf y$ but
$\mathcal D(\mathbf x)\nsubseteq\mathcal D(\mathbf y)$. Define 
$$
\mathcal G(\mathbf z)
=
\begin{cases}
\mathcal D(\mathbf y)\cup\mathcal D(\mathbf x),
& \mathbf z=\mathbf y,\\
\mathcal D(\mathbf z),
& \mathbf z\ne\mathbf y.
\end{cases}
$$
Clearly, the procedure $\mathcal G$ strictly strongly dominates $\mathcal D$. We claim that $\mathcal G\in\mathrm{SP}_\alpha$. 
Fix a true null set $N$ and an input e-value $\mathbf E$.
\begin{itemize}
\item[(i)] If $\max_{R\in\mathcal D(\mathbf x)}\FDP_N(R)\le\max_{R\in\mathcal D(\mathbf y)}\FDP_N(R)$, 
then adding $\mathcal D(\mathbf x)$ to $\mathcal D(\mathbf y)$ does not increase the FDP at $\mathbf y$. Hence
$\FDR_{\mathcal G}(\mathbf E)=\FDR_{\mathcal D}(\mathbf E)\le\alpha$. 

\item[(ii)] If $\max_{R\in\mathcal D(\mathbf x)}\FDP_N(R)>\max_{R\in\mathcal D(\mathbf y)}\FDP_N(R)$, define 
$$
\mathbf X
=
\begin{cases}
\mathbf x, & \mathbf E=\mathbf y,\\
\mathbf E, & \mathbf E\ne\mathbf y.
\end{cases}
$$
Since $\mathbf X\le\mathbf E$, $\mathbb E[X_i]\le\mathbb E[E_i]\le1$ holds for every $i\in N$. For every realized value, we have $\max_{R\in\mathcal G(\mathbf E)}\FDP_N(R)=\max_{R\in\mathcal D(\mathbf X)}\FDP_N(R)$.
Indeed, this is immediate when $\mathbf E\ne\mathbf y$; when $\mathbf E=\mathbf y$, both sides equal
$\max_{R\in\mathcal D(\mathbf x)}\FDP_N(R)$.
Consequently, 
$\FDR_{\mathcal G}(\mathbf E)=\FDR_{\mathcal D}(\mathbf X)\le\alpha$.
\end{itemize}
Thus $\mathcal G\in\mathrm{SP}_\alpha$, contradicting the s-admissibility of $\mathcal D$. Therefore $\mathcal D$ is
increasing.
\end{proof}

Next, we prove Proposition~\ref{prop:strong-weak-admissibility-equivalence}.  
For $\mathbf x\in[0,\infty]^K$ and $S\subseteq\cK$, define
$$
\mathcal D^{\mathbf x,S}(\mathbf z)
=
\begin{cases}
\mathcal D(\mathbf x)\cup\{S\}, & \mathbf z=\mathbf x,\\
\mathcal D(\mathbf z), & \mathbf z\ne\mathbf x.
\end{cases}
$$
If $S\notin\mathcal D(\mathbf x)$, then $\mathcal D^{\mathbf x,S}$ strictly dominates $\mathcal D$ under strong dominance. Moreover, if $\mathcal D$ is s-admissible, then $\mathcal D^{\mathbf x,S}\notin\mathrm{SP}_\alpha$. 

\begin{proof}[Proof of Proposition~\ref{prop:strong-weak-admissibility-equivalence}]
By Proposition~\ref{prop:dominance-orders}, we know that
admissibility implies s-admissibility, and hence
it suffices to verify the converse. Let $\mathcal D\in \mathrm{SP}_\alpha$ be s-admissible. By Lemma~\ref{lem:s-admissible-increasing}, we know that $\mathcal D$ is increasing. Suppose, for contradiction, that $\mathcal D$ is not admissible. Then there exists $\mathcal G\in\mathrm{SP}_\alpha$ such that $\mathcal D\preceq_{\mathrm w}\mathcal G$ and $\mathcal G\ne\mathcal D$.

Step 1: We show that there exist $\mathbf e\in[0,\infty]^K$ and $R\subseteq\cK$ such that $R\in\mathcal G(\mathbf e)\setminus \mathcal D(\mathbf e)$. Indeed, if $\mathcal G(\mathbf x)\subseteq \mathcal D(\mathbf x)$ for every $\mathbf x$, then $\mathcal G(\mathbf x)^\downarrow\subseteq \mathcal D(\mathbf x)^\downarrow$ for every $\mathbf x$. Since $\mathcal D\preceq_{\mathrm w}\mathcal G$, we must have $\mathcal D(\mathbf x)\subseteq\mathcal G(\mathbf x)$ for every $\mathbf x$, and therefore $\mathcal D=\mathcal G$, a contradiction.

Step 2: Among all triples $(\mathbf x,S,N)$ such that $S\in\mathcal G(\mathbf x)\setminus\mathcal D(\mathbf x)$ and
$\mathcal D^{\mathbf x,S}$ fails to control FDR under null set $N$, choose one, denoted by $(\mathbf e_m,R_m,N_m)$, with $|N_m|$ the smallest. Such a triple exists by Step 1. Also, $N_m$ is nonempty. If not, then all FDPs are zero and $\mathcal D^{\mathbf e_m,R_m}$ controls FDR at every level.

Step 3: Since $\mathcal D^{\mathbf e_m,R_m}$ fails to control FDR under null set $N_m$ and differs from $\mathcal D$ only at the point $\mathbf e_m$, there exists an e-value vector $\mathbf X$ such that $\mathbb P(\mathbf X=\mathbf e_m)>0$ and
$\FDR_{\mathcal D^{\mathbf e_m,R_m}}(\mathbf X)>\alpha$. Moreover, since adding $R_m$ at $\mathbf e_m$ must increase the pointwise FDP, we have
$$\FDP_{N_m}(R_m)>\max_{T\in\mathcal D(\mathbf e_m)}\FDP_{N_m}(T).$$

Step 4: Define $\mathbf Z=(Z_1,\dots,Z_K)$ by $\mathbf Z=\mathbf e_m$ when $\mathbf X=\mathbf e_m$, and otherwise
$$
Z_i
=
\begin{cases}
X_i, & i\in N_m,\\
\infty, & i\notin N_m.
\end{cases}
$$
The random vector $\mathbf Z$ is also an e-value vector. We show that 
\begin{equation}\label{eq:FDR_Z}
\FDR_{\mathcal D^{\mathbf e_m,R_m}}(\mathbf Z)>\alpha.
\end{equation}
Since $\mathbf Z \ge \mathbf X$ and $\mathcal D$ is increasing, we have $\mathcal D(\mathbf X)\subseteq\mathcal D(\mathbf Z)$. This implies 
$$\max_{T\in \mathcal D^{\mathbf e_m,R_m}(\mathbf X)}\FDP_{N_m}(T) \le \max_{T\in \mathcal D^{\mathbf e_m,R_m}(\mathbf Z)}\FDP_{N_m}(T),$$
which gives $\FDR_{\mathcal D^{\mathbf e_m,R_m}}(\mathbf X)\le \FDR_{\mathcal D^{\mathbf e_m,R_m}}(\mathbf Z)$ and we get \eqref{eq:FDR_Z}.

Step 5: For any $\mathbf y\in[0,\infty]^K$ satisfying $y_j=\infty$ for all $j\notin N_m$, we show that
\begin{equation}\label{eq:FDP_increase}
\max_{T\in\mathcal G(\mathbf y)}\FDP_{N_m}(T)\ge\max_{T\in\mathcal D(\mathbf y)}\FDP_{N_m}(T).
\end{equation}
If $\mathcal D(\mathbf y)\subseteq\mathcal G(\mathbf y)$, there is nothing to prove. Since $\mathcal D\preceq_{\mathrm w}\mathcal G$, we only need to consider $\mathcal D(\mathbf y)^\downarrow\subsetneq \mathcal G(\mathbf y)^\downarrow$.
First choose a $U\in\mathcal G(\mathbf y)^\downarrow\setminus \mathcal D(\mathbf y)^\downarrow$, and then choose an $S\in\mathcal G(\mathbf y)$ such that $U\subseteq S$. Then we have $S\notin\mathcal D(\mathbf y)$. To prove \eqref{eq:FDP_increase}, it suffices to prove
$$\FDP_{N_m}(S)\ge\max_{T\in\mathcal D(\mathbf y)}\FDP_{N_m}(T).$$
Suppose otherwise for contradiction. Then by s-admissibility of $\mathcal D$, $\mathcal D^{\mathbf y,S}\notin\mathrm{SP}_\alpha$. Hence
$\mathcal D^{\mathbf y,S}$ fails to control FDR under some null set $N'$. Since $\mathcal D^{\mathbf y,S}$ differs from $\mathcal D$ only at $\mathbf y$, any input vector of e-values that makes $\mathcal D^{\mathbf y,S}$ violate FDR must have positive mass on $\mathbf y$. By construction, $y_j=\infty$ for every $j\notin N_m$, whereas null e-values are finite almost surely. Hence $N'\subseteq N_m$. Moreover, $\FDP_{N_m}(S)<\max_{T\in\mathcal D(\mathbf y)}\FDP_{N_m}(T)$ means that adding $S$ at $\mathbf y$ does not increase the FDP under null set $N_m$. Hence $\mathcal D^{\mathbf y,S}$ will not violate FDR control under $N_m$, and therefore $N'\subsetneq N_m$. This implies that $(\mathbf y,S,N')$ is another feasible triple with a strictly smaller null set than $(\mathbf e_m,R_m,N_m)$, contradicting that $N_m$ is the smallest.

Step 6: We now show $\FDR_{\mathcal G}(\mathbf Z)>\alpha$. If $\mathbf Z=\mathbf e_m$, then
$R_m\in\mathcal G(\mathbf e_m)$, and hence
$$\max_{T\in\mathcal G(\mathbf Z)}\FDP_{N_m}(T)\ge\FDP_{N_m}(R_m)=\max\left\{\max_{T\in\mathcal D(\mathbf e_m)}\FDP_{N_m}(T),\FDP_{N_m}(R_m)\right\}.$$
The right-hand side is exactly the FDP of $\mathcal D^{\mathbf e_m,R_m}$ at $\mathbf Z$. If $\mathbf Z\ne\mathbf e_m$, then all coordinates outside $N_m$ are equal to $\infty$. Step 5 then gives
$$\max_{T\in\mathcal G(\mathbf Z)}\FDP_{N_m}(T)\ge\max_{T\in\mathcal D(\mathbf Z)}\FDP_{N_m}(T).$$
Since $\mathcal D^{\mathbf e_m,R_m}(\mathbf Z)=\mathcal D(\mathbf Z)$ when $\mathbf Z\ne\mathbf e_m$, the right-hand side is exactly the FDP of $\mathcal D^{\mathbf e_m,R_m}$ at $\mathbf Z$. Combining the above two equations gives
$\FDR_{\mathcal G}(\mathbf Z)\ge\FDR_{\mathcal D^{\mathbf e_m,R_m}}(\mathbf Z)>\alpha$, contradicting $\mathcal G\in\mathrm{SP}_\alpha$. 

Hence $\mathcal D$ is admissible, so s-admissibility and admissibility are equivalent.
\end{proof}

\begin{proof}[Proof of Proposition~\ref{prop:admissibility-level-issue}]
Let $\mathcal D$ be an admissible procedure at level $\alpha$. Fix a $\beta\in(0,\alpha)$, we show that $\mathcal D\notin\mathrm{SP}_\beta$. Suppose, for contradiction, that $\mathcal D\in\mathrm{SP}_\beta$.

Step 1: Set $\delta=\alpha-\beta$, consider a special point $\mathbf e^*=(e_1,\dots,e_K)$ with $e_1=1/\delta$ and $e_i=0$ for all $i\neq 1$, we show $\{1\} \in \mathcal D(\mathbf e^*)$. Define
$$
\mathcal H_\delta(\mathbf e)
=
\begin{cases}
\{\varnothing,\{1\}\}, & e_1\ge1/\delta,\\
\{\varnothing\}, & e_1<1/\delta.
\end{cases}
$$
We claim that $\mathcal H_\delta\in\mathrm{SP}_\delta$. 

Fix a true null set $N$ and an input e-value $\mathbf E$. 
\begin{itemize}
\item[(i)] If $1\notin N$, then the FDP is always zero.

\item[(ii)] If $1\in N$, Markov's inequality gives
$\FDR_{\mathcal H_\delta}(\mathbf E)=\mathbb P(E_1\ge1/\delta)\le\delta$.
\end{itemize}
Thus $\mathcal H_\delta\in\mathrm{SP}_\delta$. Further, define 
$\mathcal G(\mathbf e)=\mathcal D(\mathbf e)\cup\mathcal H_\delta(\mathbf e)$. We have
$$
\begin{aligned}
\max_{R\in\mathcal G(\mathbf E)}\FDP_N(R)
&=
\max\left\{\max_{R\in\mathcal D(\mathbf E)}\FDP_N(R),\max_{R\in\mathcal H_\delta(\mathbf E)}\FDP_N(R)\right\} \\
&\le
\max_{R\in\mathcal D(\mathbf E)}\FDP_N(R)+\max_{R\in\mathcal H_\delta(\mathbf E)}\FDP_N(R).
\end{aligned}
$$
Taking expectations gives $\FDR_{\mathcal G}(\mathbf E)\le\beta+\delta=\alpha$.
Thus $\mathcal G\in\mathrm{SP}_\alpha$ and $\mathcal D\preceq_{\mathrm s}\mathcal G$. The admissibility of $\mathcal D$ at level $\alpha$ forces $\mathcal G=\mathcal D$, from which we have $\{1\}\in\mathcal H_\delta(\mathbf e^*)\subseteq\mathcal G(\mathbf e^*)=\mathcal D(\mathbf e^*)$.

Step 2: we show $\mathcal D\notin\mathrm{SP}_\beta$. Set $c=\alpha/\beta>1$ and define $\mathcal D_c(\mathbf e)=\mathcal D(c\mathbf e)$. We claim that $\mathcal D_c\in\mathrm{SP}_\alpha$. First note that $\mathcal D(\mathbf 0)$ cannot contain a nonempty
set. Otherwise, input the vector $\mathbf 0$ under the true null set $\cK$ would give FDR equal to one, contradicting
$\mathcal D\in\mathrm{SP}_\beta$. 

Fix a true null set $N$ and an input e-value $\mathbf E$. Define a random vector $\mathbf X=(X_1,\dots,X_K)$ by
$$
\mathbf X=
\begin{cases}
c\mathbf E, & \text{with probability } 1/c,\\
\mathbf 0, & \text{with probability } 1-1/c.
\end{cases}
$$
For every $i\in N$, $\mathbb E[X_i]=\mathbb E[cE_i] / c=\mathbb E[E_i]\le1$, so $\mathbf X$ is a valid input e-value. Since
$\mathcal D(\mathbf 0)$ contains no nonempty set,
$$
\beta
\ge
\mathbb E\left[
\max_{R\in\mathcal D(\mathbf X)}\FDP_N(R)
\right]
=
\frac1c
\mathbb E\left[
\max_{R\in\mathcal D_c(\mathbf E)}\FDP_N(R)
\right]
=
\frac1c\FDR_{\mathcal D_c}(\mathbf E).
$$
Therefore $\FDR_{\mathcal D_c}(\mathbf E)\le c\beta=\alpha$, so $\mathcal D_c\in\mathrm{SP}_\alpha$.

From Lemma~\ref{lem:s-admissible-increasing}, we know that $\mathcal D$ is increasing, which gives $\mathcal D(\mathbf e)\subseteq\mathcal D(c\mathbf e)=\mathcal D_c(\mathbf e)$ for every $\mathbf e$. Thus $\mathcal D\preceq_{\mathrm s}\mathcal D_c$. The admissibility of $\mathcal D$ at level $\alpha$ again forces $\mathcal D_c=\mathcal D$, and consequently
$\mathcal D(c\mathbf e)=\mathcal D(\mathbf e)$ for every $\mathbf e$. Applying this equality repeatedly to the point $\mathbf e^*$ gives $\{1\}\in\mathcal D(\mathbf e^*/c^n)$. Choose $n$ sufficiently large that $1 / \delta \le c^n$. Since every coordinate of $\mathbf e^*/c^n$ is at most one, we may take $\mathbf e^*/c^n$ as a valid input e-value. Under the true null set $\cK$, we have
$$\max_{R\in\mathcal D(\mathbf e^*/c^n)}\FDP_{\cK}(R)\ge\FDP_{\cK}(\{1\})=1.$$
This contradicts $\mathcal D\in\mathrm{SP}_\beta$.
\end{proof}

\subsection{Omitted proofs from Section~\ref{sec:close-eBH}}  
\label{app:A2}

\begin{proof}[Proofs of parts~(ii) and~(iii) of Proposition~\ref{prop:esc-dominated-by-weighted-ebh}]
(ii). \underline{The ``if'' direction} follows immediately from (i).

\underline{The ``only if'' direction}: Suppose, for contradiction, that $\alpha\lambda_0^A+K\lambda_i^A<1$ for some $A\subseteq\cK$ and $i\in A$. Take $\mathbf e=(e_1,\dots,e_K)$ with $e_i=K/\alpha$ and $e_j=0$ for $j\ne i$. Then $\mathrm{eSC}_\alpha(\mathbf e)=\{\varnothing,\{i\}\}$,
whereas
$$
\mathbb M_{\bm{\lambda}^A}(\mathbf e^A)
=\lambda_0^A+\frac K\alpha\lambda_i^A
<\frac1\alpha.
$$
Thus $\{i\}\notin\overline{\mathrm{eBH}}_\alpha^{\bm{\lambda}}(\mathbf e)$.
Hence there exists $R\in\overline{\mathrm{eBH}}_\alpha^{\bm{\lambda}}(\mathbf e)$
such that $R\supsetneq\{i\}$.

(a) Suppose that the procedure is constant-free. Choose $j\in R\setminus\{i\}$. Then 
$0=\mathbb M_{\bm{\lambda}^{\{j\}}}(0)\ge 1 / (\alpha|R|)$, which is impossible.

(b) Suppose that $K=2$. In this case, $R=\{1,2\}$ necessarily holds. If $A=\{1,2\}$, then $\alpha\lambda_0^A+K\lambda_i^A<1$ yields
$\mathbb M_{\bm{\lambda}^A}(\mathbf e^A)< |A\cap R| / (\alpha |R|)$, which contradicts $R\in\overline{\mathrm{eBH}}_\alpha^{\bm{\lambda}}(\mathbf e)$. If $A=\{i\}$, let $j$ be the other element of $\{1,2\}$ and take another point $\mathbf y=(y_1,y_2)$ with $y_i=2/\alpha$ and $y_j=1/\alpha$. Since $\{i\}$ and $\{1,2\}$ belong to $\mathrm{eSC}_\alpha(\mathbf y)$, we have
$\mathrm{eSC}_\alpha(\mathbf y)^\downarrow=2^{\{1,2\}}$. Weak dominance therefore forces
$\mathrm{eSC}_\alpha(\mathbf y)\subseteq\overline{\mathrm{eBH}}_\alpha^{\bm{\lambda}}(\mathbf y)$.
In particular, $\{i\}\in\overline{\mathrm{eBH}}_\alpha^{\bm{\lambda}}(\mathbf y)$, which gives
$$
\mathbb M_{\bm{\lambda}^A}(\mathbf y^A)
=\lambda_0^A+\frac2\alpha\lambda_i^A
\ge \frac1\alpha.
$$
This contradicts $\alpha\lambda_0^A+K\lambda_i^A<1$.

(c) Suppose that $K\ge3$ and $\alpha<K/(K+1)$. Write
$C=\cK\setminus\{i\}$. Since all coordinates indexed by $C$ are zero, $R\in\overline{\mathrm{eBH}}_\alpha^{\bm{\lambda}}(\mathbf e)$ gives
\begin{equation}\label{eq:weak-esc-lower-constant}
\lambda_0^C\ge\frac{|R|-1}{\alpha |R|}.
\end{equation}
Define $\mathbf y=(y_1,\dots,y_K)$ by
$$
y_r=
\begin{cases}
1 / \alpha, & r=i,\\
K / (\alpha(K-1)), & r\neq i.
\end{cases}
$$
Both $C$ and $\cK$ belong to $\mathrm{eSC}_\alpha(\mathbf y)$. Since
$\cK\in\mathrm{eSC}_\alpha(\mathbf y)$, we have
$\mathrm{eSC}_\alpha(\mathbf y)^\downarrow=2^{\cK}$, and weak dominance forces
$\mathrm{eSC}_\alpha(\mathbf y)\subseteq\overline{\mathrm{eBH}}_\alpha^{\bm{\lambda}}(\mathbf y)$.
In particular, $C\in
\overline{\mathrm{eBH}}_\alpha^{\bm{\lambda}}(\mathbf y)$, which gives
\begin{equation}\label{eq:weak-esc-upper-constant}
\lambda_0^C\le\frac1{K-\alpha(K-1)}.
\end{equation}
Combining \eqref{eq:weak-esc-lower-constant} and
\eqref{eq:weak-esc-upper-constant} yields
$$
\frac{|R|-1}{\alpha |R|}
\le\frac1{K-\alpha(K-1)},
$$
and hence
$$
\alpha\ge\frac{K(|R|-1)}{K(|R|-1)+1}\ge\frac{K}{K+1},
$$
a contradiction. Therefore the condition \eqref{eq:dom-eBH} necessarily holds.

(iii). \underline{The ``if'' direction} follows directly from (i).

\underline{The ``only if'' direction}: Suppose, for contradiction, that
$\alpha\lambda_0^A+K\lambda_i^A<1$ for some $A\subseteq\cK$ and $i\in A$. Let $\overline{\mathrm{ebh}}_\alpha^{\bm\lambda}$ be an induced point procedure that dominates $\mathrm{eBH}_\alpha$. Take $\mathbf e=(e_1,\dots,e_K)$ with $e_i=K/\alpha$ and $e_j=0$ for $j\ne i$. Then $\mathrm{eBH}_\alpha(\mathbf e)=\{i\}$, but $\{i\}\notin\overline{\mathrm{eBH}}_\alpha^{\bm\lambda}(\mathbf e)$.

Write $R=\overline{\mathrm{ebh}}_\alpha^{\bm\lambda}(\mathbf e)$. Since $\overline{\mathrm{ebh}}_\alpha^{\bm\lambda}$ dominates $\mathrm{eBH}_\alpha$, we have $\{i\}\subseteq R$. Moreover,
$R\in\overline{\mathrm{eBH}}_\alpha^{\bm\lambda}(\mathbf e)$, and hence $R\supsetneq\{i\}$. Set $C=R\setminus\{i\}$. Since all coordinates indexed by $C$ are zero, $R\in\overline{\mathrm{eBH}}_\alpha^{\bm\lambda}(\mathbf e)$ gives
$$
\lambda_0^C
=
\mathbb M_{\bm{\lambda}^C}(\mathbf e^C)
\ge
\frac{|R|-1}{\alpha |R|}
\ge
\frac1{2\alpha}.
$$
If $\overline{\mathrm{eBH}}_\alpha^{\bm\lambda}$ is constant-free, then $\lambda_0^C=0$, which is impossible. Otherwise, the assumption in part (iii) gives $\alpha<1/2$, and hence the last display implies $\lambda_0^C>1$, contradicting $\lambda_0^C\le1$. Therefore the condition \eqref{eq:dom-eBH} necessarily holds, which completes the proof.
\end{proof}

\subsection{Omitted proofs from  Section~\ref{sec:adm-closed}}  

\begin{proof}[Proof of Proposition~\ref{prop:point-mean-admissibility-threshold}]
\underline{The ``if'' direction} follows directly from Theorem~\ref{thm:positive-point-weighted-ebh-admissible}.

\underline{The ``only if'' direction}: We show $\overline{\mathrm{ebh}}_\alpha^{\mathsf{m}}$ is not admissible when $\alpha\ge1/2$.

\textbf{Part 1:} $K=2$. Define a point procedure $\mathfrak D$ by
$$
\mathfrak D(e_1,e_2)
=
\begin{cases}
\{1,2\}, & \text{if } e_1+e_2\ge \dfrac{2}{\alpha},\\[1ex]
\varnothing, & \text{otherwise.}
\end{cases}
$$
Note that a necessary condition for $\overline{\mathrm{ebh}}_\alpha^{\mathsf{m}}(\mathbf e)\neq\varnothing$ is $e_1+e_2\ge2/\alpha$, so $\mathfrak D$ dominates $\overline{\mathrm{ebh}}_\alpha^{\mathsf{m}}$. Moreover, take $\mathbf e=(2/\alpha,0)$. Then $\mathfrak D(\mathbf e)=\{1,2\}$, but $\{1,2\}\notin\overline{\mathrm{eBH}}_\alpha^{\mathsf{m}}(\mathbf e)$ because $A=\{2\}$ gives $\bar e_A=0<1 / (2\alpha)$. Hence $\mathfrak D$ strictly dominates $\overline{\mathrm{ebh}}_\alpha^{\mathsf{m}}$.

It suffices to show $\mathfrak D\in \mathrm{PP}_\alpha$. Fix a null set $N\subseteq \{1,2\}$.

\begin{itemize}
\item[(i)] If $N=\{1,2\}$, then
$$
\FDR_{\mathfrak D}(\mathbf E)
=
\mathbb P\left(E_1+E_2\ge\frac2\alpha\right)
\le
\frac\alpha2\mathbb E[E_1+E_2]\le\alpha.
$$

\item[(ii)] If $|N|=1$, then $\mathrm{FDP}_N(\mathfrak D(\mathbf E))\le1/2\le\alpha$.

\item[(iii)] The case $N=\varnothing$ is trivial.
\end{itemize}

\textbf{Part 2:} $K\ge3$. For $2\le n\le K$, let
$$
c_n
=
\max\left\{
\frac{n^2}{\alpha K},
\frac{n(n-1)}{\alpha(K-1)}
\right\}.
$$
Define a point procedure $\mathfrak D$ by
$$
\mathfrak D(\mathbf e)
=
\begin{cases}
\cK, & \text{ if } \sum_{i\in B}e_i\ge c_{|B|} \text{ for every } B\subseteq \cK \text{ with }|B|\ge2, \\[1ex]
\overline{\mathrm{ebh}}_\alpha^{\mathsf{m}}(\mathbf e),
& \text{otherwise.}
\end{cases}
$$
By construction, $\mathfrak D$ dominates $\overline{\mathrm{ebh}}_\alpha^{\mathsf{m}}$. Moreover, take
$$\mathbf e=\left(0,\frac{K}{\alpha(K-1)},\dots,\frac{K}{\alpha(K-1)}\right).$$
We claim that the promotion condition holds, that is, $\sum_{i\in B}e_i\ge c_{|B|}$ for every $B\subseteq \cK$ with $|B|\ge 2$. 

\begin{itemize}
\item[(i)] If $1\notin B$, then
$$
\sum_{i\in B}e_i
=
\frac{|B|K}{\alpha(K-1)}
\ge
\max\left\{
\frac{|B|^2}{\alpha K},
\frac{|B|(|B|-1)}{\alpha(K-1)}
\right\}.
$$

\item[(ii)] If $1\in B$, then $\sum_{i\in B}e_i=K(|B|-1)/(\alpha(K-1))$, and
$$\frac{K(|B|-1)}{\alpha(K-1)}\ge\frac{|B|(|B|-1)}{\alpha(K-1)}.$$
Since
$$\frac{K(|B|-1)}{\alpha(K-1)}\ge\frac{|B|^2}{\alpha K}\iff K^2(|B|-1)\ge |B|^2(K-1)$$
and $K^2(|B|-1)-|B|^2(K-1)=(K-|B|)(K|B|-K-|B|)\ge0$, we have
$$\frac{K(|B|-1)}{\alpha(K-1)}\ge \frac{|B|^2}{\alpha K}.$$
\end{itemize}

Thus $\mathfrak D(\mathbf e)=\cK$. However, $\cK\notin\overline{\mathrm{eBH}}_\alpha^{\mathsf{m}}(\mathbf e)$, because taking $A=\{1\}$ gives $\bar e_A=0<1/(\alpha K)$. Thus $\mathfrak D$ strictly dominates $\overline{\mathrm{ebh}}_\alpha^{\mathsf{m}}$. 

It suffices to show $\mathfrak D\in \mathrm{PP}_\alpha$. Fix a null set $N\subseteq\cK$ and write $\mathbf E=(E_1,\dots,E_K)$.

\begin{itemize}
\item[(i)] If $|N|\ge2$, we claim that $\mathrm{FDP}_N(\mathfrak D(\mathbf E))\le\alpha \bar E_N$. Indeed, if the promotion condition does not hold, then $\mathfrak D(\mathbf E)=\overline{\mathrm{ebh}}_\alpha^{\mathsf{m}}(\mathbf E)$, and the bound directly follows. If the promotion condition holds, then $\mathfrak D(\mathbf E)=\cK$, and applying the promotion condition to $B=N$ gives $|N|/K\le\alpha \bar E_N$. Taking expectations gives FDR control.

\item[(ii)] If $|N|=1$, write $N=\{i\}$. We first claim that
\begin{equation}
\label{eq:point-mean-counter-singleton-bound}
\frac{|\{i\}\cap\mathfrak D(\mathbf E)|}{|\mathfrak D(\mathbf E)|\vee1}
\le
\frac13\,\mathds 1_{\{E_i<1/(2\alpha)\}}
+
\frac12\,\mathds 1_{\{1/(2\alpha)\le E_i\le 2/\alpha\}}
+
\mathds 1_{\{E_i>2/\alpha\}} .
\end{equation}

(a) Suppose $E_i<1/(2\alpha)$. If the promotion condition holds, then $\mathfrak D(\mathbf E)=\cK$, so the left-hand side is at most $1/3$. If the promotion condition does not hold, then $\mathfrak D(\mathbf E)=\overline{\mathrm{ebh}}_\alpha^{\mathsf{m}}(\mathbf E)$. No set in $\overline{\mathrm{eBH}}_\alpha^{\mathsf{m}}(\mathbf E)$ containing $i$ can have size $1$ or $2$, because taking $A=\{i\}$ would give $E_i\ge1/(\alpha |R|)\ge1/(2\alpha)$. Thus the left-hand side is at most $1/3$.

(b) Suppose $1/(2\alpha)\le E_i\le2/\alpha$. We show that $\mathfrak D(\mathbf E)\ne\{i\}$, which implies the left-hand side is at most $1/2$. This is clear if the promotion condition holds. We only need to consider the case $\mathfrak D(\mathbf E)=\overline{\mathrm{ebh}}_\alpha^{\mathsf{m}}(\mathbf E)$. Suppose for contradiction that $\overline{\mathrm{ebh}}_\alpha^{\mathsf{m}}(\mathbf E)=\{i\}$. Then taking $A=\cK$ gives $\sum_{k=1}^K E_k\ge K/\alpha$. Using $E_i\le2/\alpha$, we get $\sum_{k\ne i}E_k\ge(K-2)/\alpha$. If some $j\ne i$ has $E_j=\infty$, then it is easy to see $\{i,j\}\in\overline{\mathrm{eBH}}_\alpha^{\mathsf{m}}(\mathbf E)$, which implies that $\overline{\mathrm{ebh}}_\alpha^{\mathsf{m}}(\mathbf E)=\{i\}$ is impossible. Therefore, in the remaining case $E_j<\infty$ for all $j\ne i$. Hence there exists $j\ne i$ such that
$$E_j\ge\frac{K-2}{\alpha(K-1)}\ge\frac1{2\alpha}.$$
We show that $\{i,j\}\in\overline{\mathrm{eBH}}_\alpha^{\mathsf{m}}(\mathbf E)$. Let $A\subseteq \cK$ be nonempty. If $A\cap\{i,j\}=\varnothing$, there is nothing to prove. If $i\in A$, then $\{i\} \in \overline{\mathrm{eBH}}_\alpha^{\mathsf{m}}(\mathbf E)$ gives $\bar E_A \ge1/\alpha\ge |A\cap\{i,j\}|/(2\alpha)$. It remains to consider $j\in A$ and $i\notin A$. If $A=\{j\}$, then $E_j\ge1/(2\alpha)$. If $|A|\ge2$, then $\{i\} \in \overline{\mathrm{eBH}}_\alpha^{\mathsf{m}}(\mathbf E)$ applied to $A\cup\{i\}$ gives $E_i+\sum_{k\in A}E_k\ge(|A|+1)/\alpha$. Since $E_i\le2/\alpha$, we get
$$\bar E_A\ge\frac{|A|-1}{\alpha |A|}\ge\frac1{2\alpha}=\frac{|A\cap\{i,j\}|}{2\alpha}.$$
Thus $\{i,j\}\in\overline{\mathrm{eBH}}_\alpha^{\mathsf{m}}(\mathbf E)$, contradicting $\overline{\mathrm{ebh}}_\alpha^{\mathsf{m}}(\mathbf E)=\{i\}$.

(c) Suppose $E_i>2/\alpha$. Then the left-hand side is at most $1$.

Finally, write
$$p_0=\mathbb P\!\left(E_i<\frac1{2\alpha}\right),\ 
p_1=\mathbb P\!\left(\frac1{2\alpha}\le E_i\le\frac2\alpha\right),\text{ and }
p_2=\mathbb P\!\left(E_i>\frac2\alpha\right).
$$
Since $\mathbb E[E_i]\le1$, we have $p_1 / (2\alpha)+ 2p_2 / \alpha \le 1$. Taking expectations in \eqref{eq:point-mean-counter-singleton-bound}, and using $\alpha\ge1/2$, we obtain
$$
\mathbb E\left[
\frac{|\{i\}\cap\mathfrak D(\mathbf E)|}{|\mathfrak D(\mathbf E)|\vee1}
\right]
\le
\frac13p_0+\frac12p_1+p_2
=
\frac13+\frac16p_1+\frac23p_2
\le
\frac13+\frac{\alpha}{3}
\le
\alpha.
$$

\item[(iii)] The case $N=\varnothing$ is trivial.
\end{itemize}
\end{proof}

\subsection{Omitted proofs from  Section \ref{sec:complete-class}}

\begin{proof}[Proof of Theorem~\ref{thm:point-weighted-ebh-complete-class}]
Let $\Theta=\prod_{A\subseteq\cK}\Delta_{|A|+1}$. Each $\bm \theta\in\Theta$ specifies a weighted-mean e-collection; let $\overline{\mathrm{eBH}}_\alpha^{\bm \theta}$ denote the corresponding weighted-mean $\overline{\mathrm{eBH}}$ procedure. Define
$m_{\bm \theta}(\mathbf e)=\max\{|R|:\ \mathfrak D(\mathbf e)\subseteq R,\ R\in\overline{\mathrm{eBH}}_\alpha^{\bm \theta}(\mathbf e)\}$ with the convention $m_{\bm \theta}(\mathbf e)=-\infty$ if there is no $R\in\overline{\mathrm{eBH}}_\alpha^{\bm \theta}(\mathbf e)$ such that $\mathfrak D(\mathbf e)\subseteq R$.

Viewing $\mathfrak D$ as a singleton simultaneous procedure, Theorem~\ref{thm:admissible-weighted-ebh-dominator} gives some
$\bm \eta\in\Theta$ such that $\mathfrak D(\mathbf e)\in\overline{\mathrm{eBH}}_\alpha^{\bm \eta}(\mathbf e)$ for every $\mathbf e$. Hence $m_{\bm \eta}(\mathbf e)\ge|\mathfrak D(\mathbf e)|$ for every $\mathbf e$, and $m_{\bm \eta}(\mathbf e)$ never takes the value $-\infty$. Define
$$\mathfrak M=\left\{m_{\bm \theta}: \bm \theta\in\Theta,\ m_{\bm \eta}(\mathbf e)\le m_{\bm \theta}(\mathbf e)\text{ for all }\mathbf e\right\},$$
equipped with the order $m\le m'$ whenever $m(\mathbf e)\le m'(\mathbf e)$ for all $\mathbf e$. Note that $m_{\bm \eta}\in\mathfrak M$, so $\mathfrak M$ is nonempty. By a Zorn argument similar to Step 2 in the proof of Theorem~\ref{thm:admissible-weighted-ebh-dominator}, we can show that $\mathfrak M$ has a maximal element; denote it by
$m^*=m_{\bm \theta^*}$. Let $\overline{\mathrm{eBH}}_\alpha^{\bm \theta^*}$ be the $\overline{\mathrm{eBH}}$ procedure based on the weighted-mean e-collection $\{E_A^{\bm \theta^*}\}_{A\subseteq\cK}$ specified by $\bm \theta^*$.

Next, we construct an admissible point weighted-mean $\overline{\mathrm{eBH}}$ procedure that dominates $\mathfrak D$.
Fix a total order $\prec$ on $2^{\cK}$.\footnote{This total order can be chosen arbitrarily; it only needs to be independent of
$\mathbf e$.} For each $R\subseteq\cK$, define
$$A_R=\{\mathbf e:\ \mathfrak D(\mathbf e)\subseteq R\},$$
and
$$B_R=\bigcap_{A\subseteq\cK}\left\{\mathbf e:E_A^{\bm \theta^*}(\mathbf e)\ge\frac{|A\cap R|}{\alpha(|R|\vee1)}\right\}.$$
For each $\mathbf e$, define $\mathfrak D^*(\mathbf e)$ to be the $\prec$-smallest set $R\subseteq\cK$ such that $\mathbf e\in A_R\cap B_R$ and $|R|=m^*(\mathbf e)$. Such a set exists by the fact that $m^*\ge m_{\bm\eta}\ge|\mathfrak D|\ge0$. In the following, we verify that this $\mathfrak D^*$ is an admissible point weighted-mean $\overline{\mathrm{eBH}}$ procedure that dominates $\mathfrak D$.

\begin{itemize}
\item[(i)] We show that $\mathfrak D^*$ is measurable. It suffices to
show that, for each $R\subseteq\cK$, the set $\{\mathbf e:\ \mathfrak D^*(\mathbf e)=R\}$ is measurable. For each $t\in\{0,1,\dots,K\}$, by definition, we have 
$$\{\mathbf e: m^*(\mathbf e)\ge t\}=\bigcup_{|R|\ge t}(A_R\cap B_R).$$
Hence $m^*$ is measurable. Moreover,
$$\{\mathbf e: \mathfrak D^*(\mathbf e)=R\}=A_R\cap B_R\cap\{\mathbf e: m^*(\mathbf e)=|R|\} \cap \left(\bigcap_{R'\prec R, |R'|=|R|}(A_{R'}\cap B_{R'})^c\right).$$
Since $\mathfrak D$ is measurable, the set $A_R$ is measurable. Moreover, $B_R$ is measurable by the $\infty$ convention.
Thus all sets on the right-hand side are measurable. Therefore $\mathfrak D^*$ is measurable. Since
$\mathfrak D^*(\mathbf e)\in\overline{\mathrm{eBH}}_\alpha^{\bm \theta^*}(\mathbf e)$ for every $\mathbf e$, we have
$\mathfrak D^*\in\mathrm{PP}_\alpha$.

\item[(ii)] By construction, $\mathfrak D(\mathbf e)\subseteq\mathfrak D^*(\mathbf e)$ for every $\mathbf e$.

\item[(iii)] We show that $\mathfrak D^*$ rejects a maximal set in $\overline{\mathrm{eBH}}_\alpha^{\bm \theta^*}$. Suppose otherwise for contradiction. Then we can find $\mathbf e$ and $S\subseteq\cK$ such that $\mathfrak D^*(\mathbf e)\subsetneq S$ and $S\in\overline{\mathrm{eBH}}_\alpha^{\bm \theta^*}(\mathbf e)$. Then $\mathfrak D(\mathbf e)\subseteq S$ and
$|S|>|\mathfrak D^*(\mathbf e)|=m^*(\mathbf e)$, contradicting the definition of $m^*(\mathbf e)$. Thus $\mathfrak D^*$ is a point weighted-mean $\overline{\mathrm{eBH}}$ procedure.

\item[(iv)] We show the admissibility of $\mathfrak D^*$. Suppose otherwise for contradiction. Then there exists $\widehat{\mathfrak D}\in\mathrm{PP}_\alpha$ such that $\mathfrak D^*(\mathbf e)\subseteq\widehat{\mathfrak D}(\mathbf e)$ for all $\mathbf e$, and the set inclusion is strict for some $\mathbf e_1$. Similarly, Theorem~\ref{thm:admissible-weighted-ebh-dominator} gives some
$\widehat {\bm \theta}\in\Theta$ such that $\widehat{\mathfrak D}(\mathbf e)\in\overline{\mathrm{eBH}}_\alpha^{\widehat {\bm \theta}}(\mathbf e)$ for all $\mathbf e$. Since $\mathfrak D\subseteq\mathfrak D^*\subseteq \widehat{\mathfrak D}$, for all $\mathbf e$ it follows that
$$m_{\widehat {\bm \theta}}(\mathbf e)\ge|\widehat{\mathfrak D}(\mathbf e)|\ge|\mathfrak D^*(\mathbf e)|=m^*(\mathbf e)\ge m_{\bm\eta}(\mathbf e).$$
Thus $m_{\widehat {\bm \theta}}\in\mathfrak M$. At the point $\mathbf e_1$, we have
$$m_{\widehat {\bm \theta}}(\mathbf e_1)\ge|\widehat{\mathfrak D}(\mathbf e_1)|>|\mathfrak D^*(\mathbf e_1)|=m^*(\mathbf e_1),$$
which contradicts the maximality of $m^*$. Thus $\mathfrak D^*$ is admissible.
\end{itemize}

Therefore, there exists an admissible point weighted-mean $\overline{\mathrm{eBH}}$ procedure $\mathfrak D^*\in\mathrm{PP}_\alpha$ that dominates $\mathfrak D$. In particular, if $\mathfrak D$ is admissible, then $\mathfrak D\subseteq\mathfrak D^*$ implies $\mathfrak D=\mathfrak D^*$. Hence every admissible element of $\mathrm{PP}_\alpha$ admits a point weighted-mean $\overline{\mathrm{eBH}}$ representation. This completes the proof.
\end{proof}

\begin{proof}[Proof of Proposition~\ref{prop:closed-by-not-admissible}]

By construction, we know that $\mathfrak D$ dominates $\overline{\mathrm{BY}}_\alpha$. Moreover, $\overline{\mathrm{BY}}_\alpha(\mathbf p)\subsetneq\mathfrak D(\mathbf p)$ on $\mathcal A$, because if there exists $p_i>\alpha$, then $\cK\notin\mathcal C_\alpha^{\mathrm{BY}}(\mathbf p)$. Thus, $\mathfrak D$ strictly dominates $\overline{\mathrm{BY}}_\alpha$.

It suffices to show $\mathfrak D$ controls FDR. Define 
$$
E'_{\{i\}}(\mathbf p)
=
\frac{1}{\alpha}\mathds 1_{\{p_i\le \alpha/3\}}
+
\frac{1}{2\alpha}\mathds 1_{\{\alpha/3<p_i\le t_0\}},
$$
and $E'_A=E_A^{\mathrm{BY}}$ for all $|A|\ge2$. For a singleton $\{i\}$ and a null p-value $P_i$, taking expectations gives
$$
\mathbb E[E'_{\{i\}}(\mathbf P)]
\le
\frac{1}{2\alpha}\mathbb P(P_i\le \alpha/3)
+
\frac{1}{2\alpha}\mathbb P(P_i\le t_0)
\le
\frac16+\frac{t_0}{2\alpha}
\le1.
$$
Thus $(E'_A)_{A\subseteq\cK}$ is an e-collection. Let $\mathcal C'_\alpha$ be its corresponding candidate discovery sets. It suffices to show $\mathfrak D(\mathbf p)\in\mathcal C'_\alpha(\mathbf p)$ for every $\mathbf p$.

\textbf{Case 1:} $\mathbf p\notin\mathcal A$. Write $R=\overline{\mathrm{BY}}_\alpha(\mathbf p)$.

\begin{itemize}
\item[(i)] If $R=\varnothing$, there is nothing to prove.

\item[(ii)] Assume $|R|\ge2$. Let $A\subseteq\cK$ be nonempty.

(a) If $|A|\ge2$, then $E'_A=E_A^{\mathrm{BY}}$, so the required inequality directly follows.

(b) If $A=\{i\}$ and $i\notin R$, there is nothing to prove.

(c) If $A=\{i\}$ and $i\in R$, then necessarily $p_i\le\alpha$. Since $|R|\ge2$, we have
$E'_A(\mathbf p)\ge 1 / (2\alpha)\ge 1 / (\alpha |R|)$.

\item[(iii)] It remains to consider $R=\{i\}$. We claim that $p_i\le\alpha/3$. Suppose otherwise for contradiction. Fix $j\ne i$. We show that $\{i,j\}\in\mathcal C_\alpha^{\mathrm{BY}}(\mathbf p)$, contradicting that $R$ is largest. For $k\in\cK$ and $u\in[0,1]$, define
$$b_k(u)=\frac{\mathds 1_{\{h_k u\le \alpha\}}}{\lceil k h_k u/\alpha\rceil\vee 1}.$$
If $A\subseteq\cK$ is nonempty, then $E_A^{\mathrm{BY}}(\mathbf p)=\sum_{\ell\in A}b_{|A|}(p_\ell) / \alpha$ by definition.

(a) If $A\cap\{i,j\}=\varnothing$, there is nothing to prove. 

(b) If $i\in A$, from $\{i\}\in \mathcal C_\alpha^{\mathrm{BY}}(\mathbf p)$,
$$E_A^{\mathrm{BY}}(\mathbf p)\ge \frac1\alpha\ge \frac{|A\cap\{i,j\}|}{2\alpha}.$$

(c) If $j\in A$ and $i\notin A$. Set $T=A\cup\{i\}$. Also from $\{i\}\in \mathcal C_\alpha^{\mathrm{BY}}(\mathbf p)$,
$$\sum_{\ell\in A} b_{|T|}(p_\ell)+b_{|T|}(p_i)\ge1.$$
Since $|T|\ge2$ and $p_i>\alpha/3$, we have $|T|h_{|T|}p_i/\alpha>1$, and hence $b_{|T|}(p_i)\le1/2$. Therefore $\sum_{\ell\in A} b_{|T|}(p_\ell)\ge 1/2$. Since $b_k(u)$ is decreasing in $k$,
$$
E_A^{\mathrm{BY}}(\mathbf p)
=
\frac1\alpha\sum_{\ell\in A} b_{|A|}(p_\ell)
\ge
\frac1\alpha\sum_{\ell\in A} b_{|T|}(p_\ell)
\ge
\frac{1}{2\alpha}
=
\frac{|A\cap\{i,j\}|}{2\alpha}.
$$
Therefore $p_i\le\alpha/3$, and hence $E'_{\{i\}}(\mathbf p)=1/\alpha$. Thus $R\in\mathcal C'_\alpha(\mathbf p)$.
\end{itemize}

\textbf{Case 2:} $\mathbf p\in\mathcal A$. Then $\mathfrak D(\mathbf p)=\cK$. We show that $\cK\in\mathcal C'_\alpha(\mathbf p)$. Let $A\subseteq\cK$ be nonempty.

\begin{itemize}
\item[(i)] If $A=\{j\}$, then $p_j\le t_0$, so $E'_A(\mathbf p)\ge 1 / (2\alpha) \ge 1 / (K\alpha)$.

\item[(ii)] If $|A|\ge2$, then $E'_A=E_A^{\mathrm{BY}}$. All p-values except $p_{(K)}$ are at most $\alpha/(Kh_K)$, and each such p-value contributes $1/\alpha$ to $E_A^{\mathrm{BY}}(\mathbf p)$. Hence
$$E'_A(\mathbf p)=E_A^{\mathrm{BY}}(\mathbf p)\ge\frac{|A|-1}{\alpha}\ge\frac{|A|}{\alpha K}.$$
\end{itemize}

Therefore $\mathfrak D$ controls FDR and strictly dominates $\overline{\mathrm{BY}}_\alpha$.
\end{proof}

\subsection{Omitted proofs from  Section~\ref{sec:symmetric-ebh}}

\begin{proof}[Proof of Proposition~\ref{prop:symmetric-e-collection}]
\underline{The ``if'' direction} is immediate. 

\underline{The ``only if'' direction}: Suppose $\mathcal D$ is symmetric based on the e-collection $\mathbb M_{\bm{\lambda}^A}(\mathbf x^A)=\lambda_0^A+\sum_{i\in A}\lambda_i^A x_i$ for all finite $\mathbf x^A=(x_i)_{i\in A}$ and all $A\subseteq \cK$. We show that $\mathcal D$ can also be induced by a symmetric e-collection with constant terms $(\lambda_0^1,\dots,\lambda_0^K)\in [0,1]^K$ by induction on $|A|$. Suppose that, for some
$n\in\cK$, the e-collection $\mathbb M_{\bm\lambda^B}=\mathbb M_{\lambda_0^{|B|}}^{\mathrm{sym}}$ for every $B\subseteq\cK$ with $|B|<n$. We show that all weighted-mean e-collections indexed by the sets of size $n$ can also be replaced by symmetric ones without changing $\mathcal D$.

Fix $A\subseteq\cK$ with $|A|=n$. For every finite
$\mathbf x^A=(x_i)_{i\in A}$, define
$$E_A(\mathbf x^A)=\frac{1}{K!}\sum_{\sigma}\mathbb M_{\bm\lambda^{\sigma^{-1}(A)}}\left((x_{\sigma(i)})_{i\in\sigma^{-1}(A)}\right),$$
where the sum is over all permutations $\sigma$ of $\cK$. We set
$E_A(\mathbf x^A)=\infty$ whenever $\max(\mathbf x^A)=\infty$. Note that each $B\subseteq\cK$ with $|B|=n$ occurs equally often as $\sigma^{-1}(A)$, so the constant term of $E_A$ is
$\lambda_0^n:=\sum_{\substack{B\subseteq\cK, |B|=n}}\lambda_0^B / \binom{K}{n}$.
Moreover, the permutation average is invariant under every permutation
of the elements of $A$, and hence all coordinate coefficients of
$E_A$ are equal. Therefore, we obtain
$$E_A(\mathbf x^A)=\lambda_0^n+(1-\lambda_0^n)\bar x_A=\mathbb M_{\lambda_0^n}^{\mathrm{sym}}(\mathbf x^A).$$

We claim that simultaneously replacing $\mathbb M_{\bm\lambda^A}$ by $E_A$ for every $A$ with $|A|=n$ does
not change $\mathcal D$. Fix $\mathbf e\in[0,\infty]^K$ and $R\subseteq\cK$. If $\mathbb M_{\bm\lambda^B}(\mathbf e^B)< \FDP_B(R) / \alpha$ for some $B\subseteq\cK$ with $|B|<n$, then $R$ is excluded
both before and after the replacement. It remains to consider the case in
which $\mathbb M_{\bm\lambda^B}(\mathbf e^B)\ge \FDP_B(R) / \alpha$ for every $B\subseteq\cK$ with $|B|<n$.
If $e_i=\infty$ for some $i\in A$, then $\mathbb M_{\bm\lambda^A}(\mathbf e^A)=E_A(\mathbf e^A)=\infty$. Therefore, suppose that $\mathbf e^A$ is finite. Define $\mathbf y=(y_1,\dots,y_K)$ by
$$
y_i
=
\begin{cases}
e_i, & i\in A,\\
\infty, & i\notin A.
\end{cases}
$$
\begin{itemize}
\item[(i)] For every nonempty $B\subsetneq A$, $\mathbb M_{\bm\lambda^B}(\mathbf y^B)=\mathbb M_{\bm\lambda^B}(\mathbf e^B)\ge \FDP_B(R) / \alpha$. For every $B\setminus A\neq \varnothing$, $\mathbb M_{\bm\lambda^B}(\mathbf y^B)=\infty$. Thus, we have
\begin{equation}\label{eq:equivalence_1}
R\in\mathcal D(\mathbf y) \ \Longleftrightarrow \ \mathbb M_{\bm\lambda^A}(\mathbf e^A)\ge \frac{\FDP_A(R)}{\alpha}.
\end{equation}

\item[(ii)] Fix a permutation $\sigma$ of $\cK$. For every $B\subsetneq\sigma^{-1}(A)$, since $\sigma(B)\subsetneq A$, $\mathbb M_{\bm\lambda^B}(\mathbf y_\sigma^B)=\mathbb M_{\bm\lambda^{\sigma(B)}}(\mathbf y^{\sigma(B)}) \ge \FDP_{\sigma(B)}(R) / \alpha=\FDP_B(\sigma^{-1}(R)) / \alpha$.
For every $B\setminus \sigma^{-1}(A)\neq \varnothing$, $\mathbb M_{\bm\lambda^B}(\mathbf y_\sigma^B)=\infty$. By $\FDP_{\sigma^{-1}(A)}(\sigma^{-1}(R))=\FDP_A(R)$, we know
\begin{equation}\label{eq:equivalence_2}
\sigma^{-1}(R)\in\mathcal D(\mathbf y_\sigma) \ \Longleftrightarrow \ \mathbb M_{\bm\lambda^{\sigma^{-1}(A)}}\left((e_{\sigma(i)})_{i\in\sigma^{-1}(A)}\right)\ge \frac{\FDP_A(R)}{\alpha}.
\end{equation}
\end{itemize}

Since $\mathcal D$ is symmetric, $\mathcal D(\mathbf y_\sigma)=\sigma^{-1}(\mathcal D(\mathbf y))$.
Then for every permutation $\sigma$, the equivalences \eqref{eq:equivalence_1} and \eqref{eq:equivalence_2} give
$$\mathbb M_{\bm\lambda^A}(\mathbf e^A)\ge\frac{\FDP_A(R)}{\alpha}
\ \Longleftrightarrow\
\mathbb M_{\bm\lambda^{\sigma^{-1}(A)}}\left((e_{\sigma(i)})_{i\in\sigma^{-1}(A)}\right)\ge \frac{\FDP_A(R)}{\alpha}.$$
Therefore, we conclude that
$$\mathbb M_{\bm\lambda^A}(\mathbf e^A)\ge \frac{\FDP_A(R)}{\alpha}
\ \Longleftrightarrow\
E_A(\mathbf e^A)\ge \frac{\FDP_A(R)}{\alpha}.$$
This proves that simultaneously replacing $\mathbb M_{\bm\lambda^A}$ by $E_A$ for all $A$ with
$|A|=n$ leaves $\mathcal D$ unchanged. Finally, starting with $n=1$ and continuing to $n=K$ yields the desired symmetric weighted-mean e-collection representation of $\mathcal D$.
\end{proof}

\begin{proof}[Proof of Proposition~\ref{prop:symmetric-weighted-ebh-representation}]
Recall Step 1 of Theorem~\ref{thm:admissible-weighted-ebh-dominator}. When $\mathcal D\in\mathrm{SP}_\alpha^{\mathrm{sym}}$, we have $f_{\sigma^{-1}(A)}(\mathbf x_\sigma)=f_A(\mathbf x)$ and $F_{\sigma^{-1}(A)}(\mathbf x_\sigma)=F_A(\mathbf x)$ for each $A\subseteq \cK$ and every permutation $\sigma$ of $\cK$. By Theorem~\ref{thm:admissible-weighted-ebh-dominator}, there exists $(\lambda_0,(\lambda_i)_{i\in A})\in\Delta_{|A|+1}$ such that $F_{A}((x_i)_{i\in A})\le\lambda_0+\sum_{i\in A}\lambda_i x_i$ for all finite $\mathbf x$. Then averaging this inequality over all permutations of $A$ gives $F_{A}(\mathbf x)\le\lambda_0+(1-\lambda_0)\bar x_A$ for all finite $\mathbf x$. Here the constant term $\lambda_0$ can be chosen to depend only on $|A|$: carry out the construction for any fixed set with size $|A|$, and denote the resulting constant by $\lambda_0$; relabeling the coordinates then gives the same symmetric bound for every set of size $|A|$.
Hence, if $e_i<\infty$ for all $i\in A$, we have
$$f_A(\mathbf e)\le\lambda_0^{|A|}+(1-\lambda_0^{|A|})\bar e_A.$$
If $e_i=\infty$ for some $i\in A$, there is nothing to prove. Thus, $\mathcal D$ is strongly dominated by a symmetric weighted-mean $\overline{\mathrm{eBH}}$ procedure.
\end{proof}

 The proof of Proposition~\ref{prop:symmetric-point-representation} follows from Proposition~\ref{prop:symmetric-weighted-ebh-representation}, and it is omitted. 

\begin{proof}[Proof of Proposition~\ref{prop:symmetric-point-largest}]
(i). We first show that $\mathfrak D_\alpha$ controls FDR. If all hypotheses are null, then by Markov's inequality,
$$\mathbb P\left(\sum_{i=1}^K E_i\ge K/\alpha\right)\le\frac{\alpha}{K}\sum_{i=1}^K\mathbb E[E_i]\le\alpha.$$
If the true null set $N$ has size $m<K$, then on the event that $\mathfrak D_\alpha$ rejects, it rejects all, and hence
\[
\mathrm{FDP}_N(\mathcal K)\le\frac{K-1}{K}\le\alpha.
\]
The case $N=\varnothing$ is trivial. Thus $\mathfrak D_\alpha\in\mathrm{PP}_\alpha^{\mathrm{sym}}$.

Next, we show that $\mathfrak D_\alpha$ dominates every symmetric point procedure. Let $\mathfrak D\in\mathrm{PP}_\alpha^{\mathrm{sym}}$. We claim that $\mathfrak D(\mathbf e)\ne\varnothing$ implies $\sum_{i=1}^K e_i\ge K/\alpha$. Suppose otherwise for contradiction. Choose $p\in(\alpha,1]$ such that $p\sum_{i=1}^K e_i/K\le1$. Set $\cK$ as the true null set. Let $\Sigma$ be uniformly distributed over all permutations of $\mathcal K$, and define 
$$
\mathbf X
=
\begin{cases}
\mathbf e_\Sigma, & \text{with probability }p,\\
\mathbf 0, & \text{with probability }1-p.
\end{cases}
$$
For every $i\in \cK$, $\E [X_i]=p\sum_{j=1}^K e_j/K\le1$. By symmetry, $\mathfrak D(\mathbf e_\Sigma)=\Sigma^{-1}(\mathfrak D(\mathbf e))$. Therefore $\mathrm{FDR}_{\mathfrak D}(\mathbf X)\ge p>\alpha$. Thus every $\mathfrak D\in\mathrm{PP}_\alpha^{\mathrm{sym}}$ can reject only inside the region $\sum_i e_i\ge K/\alpha$. On this region $\mathfrak D_\alpha$ rejects $\mathcal K$, so $\mathfrak D_\alpha$ is largest.

(ii) For $K=2$, we have $1/K=(K-1)/K=1/2$.

\underline{The ``if'' direction}: Let $\mathfrak D\in \mathrm{PP}_\alpha^{\mathrm{sym}}$. Since $\alpha<1/K$, Theorem~\ref{thm:mean-ebh-symmetric-largest} gives $\mathfrak D(\mathbf e)\in \overline{\mathrm{eBH}}_\alpha^{\mathsf{m}}(\mathbf e)$ for all $\mathbf e$. Note that $\overline{\mathrm{eBH}}_\alpha^{\mathsf{m}}(\mathbf e)$ always contains a unique maximal set when $K=2$: if both $\{1\}$ and $\{2\}$ belong to $\overline{\mathrm{eBH}}_\alpha^{\mathsf{m}}(\mathbf e)$, then $\{1,2\}\in \overline{\mathrm{eBH}}_\alpha^{\mathsf{m}}(\mathbf e)$. Therefore $\mathfrak D \subseteq \overline{\mathrm{ebh}}_\alpha^{\mathsf{m}}$, and hence $\overline{\mathrm{ebh}}_\alpha^{\mathsf{m}}$ is the largest element. 

\underline{The ``only if'' direction}: Since $\alpha\ge (K-1)/K$, $\mathfrak D_\alpha$ introduced in (i) strictly dominates $\overline{\mathrm{ebh}}_\alpha^{\mathsf{m}}$.

(iii). Choose distinct numbers $y_1,\dots,y_{K-1}$ satisfying $(K-2) / (\alpha(K-1))<y_i<(K-1) / (\alpha K)$ for every $i=1,\dots,K-1$. Set $\mathbf y=(y_1,\dots,y_{K-1},\infty)$. Let $R_1=\mathcal K\setminus\{1\}$ and $R_2=\mathcal K\setminus\{2\}$. We claim that both $R_1$ and $R_2$ belong to $\overline{\mathrm{eBH}}_\alpha^{\mathsf{m}}(\mathbf y)$. Indeed, we only need to check $A$ with $K\notin A$. For $j=1,2$, we have $|A\cap R_j| / |R_j|\le (K-2) / (K-1)$, and hence
$$\frac1{|A|}\sum_{i\in A}y_i>\frac{K-2}{\alpha(K-1)}\ge\frac{|A\cap R_j|}{\alpha |R_j|}.$$
Thus $R_j\in \overline{\mathrm{eBH}}_\alpha^{\mathsf{m}}(\mathbf y)$ for $j=1,2$.

Define two point procedures $\mathfrak D_1$ and $\mathfrak D_2$ by $\mathfrak D_j(\mathbf y_\sigma)=\sigma^{-1}(R_j), j=1,2$ for every permutation $\sigma$ on $\cK$, and set $\mathfrak D_j(\mathbf e)=\varnothing$ outside this orbit. These two procedures are symmetric. Moreover, their outputs belong to the corresponding $\overline{\mathrm{eBH}}_\alpha^{\mathsf{m}}$ collection at every input. Therefore both $\mathfrak D_1$ and $\mathfrak D_2$ belong to $\mathrm{PP}_\alpha^{\mathrm{sym}}$. However, they are incomparable at $\mathbf y$, because $\mathfrak D_1(\mathbf y)=R_1$ and $\mathfrak D_2(\mathbf y)=R_2$. Suppose, for contradiction, that a largest element $\mathfrak G$ of $\mathrm{PP}_\alpha^{\mathrm{sym}}$ exists. Then $\mathfrak G$ must dominate both $\mathfrak D_1$ and $\mathfrak D_2$, and so $\mathfrak G(\mathbf y)=\cK$.

Pick the true null set $N=\{1,\dots,K-1\}$. Since $\max_{i\in N}y_i< (K-1) / (\alpha K)$, there exists $p\in(0,1]$ such that
$p y_i\le1$ for all $i\in N$ and $p (K-1) / K>\alpha$.
Define 
$$
\mathbf X
=
\begin{cases}
\mathbf y, & \text{with probability }p,\\
\mathbf 0, & \text{with probability }1-p.
\end{cases}
$$
For every $i\in N$, $\E [X_i]=p y_i\le1$. But
$$
\mathrm{FDR}_{\mathfrak G}(\mathbf X)
\ge
p\,\mathrm{FDP}_N(\mathcal K)
=
p\frac{K-1}{K}
>
\alpha.
$$
Therefore no largest element exists.
\end{proof}

Next, we prove  Theorem~\ref{thm:constant-term-admissibility-criterion}. We first need a technical lemma. 

\begin{lemma}
Suppose \eqref{eq:simple-constant-cap}--\eqref{eq:simple-constant-smooth}. For all $1\le b<n\le K$, we have 
\begin{equation}\label{eq:simple-pairwise-lambda}
\lambda_0^b<\lambda_0^n+\frac{n-b}{n}\frac{1-\lambda_0^n}{1-\alpha}.
\end{equation}
\end{lemma}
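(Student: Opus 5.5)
We will prove \eqref{eq:simple-pairwise-lambda} by induction on $n-b$, after rewriting it in terms of $\mu_m:=1-\lambda_0^m$. By \eqref{eq:simple-constant-cap}, $\mu_m$ is nondecreasing in $m$. Moreover, $\alpha(\lfloor 1/\alpha\rfloor+1)>1$, so $\lambda_0^m\le\lambda_0^1<1$ and hence $\mu_m>0$ for all $m$. In this notation \eqref{eq:simple-pairwise-lambda} reads
\[
\mu_b>\mu_n\Bigl(1-\frac{n-b}{n(1-\alpha)}\Bigr),
\]
and the smoothness condition \eqref{eq:simple-constant-smooth} reads $\mu_{m-1}>\mu_m\bigl(1-\frac{1}{m(1-\alpha)}\bigr)$ for $m\ge2$.

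\textbf{Trivial case and base case.} Write $x_{b}:=\frac{n-b}{n(1-\alpha)}$. If $x_b\ge1$, the right-hand side is $\le0<\mu_b$, and the claim holds. So we only need to treat $b$ with $x_b<1$. For $b=n-1$ the claim is exactly \eqref{eq:simple-constant-smooth} with $m=n$, because $\frac{n-b}{n}=\frac1n$.

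\textbf{Inductive step.} Fix $n$ and suppose the claim holds for the pair $(b+1,n)$ with $b+1<n$, in the possibly weaker form $\mu_{b+1}\ge\mu_n(1-x_{b+1})$; this is automatic when $x_{b+1}\ge1$, so the hypothesis is always available. Assume $x_b<1$, which forces $x_{b+1}<1$. Put $y=\frac{1}{(b+1)(1-\alpha)}$.
\begin{itemize}
\item[(a)] If $1-y\le0$, then \eqref{eq:simple-constant-smooth} only gives $\mu_b>\mu_{b+1}(1-y)$, whose right-hand side is nonpositive, so this is no help. In this case we argue directly from monotonicity: $\mu_b\ge\mu_1$, and we want $\mu_1>\mu_n(1-x_b)$. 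Here I would try to reduce to small $b$ (note $1-y\le0$ forces $b+1\le1/(1-\alpha)$) and check whether $1-x_b\le0$ then; if not, handle it using the cap \eqref{eq:simple-constant-cap} and the tail condition \eqref{eq:simple-constant-tail}. This sub-case is the step I expect to be delicate.
\item[(b)] If $1-y>0$, chaining the two bounds gives
\[
\mu_b>\mu_{b+1}(1-y)\ge\mu_n(1-x_{b+1})(1-y).
\]
It then suffices to show $(1-x_{b+1})(1-y)\ge1-x_b=1-x_{b+1}-\frac{1}{n(1-\alpha)}$. This is equivalent to $y(1-x_{b+1})\le\frac{1}{n(1-\alpha)}$, i.e.\ $n(1-x_{b+1})\le b+1$. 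Since $1-x_{b+1}=1-\frac{n-b-1}{n(1-\alpha)}$, this becomes $n-\frac{n-b-1}{1-\alpha}\le b+1$, i.e.\ $(n-b-1)\bigl(1-\frac1{1-\alpha}\bigr)\le0$, which holds because $\frac1{1-\alpha}>1$.
\end{itemize}

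\textbf{Main obstacle.} The only real difficulty is sign bookkeeping, namely the sub-case (a) where $1-\frac{1}{m(1-\alpha)}\le0$ for some small index $m$. There \eqref{eq:simple-constant-smooth} is vacuous, and one must instead invoke $\mu_b>0$, which comes from the cap \eqref{eq:simple-constant-cap}. Strictness in the final inequality comes from the single strict application of \eqref{eq:simple-constant-smooth} at the last step of the chain (or from $\mu_b>0$ in the trivial case).
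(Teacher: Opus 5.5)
Your substitution $\mu_m=1-\lambda_0^m$, the trivial case $x_b\ge 1$, the base case $b=n-1$, and the algebra in sub-case (b) are all correct. The proof is still incomplete as written, because sub-case (a) is left open. You offer only a tentative plan that would invoke the cap and the tail condition \eqref{eq:simple-constant-tail}.

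The missing observation is that sub-case (a) never occurs. In the inductive step you assume $x_b<1$, which is equivalent to $b>\alpha n$. Since $n\ge b+1$, this gives $\alpha<b/n\le b/(b+1)$, hence $(b+1)(1-\alpha)>1$, i.e.\ $y<1$. So whenever the claim is not trivial you are automatically in sub-case (b). Neither the tail condition nor the monotonicity of $\lambda_0^m$ is needed. Besides \eqref{eq:simple-constant-smooth}, the only input is $\lambda_0^b<1$, which you already took from the cap.

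With this one-line fix, your argument is essentially the paper's. The paper also splits into $b\le\alpha n$, where the right-hand side of \eqref{eq:simple-pairwise-lambda} is at least $1>\lambda_0^b$, and $b>\alpha n$. In the second case it uses $b+1>1/(1-\alpha)$ precisely to make every factor $1-1/(j(1-\alpha))$, $j=b+1,\dots,n$, positive. It then iterates \eqref{eq:simple-constant-smooth} to obtain $1-\lambda_0^b>(1-\lambda_0^n)\prod_{j=b+1}^n\bigl(1-1/(j(1-\alpha))\bigr)$. Finally it bounds the product below by $1-(n-b)/(n(1-\alpha))$ by induction, with the same step inequality $(n-b-1)\bigl(1/(1-\alpha)-1\bigr)\ge 0$ that you derive in (b). You fuse the two inductions into one, which is fine.
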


\begin{proof}
If $b\le\alpha n$, then \eqref{eq:simple-pairwise-lambda} directly follows, since the right-hand side of \eqref{eq:simple-pairwise-lambda} is at least $1$ and $\lambda_0^b<1$. It remains to consider $b>\alpha n$. Since $b<n$, we have $\alpha<b/n\le b/(b+1)$, and hence $b+1>1/(1-\alpha)$. Thus, for all $j=b+1,\dots,n$, we have $j(1-\alpha)>1$, and by \eqref{eq:simple-constant-smooth},
$$1-\lambda_0^{j-1}>(1-\lambda_0^j)\left(1-\frac{1}{j(1-\alpha)}\right).$$
Iterating gives
$$1-\lambda_0^b>(1-\lambda_0^n)\prod_{j=b+1}^n\left(1-\frac{1}{j(1-\alpha)}\right).
$$
Moreover, by induction we obtain
$$\prod_{j=b+1}^n\left(1-\frac{1}{j(1-\alpha)}\right)\ge1-\frac{n-b}{n(1-\alpha)}.$$
Therefore $1-\lambda_0^b>(1-\lambda_0^n)\{1-(n-b)/(n(1-\alpha))\}$, which is equivalent to the upper bound in \eqref{eq:simple-pairwise-lambda}.
\end{proof}

\begin{proof}[Proof of Theorem~\ref{thm:constant-term-admissibility-criterion}]
It suffices to show that if $\mathcal G\in\mathrm{SP}_\alpha$ satisfies $\overline{\mathrm{eBH}}_\alpha^{\bm{\lambda}}\preceq_{\mathrm w}\mathcal G$, then $\mathcal G\preceq_{\mathrm s}\overline{\mathrm{eBH}}_\alpha^{\bm{\lambda}}$. Suppose otherwise for contradiction. Then there exist $\mathbf e=(e_1,\dots,e_K)$, $R\in\mathcal G(\mathbf e)\setminus\overline{\mathrm{eBH}}_\alpha^{\bm{\lambda}}(\mathbf e)$ and a nonempty $A\subseteq\cK$ such that
$$\mathbb M_{\bm{\lambda}^A}(\mathbf e^{A})<\frac{|A\cap R|}{\alpha |R|}.$$
This implies $e_i<\infty$ for all $i\in A$. Write $n=|A|$. We split into two cases.

\textbf{Case 1}: $\lambda_0^n=0$. Choose $p>0$ so small that $p e_i<1$ for all $i\in A$ and, with $q=\alpha(1-p \bar e_A)$, $p+q<1$. Define $\mathbf y=(y_1,\dots,y_K)$ by
$$
y_i=
\begin{cases}
\dfrac{1-p e_i}{q}, & i\in A,\\[1ex]
\infty, & i\notin A.
\end{cases}
$$
We claim that $A\in\overline{\mathrm{eBH}}_\alpha^{\bm{\lambda}}(\mathbf y)$ and $\cK\in\overline{\mathrm{eBH}}_\alpha^{\bm{\lambda}}(\mathbf y)$ for sufficiently small $p$.

It suffices to check nonempty $B\subseteq A$, since $\mathbb M_{\bm{\lambda}^B}(\mathbf y^B)=\infty$ whenever $B\setminus A\ne\varnothing$. Write $b=|B|$. If $B=A$, then $\mathbb M_{\bm{\lambda}^B}(\mathbf y^B)=1/\alpha$. If $B\subsetneq A$, then $\mathbb M_{\bm{\lambda}^B}(\mathbf y^B)\to \lambda_0^b+(1-\lambda_0^b)/\alpha$ as $p\downarrow0$. By \eqref{eq:simple-pairwise-lambda} with $\lambda_0^n=0$,
$$\lambda_0^b+(1-\lambda_0^b)\frac1\alpha>\frac{b}{\alpha n}.$$
Hence, for sufficiently small $p$, $\mathbb M_{\bm{\lambda}^B}(\mathbf y^B)\ge b/(\alpha n)$ for all nonempty $B\subseteq A$. This gives $A\in\overline{\mathrm{eBH}}_\alpha^{\bm{\lambda}}(\mathbf y)$. The same inequalities also give $\cK\in\overline{\mathrm{eBH}}_\alpha^{\bm{\lambda}}(\mathbf y)$. Since $\cK\in\overline{\mathrm{eBH}}_\alpha^{\bm{\lambda}}(\mathbf y)$, $\overline{\mathrm{eBH}}_\alpha^{\bm{\lambda}}\preceq_{\mathrm w}\mathcal G$ forces $\overline{\mathrm{eBH}}_\alpha^{\bm{\lambda}}(\mathbf y)\subseteq\mathcal G(\mathbf y)$, and in particular $A\in\mathcal G(\mathbf y)$.

Set $A$ as the true null set and define
$$
\mathbf X=
\begin{cases}
\mathbf e, & \text{with probability }p,\\
\mathbf y, & \text{with probability }q,\\
\mathbf 0, & \text{with probability }1-p-q.
\end{cases}
$$
For every $i\in A$, $\E[X_i]=p e_i+q y_i=1$. Since $R\in\mathcal G(\mathbf e)$ and $A\in\mathcal G(\mathbf y)$,
$$\FDR_\mathcal G(\mathbf X)\ge p\frac{|A\cap R|}{|R|}+q=\alpha+\alpha p\left\{\frac{|A\cap R|}{\alpha |R|}-\mathbb M_{\bm{\lambda}^A}(\mathbf e^{A})\right\}>\alpha.$$

\textbf{Case 2}: $\lambda_0^n>0$. This implies $n\le K-\lfloor1/\alpha\rfloor$, so we may choose $L\subseteq\cK$ such that $|L|=\lfloor1/\alpha\rfloor+1$ and $|A\cap L|=1$. Let $a=(1/\alpha-\lambda_0^n) / (1-\lambda_0^n)$ and $h=(\tau-\lambda_0^n) / (1-\lambda_0^n)$, where $\tau=1/(\alpha (\lfloor1/\alpha\rfloor+1))$.
Then $a>1$ and $0\le h<1$. Choose $p>0$ sufficiently small and define
$$
q_+=\frac{1-p \bar e_A-(1-p)h}{a-h}
\quad\text{and}\quad
q_-=1-p-q_+.
$$
At $p=0$, we have $q_+=(1-h)/(a-h)>0$ and $q_-=(a-1)/(a-h)>0$. Hence $q_+>0$ and $q_->0$ for sufficiently small $p$. Define $\mathbf y^+=(y_1^+,\dots,y_K^+)$ and $\mathbf y^-=(y_1^-,\dots,y_K^-)$ by
$$
y_i^-=
\begin{cases}
h, & i\in A,\\[1ex]
\infty, & i\notin A,
\end{cases}
\quad\text{and}\quad
y_i^+=
\begin{cases}
\dfrac{1-p e_i-q_-h}{q_+}, & i\in A,\\[1ex]
\infty, & i\notin A.
\end{cases}
$$

We first claim that $A\in\overline{\mathrm{eBH}}_\alpha^{\bm{\lambda}}(\mathbf y^+)$ and $\cK\in\overline{\mathrm{eBH}}_\alpha^{\bm{\lambda}}(\mathbf y^+)$ for sufficiently small $p$. It suffices to check nonempty $B\subseteq A$. If $B=A$, then $\mathbb M_{\bm{\lambda}^B}((y_i^+)_{i\in B})=1/\alpha$. If $B\subsetneq A$ and $b=|B|$, then $\mathbb M_{\bm{\lambda}^B}((y_i^+)_{i\in B})\to \lambda_0^b+(1-\lambda_0^b)a$ as $p\downarrow0$. By the upper bound in \eqref{eq:simple-pairwise-lambda}, we get
$$\lambda_0^b+(1-\lambda_0^b)a>\frac{b}{\alpha n}.$$
Hence, for sufficiently small $p$, $\mathbb M_{\bm{\lambda}^B}((y_i^+)_{i\in B})\ge b/(\alpha n)$ for all nonempty $B\subseteq A$. This gives $A\in\overline{\mathrm{eBH}}_\alpha^{\bm{\lambda}}(\mathbf y^+)$. The same inequalities also give $\cK\in\overline{\mathrm{eBH}}_\alpha^{\bm{\lambda}}(\mathbf y^+)$. Then $\overline{\mathrm{eBH}}_\alpha^{\bm{\lambda}}\preceq_{\mathrm w}\mathcal G$ forces $\overline{\mathrm{eBH}}_\alpha^{\bm{\lambda}}(\mathbf y^+)\subseteq\mathcal G(\mathbf y^+)$. In particular, $A\in\mathcal G(\mathbf y^+)$.

Next we claim that, if a nonempty $S\subseteq\cK$ satisfies $|A\cap S|/ (\alpha |S|) \le\tau$, then $S\in\overline{\mathrm{eBH}}_\alpha^{\bm{\lambda}}(\mathbf y^-)$. It still suffices to check nonempty $B\subseteq A$. Write $b=|B|$. Since $\lambda_0^b\ge\lambda_0^n$ and $h<1$, we have $\mathbb M_{\bm{\lambda}^B}((y_i^-)_{i\in B})=\lambda_0^b+(1-\lambda_0^b)h\ge\lambda_0^n+(1-\lambda_0^n)h=\tau$. Therefore
$$
\mathbb M_{\bm{\lambda}^B}((y_i^-)_{i\in B})
\ge
\tau
\ge
\frac{|A\cap S|}{\alpha |S|}
\ge
\frac{|B\cap S|}{\alpha |S|}.
$$
This gives $S\in\overline{\mathrm{eBH}}_\alpha^{\bm{\lambda}}(\mathbf y^-)$. Applying this claim to $S=L$ gives $L\in\overline{\mathrm{eBH}}_\alpha^{\bm{\lambda}}(\mathbf y^-)$ and $\FDP_A(L)=1/(\lfloor1/\alpha\rfloor+1)$. By $\overline{\mathrm{eBH}}_\alpha^{\bm{\lambda}}\preceq_{\mathrm w}\mathcal G$, either $\overline{\mathrm{eBH}}_\alpha^{\bm{\lambda}}(\mathbf y^-)\subseteq\mathcal G(\mathbf y^-)$, in which case $L\in\mathcal G(\mathbf y^-)$, or $\overline{\mathrm{eBH}}_\alpha^{\bm{\lambda}}(\mathbf y^-)^\downarrow\subsetneq\mathcal G(\mathbf y^-)^\downarrow$. In the second case, choose $U\in\mathcal G(\mathbf y^-)^\downarrow\setminus\overline{\mathrm{eBH}}_\alpha^{\bm{\lambda}}(\mathbf y^-)^\downarrow$ and then choose $S\in\mathcal G(\mathbf y^-)$ such that $U\subseteq S$. Thus $S\notin\overline{\mathrm{eBH}}_\alpha^{\bm{\lambda}}(\mathbf y^-)^\downarrow$. The claim therefore forces $|A\cap S|/ (\alpha |S|)>\tau$, and hence $\FDP_A(S)>1/(\lfloor1/\alpha\rfloor+1)$. Thus, in all cases,
$$\max_{S\in\mathcal G(\mathbf y^-)}\FDP_A(S)\ge\frac1{\lfloor1/\alpha\rfloor+1}.$$

Set $A$ as the true null set and define
$$
\mathbf X=
\begin{cases}
\mathbf e, & \text{with probability }p,\\
\mathbf y^+, & \text{with probability }q_+,\\
\mathbf y^-, & \text{with probability }q_-.
\end{cases}
$$
For every $i\in A$, $\E[X_i]=p e_i+q_+y_i^++q_-y_i^-=1$. Moreover,
$$
\frac{q_+}{\alpha}+q_-\tau
=
1-p \mathbb M_{\bm{\lambda}^A}(\mathbf e^{A}).
$$
Thus we get
$$
\begin{aligned}
\FDR_\mathcal G(\mathbf X)
&\ge
p\frac{|A\cap R|}{|R|}
+
q_+
+
q_-\frac1{\lfloor1/\alpha\rfloor+1}  \\
&=
p\frac{|A\cap R|}{|R|}
+
\alpha (1-p \mathbb M_{\bm{\lambda}^A}(\mathbf e^{A})) \\
&=
\alpha+\alpha p\left(
\frac{|A\cap R|}{\alpha |R|}
-
\mathbb M_{\bm{\lambda}^A}(\mathbf e^{A})
\right)
>
\alpha.
\end{aligned}
$$
Thus we obtain the desired contradiction.
\end{proof}

Next, we prove Proposition~\ref{prop:beta-threshold-admissibility}.  
For $n\in\mathcal K$, set
$$\rho_n=\max\left\{\frac{\ell}{\alpha r}:1\le \ell\le n,\ \ell\le r\le K-n+\ell,\ \frac{\ell}{\alpha r}<1\right\},$$
with $\rho_n=0$ if the set is empty. 

\begin{lemma}\label{lem:beta-star-rho-equivalence}
For every \(\beta\in(0,1]\), $\beta\le\beta^*$ is equivalent to $\lambda_0^n(\beta)\le \rho_n$ for every $n\in\mathcal K$ such that $n<\beta K$.
\end{lemma}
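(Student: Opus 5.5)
The plan is to reduce the family of conditions $\lambda_0^n(\beta)\le\rho_n$, one for each $n<\beta K$, to two binding ones: the condition at $n=1$, and the conditions at the indices $n$ where $\rho_n=0$. These two will produce the two terms inside the minimum in \eqref{eq:beta-star-explicit}.

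\textbf{Step 1: explicit form of $\rho_n$.} Write $m=\lfloor1/\alpha\rfloor+1$ and $\tau=1/(\alpha m)$, as in the proof of Theorem~\ref{thm:constant-term-admissibility-criterion}. For fixed $\ell$, the largest admissible value $\ell/(\alpha r)<1$ is attained at the smallest $r>\ell/\alpha$, namely $r=\lfloor \ell/\alpha\rfloor+1$. This choice is allowed only if $\lfloor\ell/\alpha\rfloor+1\le K-n+\ell$.

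Since $\lfloor\ell/\alpha\rfloor\ge\ell\lfloor1/\alpha\rfloor$, we get
$$\lfloor\ell/\alpha\rfloor+1-\ell\ge \ell(\lfloor1/\alpha\rfloor-1)+1\ge\lfloor1/\alpha\rfloor=m-1.$$
Hence some $\ell$ is admissible if and only if $\ell=1$ is, that is, if and only if $n\le K-m+1=K-\lfloor1/\alpha\rfloor$. Consequently:
\begin{itemize}
\item[(a)] $\rho_n=0$ for $n\ge K-\lfloor1/\alpha\rfloor+1$;
\item[(b)] $\rho_n\ge\tau$ for $n\le K-\lfloor1/\alpha\rfloor$.
\end{itemize}

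\textbf{Step 2: monotonicity of the constants.} I would check two elementary facts about $\lambda_0^n(\beta)=(\beta K-n)/(\beta K-\alpha n)$ for $n<\beta K$:
\begin{itemize}
\item[(i)] it is strictly positive and strictly decreasing in $n$;
\item[(ii)] it is strictly increasing in $\beta K$, because $n>\alpha n$.
\end{itemize}

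\textbf{Step 3: sufficiency.} Assume $\beta\le\beta^*$. If $\beta K\le 1$ there is nothing to check. Otherwise $\beta K\le\max\{1,K-\lfloor1/\alpha\rfloor+1\}$ forces every $n<\beta K$ to satisfy $n\le K-\lfloor1/\alpha\rfloor$. So (b) applies to every such $n$, and it suffices to show $\lambda_0^1(\beta)\le\tau$, since then $\lambda_0^n(\beta)\le\lambda_0^1(\beta)\le\tau\le\rho_n$ by (i).

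Solving $(\beta K-1)/(\beta K-\alpha)\le\tau$ for $\beta K$ gives
$$\beta K\le\frac{1-\alpha\tau}{1-\tau}=\frac{\alpha\lfloor1/\alpha\rfloor}{\alpha(\lfloor1/\alpha\rfloor+1)-1}.$$
This is precisely the first term in \eqref{eq:beta-star-explicit}, so it holds under $\beta\le\beta^*$.

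\textbf{Step 4: necessity.} Suppose $\beta>\beta^*$. Then one of the two terms in the minimum is exceeded.
\begin{itemize}
\item[(i)] If $\beta K>\max\{1,K-\lfloor1/\alpha\rfloor+1\}$, take $n=\max\{1,K-\lfloor1/\alpha\rfloor+1\}<\beta K$. By (a), $\rho_n=0$ when $n=K-\lfloor1/\alpha\rfloor+1$. When $n=1$ with $K\le\lfloor1/\alpha\rfloor$, also $\rho_1=0$, because the only candidate $r=m$ exceeds $K$. In both cases $\lambda_0^n(\beta)>0=\rho_n$.
\item[(ii)] Otherwise $\beta K$ exceeds the first term. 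That term is at least $1$, since its denominator is at most its numerator, so $1<\beta K$ and $n=1$ is a relevant index. Step 3's computation reversed gives $\lambda_0^1(\beta)>\tau$. In this case $\rho_1=\tau$ if $K\ge m$, and $\rho_1=0$ otherwise, so either way $\lambda_0^1(\beta)>\rho_1$.
\end{itemize}

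\textbf{Main obstacle.} The delicate point is Step 1: showing that larger $\ell$ never give an admissible candidate when $\ell=1$ fails, and never push $\rho_n$ below $\tau$ when $\ell=1$ succeeds. The inequality $\lfloor\ell/\alpha\rfloor\ge\ell\lfloor1/\alpha\rfloor$ is what I would rely on for this. The edge cases $K\le\lfloor1/\alpha\rfloor$, which account for the $\max\{1,\cdot\}$ in \eqref{eq:beta-star-explicit}, need careful bookkeeping.
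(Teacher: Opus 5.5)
Your proof is correct. It rests on the same two facts about $\rho_n$ that the paper uses: for $n\le K-\lfloor1/\alpha\rfloor$ one has $\rho_n\ge\rho_1=1/(\alpha(\lfloor1/\alpha\rfloor+1))$, and $\rho_n=0$ for larger $n$. The two arguments are organized differently, however.

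The paper first rewrites each constraint with $n<\beta K$ as a threshold on $\beta K$, namely $\lambda_0^n(\beta)\le\rho_n\Longleftrightarrow \beta K\le n(1-\alpha\rho_n)/(1-\rho_n)$. It then proves the identity $K\beta^*=\min_{1\le n\le K} n(1-\alpha\rho_n)/(1-\rho_n)$, using that $x\mapsto\alpha(1-x)/(\alpha-x)$ is increasing and that $n\ge1$. Both directions are then immediate. The indices $n\ge\beta K$ are handled by noting that their threshold is at least $n\ge\beta K$.

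You instead fix $\beta$ and use that $\lambda_0^n(\beta)$ is decreasing in $n$. This collapses sufficiency to the single constraint $\lambda_0^1(\beta)\le 1/(\alpha(\lfloor1/\alpha\rfloor+1))$. For necessity you exhibit an explicit violating index, either $n=1$ or $n=\max\{1,K-\lfloor1/\alpha\rfloor+1\}$.

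Each route has its advantage. The paper's gives a closed-form description of $K\beta^*$ as a minimum of per-index thresholds, which shows at once where both terms of \eqref{eq:beta-star-explicit} come from. Yours avoids that identity. It also directly produces the witness $n$ with $\lambda_0^n(\beta)>\rho_n$, which is exactly what the ``only if'' part of Proposition~\ref{prop:beta-threshold-admissibility} consumes.

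Your derivation of the region where $\rho_n=0$, via $\lfloor\ell/\alpha\rfloor\ge\ell\lfloor1/\alpha\rfloor$, is an equally valid alternative to the paper's chain $\ell/r\ge 1/(K-n+1)\ge 1/\lfloor1/\alpha\rfloor\ge\alpha$.
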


\begin{proof}
We first show that 
\begin{equation}\label{E15}
    K\beta^*=\min_{1\le n\le K}\frac{n(1-\alpha\rho_n)}{1-\rho_n}.
\end{equation}
Denote $\gamma_n=\alpha\rho_n$, and define $h_\alpha(x)=(\alpha - \alpha x) / (\alpha-x),\ 0\le x<\alpha$. We have $n(1-\alpha\rho_n) / (1-\rho_n)=n h_\alpha(\gamma_n)$. Then we show $K\beta^*=\min_{1\le n\le K}n h_\alpha(\gamma_n)$.
\begin{itemize}
\item[(i)] Suppose $n\le K-\lfloor1/\alpha\rfloor$. Taking $\ell=1$ and $r=\lfloor1/\alpha\rfloor+1$ gives $\gamma_n\ge 1 / (\lfloor1/\alpha\rfloor+1)$. This lower bound is attained at $n=1$ over this region, because $\gamma_1=1/(\lfloor1/\alpha\rfloor+1)$. Since $h_\alpha$ is increasing, 
$$n h_\alpha(\gamma_n)\ge h_\alpha\left(\frac1{\lfloor1/\alpha\rfloor+1}\right)=\frac{\alpha \lfloor1/\alpha\rfloor}{\alpha(\lfloor1/\alpha\rfloor+1)-1}.$$

\item[(ii)] Suppose $n\ge K-\lfloor1/\alpha\rfloor+1$. Then $K-n+1\le \lfloor1/\alpha\rfloor$. For any 
$r\le K-n+\ell$, we have
$$\frac{\ell}{r}\ge\frac{\ell}{K-n+\ell}\ge\frac1{K-n+1}\ge\frac1{\lfloor1/\alpha\rfloor}\ge\alpha,$$
so no pair can satisfy $\ell/(\alpha r)<1$, and hence $\gamma_n=0$. From $n h_\alpha(0)=n$, the smallest value of $n h_\alpha(\gamma_n)$ over this region is $\max\{1,K-\lfloor1/\alpha\rfloor+1\}$.
\end{itemize}
Combining the two regions yields \eqref{E15}.

It remains to prove the equivalence. For $n<\beta K$, we have
$$\lambda_0^n(\beta)\le \rho_n\quad\Longleftrightarrow\quad\frac{\beta K-n}{\beta K-\alpha n}\le \rho_n\quad\Longleftrightarrow\quad\beta K\le\frac{n(1-\alpha\rho_n)}{1-\rho_n}.$$
Therefore, if $\beta\le\beta^*$, then $\lambda_0^n(\beta)\le \rho_n$ for every $n\in\mathcal K$ such that $n<\beta K$. Conversely, suppose $\lambda_0^n(\beta)\le \rho_n$ for every $n\in\mathcal K$ such that $n<\beta K$. If $n<\beta K$, then 
$n(1-\alpha\rho_n) / (1-\rho_n) \ge \beta K$. If $n\ge\beta K$, we also have 
$n(1-\alpha\rho_n) / (1-\rho_n) \ge n\ge \beta K$. Hence $\beta\le\beta^*$ follows from \eqref{E15}.
\end{proof}

\begin{proof}[Proof of Proposition~\ref{prop:beta-threshold-admissibility}]
\underline{The ``if'' direction}: Suppose $\beta\le\beta^*$. We verify that the constants $\lambda_0^n(\beta)$ satisfy
\eqref{eq:simple-constant-cap}, \eqref{eq:simple-constant-tail} and \eqref{eq:simple-constant-smooth} respectively. 

\begin{itemize}
\item[(i)] The sequence $\lambda_0^n(\beta)$ is decreasing in $n$. If $\beta K\le 1$, \eqref{eq:simple-constant-cap} directly follows. If $\beta K > 1$, \eqref{eq:simple-constant-cap} follows from $\beta\le \beta^*$ and the definition \eqref{eq:beta-star-explicit}.

\item[(ii)] Since $\beta K\le \max\{1,K-\lfloor1/\alpha\rfloor+1\}$, we know that $n>K-\lfloor1/\alpha\rfloor$ implies $n\ge \max\{1,K-\lfloor1/\alpha\rfloor+1\}\ge \beta K$. Thus \eqref{eq:simple-constant-tail} holds.

\item[(iii)] If $n<\beta K$, then
$$\lambda_0^{n-1}(\beta)-\lambda_0^n(\beta)=\frac{\beta K(1-\alpha)}{(\beta K-\alpha(n-1))(\beta K-\alpha n)}<\frac1{\beta K-\alpha n}=\frac{1-\lambda_0^n(\beta)}{n(1-\alpha)}.$$
If $n-1\ge \beta K$, there is nothing to prove. If $n-1<\beta K\le n$, then $\lambda_0^n(\beta)=0$, and writing $\beta K=n-\delta$ with $\delta\in[0,1)$ gives
$$\lambda_0^{n-1}(\beta)=\frac{1-\delta}{n(1-\alpha)+\alpha-\delta}<\frac1{n(1-\alpha)}.$$
Thus \eqref{eq:simple-constant-smooth} holds.
\end{itemize}

Therefore Theorem~\ref{thm:constant-term-admissibility-criterion} implies that $\overline{\mathrm{eBH}}_\alpha^{\bm{\lambda}(\beta)}$ is admissible.

\underline{The ``only if'' direction}: Assume $\beta>\beta^*$. By Lemma~\ref{lem:beta-star-rho-equivalence}, there exists $n\in \cK$ with $n< \beta K$ such that $\lambda_0^n(\beta)>\rho_n$. Define
$$
d_m=
\begin{cases}
\rho_n, & m=n,\\
\lambda_0^m(\beta), & m\ne n.
\end{cases}
$$
Let $\widetilde{\mathcal D}$ be the symmetric weighted-mean $\overline{\mathrm{eBH}}_\alpha$ procedure based on the e-collection
$$
\widetilde E_A(\mathbf e)
=
\begin{cases}
d_{|A|}+(1-d_{|A|})\bar e_A,
& e_i<\infty\text{ for all }i\in A,\\
\infty,
& e_i=\infty\text{ for some }i\in A.
\end{cases}
$$

We first show $\overline{\mathrm{eBH}}_\alpha^{\bm{\lambda}(\beta)}\preceq_{\mathrm s}\widetilde{\mathcal D}$. Take any $\mathbf e$ and any $R\in\overline{\mathrm{eBH}}_\alpha^{\bm{\lambda}(\beta)}(\mathbf e)$. Only the sets $A$ with $|A|=n$ need to be checked. If $e_i=\infty$ for some $i\in A$, then $\widetilde E_A(\mathbf e)=\infty$, and there is nothing to prove. Hence assume $e_i<\infty$ for all $i\in A$ and put $t=|A\cap R|/(\alpha(|R|\vee1))$.

\begin{itemize}
\item[(a)] If $t=0$, there is nothing to prove.

\item[(b)] If $0<t<1$, then we can write $t=\ell/(\alpha r)$, where $1\le \ell=|A\cap R|\le n$ and $\ell\le r=|R|\le K-n+\ell$. By definition $t\le\rho_n=d_n$, and so $\widetilde E_A(\mathbf e)\ge d_n\ge t$.

\item[(c)] If $t=1$, then $\mathbb M_{\bm{\lambda}^A(\beta)}(\mathbf e^{A})=\lambda_0^n(\beta)+(1-\lambda_0^n(\beta))\bar e_A\ge1$ implies $\bar e_A\ge1$, and therefore $\widetilde E_A(\mathbf e)=d_n+(1-d_n)\bar e_A\ge 1$ follows from $\lambda_0^n(\beta)>d_n$.

\item[(d)] If $t>1$, then from $\lambda_0^n(\beta)>d_n$, we get $(t-d_n) / (1-d_n) < (t-\lambda_0^n(\beta)) / (1-\lambda_0^n(\beta))$.
Therefore $\mathbb M_{\bm{\lambda}^A(\beta)}(\mathbf e^{A})=\lambda_0^n(\beta)+(1-\lambda_0^n(\beta))\bar e_A\ge t$ implies $\widetilde E_A(\mathbf e)=d_n+(1-d_n)\bar e_A\ge t$. 
\end{itemize}
Thus $R\in\widetilde{\mathcal D}(\mathbf e)$, and $\overline{\mathrm{eBH}}_\alpha^{\bm{\lambda}(\beta)}\preceq_{\mathrm s}\widetilde{\mathcal D}$.

We next show $\widetilde{\mathcal D}\not\preceq_{\mathrm s}\overline{\mathrm{eBH}}_\alpha^{\bm{\lambda}(\beta)}$. Choose $A\subseteq\cK$ with $|A|=n$. For every $B\subsetneq A$ with $|B|=b$, we have
$$\lambda_0^b(\beta)+(1-\lambda_0^b(\beta))\frac{\beta K}{\alpha n}>\frac{b}{\alpha n}.$$
From $d_n<\lambda_0^n(\beta)$ and $n<\beta K$, we get
$$\frac{1/\alpha-d_n}{1-d_n}<\frac{1/\alpha-\lambda_0^n(\beta)}{1-\lambda_0^n(\beta)}=\frac{\beta K}{\alpha n}.$$
Choose $q<\beta K/(\alpha n)$ sufficiently close to $\beta K/(\alpha n)$ such that $q\ge(1/\alpha-d_n)/(1-d_n)$ and for every $B\subsetneq A$,
\begin{equation}\label{E13}
    \lambda_0^b(\beta)+(1-\lambda_0^b(\beta))q\ge\frac{b}{\alpha n}.
\end{equation}
Define $\mathbf y=(y_1,\dots,y_K)$ by
\begin{equation*}
y_i=
\begin{cases}
q, & i\in A,\\[1ex]
\infty, & i\notin A.
\end{cases}.
\end{equation*}
We claim $A\in\widetilde{\mathcal D}(\mathbf y)$. Again, it suffices to check $B\subseteq A$. If $B=A$, $\widetilde E_A(\mathbf y)=d_n+(1-d_n)q\ge1/\alpha$; if $B\subsetneq A$, \eqref{E13} applies. However, $A\notin\overline{\mathrm{eBH}}_\alpha^{\bm{\lambda}(\beta)}(\mathbf y)$, because
$\mathbb M_{\bm{\lambda}^A(\beta)}(\mathbf y^A)=\lambda_0^n(\beta)+(1-\lambda_0^n(\beta))q<1 / \alpha$.
Therefore $\widetilde{\mathcal D}\not\preceq_{\mathrm s}\overline{\mathrm{eBH}}_\alpha^{\bm{\lambda}(\beta)}$, and so $\widetilde{\mathcal D}$ strictly dominates $\overline{\mathrm{eBH}}_\alpha^{\bm{\lambda}(\beta)}$ under strong dominance. Thus $\overline{\mathrm{eBH}}_\alpha^{\bm{\lambda}(\beta)}$ is not admissible when $\beta>\beta^*$.
\end{proof}

\section{Examples}\label{app:counterexamples}

We present five examples that   support  various statements made in the main paper.

\begin{example}\label{ex:same_procedure_different_lambda}
This counterexample shows that two distinct choices of the parameter $\bm \lambda$ may induce exactly the same
weighted-mean $\overline{\mathrm{eBH}}$ procedure.

Let $K=2$ and $\alpha=1/4$. Consider two $\overline{\mathrm{eBH}}$ procedures $\mathcal D$ and $\mathcal D'$ at level $\alpha$ with the corresponding weighted-mean e-collections given by $\{E_A\}_{A\subseteq\cK}$ and
$\{E'_A\}_{A\subseteq\cK}$ respectively. When the corresponding
coordinates are finite, define $E_{\{i\}}(e_i)=E'_{\{i\}}(e_i)=1$ for every $i\in \{1,2\}$, $E_{\{1,2\}}(e_1,e_2)=1$
and $E'_{\{1,2\}}(e_1,e_2)=e_1$. Set $E_A(\mathbf e)=E'_A(\mathbf e)=\infty$ whenever
$\max_{i\in A}e_i=\infty$ for every nonempty $A\subseteq \{1,2\}$. We claim that
$$\mathcal D(\mathbf e)=\mathcal D'(\mathbf e)=2^{\{i\in\{1,2\}:e_i=\infty\}}.$$
Indeed, suppose that $R\nsubseteq \{i\in\{1,2\}:e_i=\infty\}$. Choose
$j\in R\setminus \{i\in\{1,2\}:e_i=\infty\}$. Since $e_j<\infty$, take $A=\{j\}$ and then we have
$$E_{\{j\}}(e_j)=E'_{\{j\}}(e_j)=1<\frac{1}{\alpha|R|}=\frac{4}{|R|}.$$
Hence $R$ is rejected by neither procedure. Conversely, suppose that $R\subseteq \{i\in\{1,2\}:e_i=\infty\}$. If $A\cap R=\varnothing$,
then $\mathrm{FDP}_A(R)=0$. If $A\cap R\ne\varnothing$, then $A$ contains an
index $i\in R$ with $e_i=\infty$, and therefore
$E_A(\mathbf e)=E'_A(\mathbf e)=\infty$. Thus $R$ belongs to both procedures.
\end{example}

\begin{example}\label{ex:point-version-need-not-dominate-ebh}
This example shows that, even when the condition $\alpha\lambda_0^A+K\lambda_i^A\ge1$ holds for all $A\subseteq\cK$ and $i\in A$, not every
induced point weighted-mean $\overline{\mathrm{eBH}}$ procedure dominates the base eBH procedure. It also shows that, when $\alpha\ge1/2$, the condition is not necessary for the existence of an induced point procedure that dominates the base eBH procedure.

For the first claim, take $K=4$ and $\alpha=2/3$. Define a weighted-mean e-collection as follows. When all coordinates indexed by $A$ are finite, set $E_{\{4\}}(\mathbf e)=0.5+0.5 e_4$, $E_{\{2,4\}}(\mathbf e)=0.3+0.2 e_2+0.5 e_4$, and the ordinary arithmetic mean for all other $A$. By direct verification, the condition in Proposition~\ref{prop:esc-dominated-by-weighted-ebh} holds.
At the input $\mathbf e=(4,2,4,0)$, the base eBH procedure rejects $\{1,2,3\}$. The weighted-mean $\overline{\mathrm{eBH}}$ procedure rejects $\{\varnothing,\{1\},\{3\},\{1,3\},\{1,2,3\},\{1,3,4\}\}$. Therefore the maximal elements are $\{1,2,3\}$ and $\{1,3,4\}$. Thus the induced point procedure may reject $\{1,3,4\}$, which is incomparable with $\{1,2,3\}$.

For the second claim, take $K=2$ and $\alpha\in[1/2,1)$. Define a weighted-mean e-collection as follows. Choose $c\in[0,1]$ such that
$c\ge 1 / (2\alpha)$ and $c>1 / (2-\alpha)$. When all coordinates indexed by $A$ are finite, set $E_{\{i\}}(\mathbf e)=c+(1-c)e_i$ for $i\in\{1,2\}$, and $E_{\{1,2\}}$ as the ordinary arithmetic mean. For $A=\{i\}$, we have
$\alpha\lambda_0^A+K\lambda_i^A=\alpha c+2(1-c)=2-(2-\alpha)c<1$. Thus $\alpha\lambda_0^A+K\lambda_i^A\ge1$ fails.
If $e_1+e_2<2/\alpha$, the procedure $\overline{\mathrm{eBH}}_\alpha^{\bm{\lambda}}$ cannot reject any nonempty set. If $e_1+e_2\ge2/\alpha$, since $E_{\{1,2\}}(\mathbf e)\ge1/\alpha$ and $E_{\{i\}}(\mathbf e)\ge c\ge 1 / (2\alpha)$ for $i\in\{1,2\}$, we obtain $\{1,2\}\in \overline{\mathrm{eBH}}_\alpha^{\bm{\lambda}}(\mathbf e)$. Therefore, the unique maximal set is $\varnothing$ if $e_1+e_2<2/\alpha$ and $\{1,2\}$ if $e_1+e_2\ge2/\alpha$. Whenever the base eBH procedure rejects a nonempty set, either $\max(e_1,e_2)\ge2/\alpha$ or $\min(e_1,e_2)\ge1/\alpha$. In either case, $e_1+e_2\ge2/\alpha$. Moreover, at the input $\mathbf e=(2/\alpha,0)$, the base eBH procedure rejects only $\{1\}$, whereas every induced point procedure rejects $\{1,2\}$. Thus every induced point procedure strictly dominates $\mathrm{eBH}_\alpha$.
\end{example}

\begin{example}\label{ex:zero-weight-point-not-admissible}

This counterexample shows that the strict-positivity condition in Theorem~\ref{thm:positive-point-weighted-ebh-admissible} is essential. Removing it would cause inadmissibility.

Let $K=3$ and $\alpha\in[1/3,1/2)$. Define a constant-free weighted-mean e-collection as follows. When all coordinates indexed by $A$ are finite, set $E_{\{i\}}=e_i$ for every singleton $\{i\}\subseteq \cK$, and set $E_{\{1,2\}}(\mathbf e)=e_2, E_{\{1,3\}}(\mathbf e)=e_3, E_{\{2,3\}}(\mathbf e)=E_{\cK}(\mathbf e)=(e_2+e_3) / 2$. Let $\overline{\mathrm{eBH}}_\alpha^{\bm{\lambda}}$ be the corresponding weighted-mean $\overline{\mathrm{eBH}}$ procedure. Let $\overline{\mathrm{ebh}}_\alpha^{\bm \lambda}$ be any point weighted-mean $\overline{\mathrm{eBH}}$ procedure induced by $\overline{\mathrm{eBH}}_\alpha^{\bm{\lambda}}$ that rejects $\{2,3\}$ if $\cK\notin\overline{\mathrm{eBH}}_\alpha^{\bm{\lambda}}(\mathbf e)$ and $\{2,3\}\in\overline{\mathrm{eBH}}_\alpha^{\bm{\lambda}}(\mathbf e)$. We show that this point procedure $\overline{\mathrm{ebh}}_\alpha^{\bm \lambda}$ is not admissible. Define 
$$
\mathfrak G(\mathbf e)
=
\begin{cases}
\cK, & \text{if }\overline{\mathrm{ebh}}_\alpha^{\bm \lambda}(\mathbf e)=\{2,3\}
\text{ and } \min(e_2,e_3)\ge1/\alpha,\\[0.6ex]
\overline{\mathrm{ebh}}_\alpha^{\bm \lambda}(\mathbf e), & \text{otherwise.}
\end{cases}
$$
By construction, $\mathfrak G$ dominates $\overline{\mathrm{ebh}}_\alpha^{\bm \lambda}$. It suffices to verify that $\mathfrak G$ strictly dominates $\overline{\mathrm{ebh}}_\alpha^{\bm \lambda}$ and $\mathfrak G\in\mathrm{PP}_\alpha$.

(i) We show that $\mathfrak G$ strictly dominates $\overline{\mathrm{ebh}}_\alpha^{\bm \lambda}$. Take
$\mathbf e=(0, 1 / \alpha, 1 / \alpha)$. A direct check gives $\{2,3\}\in\overline{\mathrm{eBH}}_\alpha^{\bm{\lambda}}(\mathbf e)$ but $\cK\notin\overline{\mathrm{eBH}}_\alpha^{\bm{\lambda}}(\mathbf e)$. Therefore $\mathfrak G(\mathbf e)=\cK$, so $\mathfrak G$ strictly dominates $\overline{\mathrm{ebh}}_\alpha^{\bm \lambda}$.

(ii) It remains to show that $\mathfrak G\in\mathrm{PP}_\alpha$. We first provide a useful fact: 
\begin{equation}
\label{eq:zero-weight-counter-no-one}
e_1<\infty \text{ and }
1\in\overline{\mathrm{ebh}}_\alpha^{\bm \lambda}(\mathbf e)
\quad\Longrightarrow\quad
\overline{\mathrm{ebh}}_\alpha^{\bm \lambda}(\mathbf e)=\cK.
\end{equation}
Suppose otherwise for contradiction. If $\overline{\mathrm{ebh}}_\alpha^{\bm \lambda}(\mathbf e)=\{1\}$, then the constraints for $\{1,2\}$ and $\{1,3\}$ give
$e_2,e_3 \ge 1 / \alpha$. Hence $\{2,3\}\in\overline{\mathrm{eBH}}_\alpha^{\bm{\lambda}}(\mathbf e)$, which contradicts the definition of $\overline{\mathrm{ebh}}_\alpha^{\bm \lambda}$. If $\overline{\mathrm{ebh}}_\alpha^{\bm \lambda}(\mathbf e)=\{1,2\}$, then the constraints for $\{1,2\}$, $\{1,3\}$, and $\cK$ give $e_2\ge 1 / \alpha, e_3\ge 1 / (2\alpha)$ and $(e_2+e_3) / 2 \ge 1 / \alpha$. These inequalities imply $\{2,3\}\in\overline{\mathrm{eBH}}_\alpha^{\bm{\lambda}}(\mathbf e)$. The case
$\overline{\mathrm{ebh}}_\alpha^{\bm \lambda}(\mathbf e)=\{1,3\}$ is symmetric. Thus, if $\cK\notin\overline{\mathrm{eBH}}_\alpha^{\bm{\lambda}}(\mathbf e)$, the definition of $\overline{\mathrm{ebh}}_\alpha^{\bm \lambda}$ forces it to choose $\{2,3\}$ rather than a set containing $1$. This proves the claim. Consequently,
\begin{equation}
\label{eq:zero-weight-counter-G-no-one}
e_1<\infty \text{ and }
1\in\mathfrak G(\mathbf e)
\quad\Longrightarrow\quad
\mathfrak G(\mathbf e)=\cK .
\end{equation}

Now we prove $\mathfrak G\in\mathrm{PP}_\alpha$. Fix a null set $N\subseteq\cK$ and write $\mathbf E=(E_1,E_2,E_3)$. 

(a) If $1\notin N$, the only difference between $\mathfrak G$ and $\overline{\mathrm{ebh}}_\alpha^{\bm \lambda}$ is the promotion from $\{2,3\}$ to $\cK$, which cannot increase the FDP. Hence
$\FDP_N(\mathfrak G(\mathbf E))\le\FDP_N(\overline{\mathrm{ebh}}_\alpha^{\bm \lambda}(\mathbf E))$, which gives
$\FDR_{\mathfrak G}(\mathbf E)\le\alpha$.

(b) If $N=\{1\}$, then \eqref{eq:zero-weight-counter-G-no-one} gives
$\FDP_N(\mathfrak G(\mathbf E))\le 1 / 3 \le\alpha$. Thus $\FDR_{\mathfrak G}(\mathbf E)\le\alpha$.

(c) If $N=\{1,2\}$, we claim that $\FDP_N(\mathfrak G(\mathbf E))\le \alpha E_2$. If $\mathfrak G(\mathbf e)=\cK$, then either $\overline{\mathrm{ebh}}_\alpha^{\bm \lambda}(\mathbf e)=\cK$ or $\overline{\mathrm{ebh}}_\alpha^{\bm \lambda}(\mathbf e)=\{2,3\}, \min(e_2,e_3)\ge1/\alpha$. In the first case, the constraint for $N$ gives $e_2\ge 2 / (3\alpha)$. In the second case, $e_2\ge1/\alpha$. In both cases,
$$\FDP_N(\mathfrak G(\mathbf e))=\frac23\le \alpha e_2.$$
If $\mathfrak G(\mathbf e)\ne\cK$, then
$\mathfrak G(\mathbf e)=\overline{\mathrm{ebh}}_\alpha^{\bm \lambda}(\mathbf e)$ and
$1\notin\overline{\mathrm{ebh}}_\alpha^{\bm \lambda}(\mathbf e)$ by
\eqref{eq:zero-weight-counter-no-one}. Since
$\overline{\mathrm{ebh}}_\alpha^{\bm \lambda}(\mathbf e)\in\overline{\mathrm{eBH}}_\alpha^{\bm{\lambda}}(\mathbf e)$, 
$$
\FDP_N(\mathfrak G(\mathbf e))
=
\FDP_N(\overline{\mathrm{ebh}}_\alpha^{\bm \lambda}(\mathbf e))
\le
\alpha e_2.
$$
Taking expectations gives $\FDR_{\mathfrak G}(\mathbf E)\le\alpha$. 

(d) The case $N=\{1,3\}$ is symmetric to $N=\{1,2\}$.

(e) If $N=\cK$, we claim that
$\FDP_N(\mathfrak G(\mathbf E))\le\alpha (E_2+E_3) /2$.
If $\mathfrak G(\mathbf e)=\overline{\mathrm{ebh}}_\alpha^{\bm \lambda}(\mathbf e)$, this follows directly from $\overline{\mathrm{ebh}}_\alpha^{\bm \lambda}(\mathbf e)\in\overline{\mathrm{eBH}}_\alpha^{\bm{\lambda}}(\mathbf e)$. If
$\mathfrak G(\mathbf e)\ne\overline{\mathrm{ebh}}_\alpha^{\bm \lambda}(\mathbf e)$, then
$\overline{\mathrm{ebh}}_\alpha^{\bm \lambda}(\mathbf e)=\{2,3\}$ and
$\{2,3\}\in\overline{\mathrm{eBH}}_\alpha^{\bm{\lambda}}(\mathbf e)$. The constraint for $\cK$ gives
$E_\cK(\mathbf e)\ge 1 / \alpha$.
Hence
$$\FDP_\cK(\mathfrak G(\mathbf e))=1\le \alpha E_\cK(\mathbf e).$$
Taking expectations gives $\FDR_{\mathfrak G}(\mathbf E)\le\alpha$.

(f) The case $N=\varnothing$ is trivial.

\end{example}

\begin{example}\label{ex:infty-convention}

This counterexample shows that removing the $\infty$ convention in Definition~\ref{def:closed-eBH-procedure} or specifying $0\cdot\infty=0$ would break Theorem~\ref{thm:admissible-weighted-ebh-dominator}.

Let $K=2$ and $\alpha\in(0,1)$. Define 
$$
\mathcal D(e_1,e_2)
=
\{\varnothing\}
\cup
\{\{1\}: e_1\ge 1/\alpha\}
\cup
\{\{2\}: e_2=\infty\}.
$$
Then $\mathcal D\in\mathrm{SP}_\alpha$. Indeed, if $N=\{1,2\}$, then
$$
\mathbb E\left[\max_{R\in\mathcal D(\mathbf E)}\mathrm{FDP}_N(R)\right]
\le
\mathbb P(E_1\ge1/\alpha)+\mathbb P(E_2=\infty)
\le \alpha.
$$
If $N=\{1\}$, then the same Markov inequality applies, which gives FDR at most $\alpha$. If $N=\{2\}$, then $\mathbb P(E_2=\infty)=0$ gives FDR $0$. The case $N=\varnothing$ is trivial. 

We claim that no weighted-mean $\overline{\mathrm{eBH}}$ procedure can strongly dominate $\mathcal D$. Suppose, for contradiction, that $\mathcal D\preceq_{\mathrm s}\overline{\mathrm{eBH}}_\alpha^{\bm{\lambda}}$ with
$\mathbb M_{\bm{\lambda}^{\{1,2\}}}(e_1,e_2)=\lambda_0 +\lambda_1 e_1+\lambda_2 e_2$.
Since $\{1\}\in\mathcal D(1/\alpha,0)$, taking $A=\{1,2\}$ gives 
$\alpha\lambda_0+\lambda_1\ge 1$. Using $\lambda_1=1-\lambda_0-\lambda_2$, this gives
$-(1-\alpha)\lambda_0-\lambda_2\ge 0$. Thus $\lambda_0=\lambda_2=0$ and $\lambda_1=1$ necessarily hold. On the other hand, since $\{2\}\in\mathcal D(0,\infty)$, taking $A=\{1,2\}$ gives
$\lambda_0+\lambda_1\cdot0+\lambda_2\cdot\infty=0\ge 1 / \alpha$, a contradiction. Therefore no weighted-mean $\overline{\mathrm{eBH}}$ procedure can strongly dominate $\mathcal D$. Under the $\infty$ convention, this gap disappears because $\mathbb M_{\bm{\lambda}^{\{1,2\}}}(0,\infty)=\infty$.

\end{example}

\begin{example}\label{ex:dominator-need-not-symmetric}

For $\mathfrak D\in\mathrm{PP}_\alpha^{\mathrm{sym}}$, this counterexample shows that one may not be able to find a symmetric point weighted-mean $\overline{\mathrm{eBH}}$ procedure $\overline{\mathrm{ebh}}_\alpha^{\bm\lambda}$ that dominates $\mathfrak D$.

Let $K=3$ and $\alpha\in(0,2/3)$. Choose $x>0$ such that $1 / 2 < \alpha x < 2 / 3$, and then choose a small $\eta>0$ such that $x-\eta> 1 / (2\alpha)$ and $x+\eta< 2 / (3\alpha)$. Define a symmetric point procedure $\mathfrak D$ as follows. If $\mathbf e$ has a unique coordinate equal to $\infty$, say coordinate $k$, and the other two coordinates, say $e_i$ and $e_j$, both belong to $(x-\eta,x+\eta)$, set 
$$
\mathfrak D(\mathbf e)
=
\begin{cases}
\{i,k\}, & e_i>e_j,\\
\{j,k\}, & e_j>e_i,\\
\{k\}, & e_i=e_j.
\end{cases}
$$
Set $\mathfrak D(\mathbf e)=\varnothing$ for all other inputs. This procedure is symmetric. By the choice of $x$ and $\eta$, we can easily verify that $\mathfrak D(\mathbf e)\in \overline{\mathrm{eBH}}_\alpha^{\mathsf{m}}(\mathbf e)$ for every $\mathbf e$. Therefore $\mathfrak D\in\mathrm{PP}_\alpha^{\mathrm{sym}}$.

Suppose, for contradiction, that there exists a symmetric point weighted-mean $\overline{\mathrm{eBH}}$ procedure $\overline{\mathrm{ebh}}_\alpha^{\bm \lambda}$ dominating $\mathfrak D$. Let $\overline{\mathrm{eBH}}_\alpha^{\bm \lambda}$ be the corresponding weighted-mean $\overline{\mathrm{eBH}}$ procedure. Put $\mathbf u=(x,x,\infty)$ and, for every sufficiently large $n$, put
$\mathbf v_n=(x+1/n,x,\infty)$ and $\mathbf w_n=(x,x+1/n,\infty)$. Then we have $\mathfrak D(\mathbf u)=\{3\}$, $\mathfrak D(\mathbf v_n)=\{1,3\}$ and $\mathfrak D(\mathbf w_n)=\{2,3\}$. Since $\overline{\mathrm{ebh}}_\alpha^{\bm \lambda}$ dominates $\mathfrak D$, we have $\overline{\mathrm{ebh}}_\alpha^{\bm \lambda}(\mathbf v_n)\in\{\{1,3\},\mathcal K\}$ and $\overline{\mathrm{ebh}}_\alpha^{\bm \lambda}(\mathbf w_n)\in\{\{2,3\},\mathcal K\}$. For each fixed $R\subseteq\mathcal K$, the set $\{\mathbf e: R\in\overline{\mathrm{eBH}}_\alpha^{\bm \lambda}(\mathbf e)\}$ is closed. Hence, by letting $n\to\infty$, either $\mathcal K\in\overline{\mathrm{eBH}}_\alpha^{\bm \lambda}(\mathbf u)$, or both $\{1,3\}$ and $\{2,3\}$ belong to $\overline{\mathrm{eBH}}_\alpha^{\bm \lambda}(\mathbf u)$.

If $\mathcal K\notin\overline{\mathrm{eBH}}_\alpha^{\bm \lambda}(\mathbf u)$, then $\{1,3\},\{2,3\}\in\overline{\mathrm{eBH}}_\alpha^{\bm \lambda}(\mathbf u)$. Since $\overline{\mathrm{ebh}}_\alpha^{\bm \lambda}$ dominates $\mathfrak D$, the set $\overline{\mathrm{ebh}}_\alpha^{\bm \lambda}(\mathbf u)$ must contain $\{3\}$. Since $\overline{\mathrm{ebh}}_\alpha^{\bm \lambda}$ is symmetric and $\mathbf u$ is invariant under the permutation exchanging $1$ and $2$, the set $\overline{\mathrm{ebh}}_\alpha^{\bm \lambda}(\mathbf u)$ must also be invariant under this permutation. The only invariant supersets of $\{3\}$ are $\{3\}$ and $\mathcal K$. But $\{3\}$ is not maximal in $\overline{\mathrm{eBH}}_\alpha^{\bm \lambda}(\mathbf u)$. Hence necessarily $\mathcal K\in\overline{\mathrm{eBH}}_\alpha^{\bm \lambda}(\mathbf u)$, and therefore $\overline{\mathrm{ebh}}_\alpha^{\bm \lambda}(\mathbf u)=\mathcal K$. We claim that this violates FDR control. Since $x<2/(3\alpha)$ and $\alpha<2/3$, there exists $p\in(0,1]$ such that $px\le 1$ and $2p / 3>\alpha$. Take the true null set $N=\{1,2\}$, and let
\[
\mathbf X=
\begin{cases}
\mathbf u, & \text{with probability }p,\\
\mathbf 0, & \text{with probability }1-p.
\end{cases}
\]
Then $\mathbb E[X_i]=px\le1$ for every $i\in N$, so these are valid null e-values. But
$$\mathrm{FDR}_{\overline{\mathrm{ebh}}_\alpha^{\bm \lambda}}(\mathbf X)\ge p\,\mathrm{FDP}_{N}(\mathcal K)=\frac{2p}{3}>\alpha.$$
Therefore there does not exist a symmetric point weighted-mean $\overline{\mathrm{eBH}}$ procedure that dominates $\mathfrak D$.
\end{example}


\begin{thebibliography}{}

\bibitem[\protect\citeauthoryear{Barber and Ramdas}{2017}]{BR17}
Barber, R. F. and Ramdas, A. (2017). The p-filter: Multilayer false discovery rate control for grouped hypotheses.
\emph{Journal of the Royal Statistical Society Series B: Statistical Methodology}, \textbf{79}(4), 1247--1268.

\bibitem[\protect\citeauthoryear{Benjamini and Hochberg}{1995}]{BH95}
Benjamini, Y. and Hochberg, Y. (1995). Controlling the false discovery rate: A practical and powerful approach to multiple testing.
\emph{Journal of the Royal Statistical Society Series B}, \textbf{57}(1), 289--300.

\bibitem[\protect\citeauthoryear{Benjamini and Yekutieli}{2001}]{BY01}
Benjamini, Y. and Yekutieli, D. (2001). The control of the false discovery rate in multiple testing under dependency.
\emph{Annals of Statistics}, \textbf{29}(4), 1165--1188.

\bibitem[\protect\citeauthoryear{Blanchard and Roquain}{2008}]{BR08}
Blanchard, G. and Roquain, E. (2008). Two simple sufficient conditions for FDR
control.
\emph{Electronic Journal of Statistics}, \textbf{2}, 963--992.

\bibitem[\protect\citeauthoryear{Clerico}{2026}]{C26}
Clerico, E. (2026).
A simple geometric proof for the characterisation of e-merging functions.
\emph{Statistics \& Probability Letters}, \textbf{236}, 110750.

\bibitem[\protect\citeauthoryear{Goeman}{2026}]{Goe26}
Goeman, J. (2026).
A uniform improvement of the Benjamini-Hochberg procedure via e-Closure.
\emph{arXiv}:2606.01854.

\bibitem[\protect\citeauthoryear{Goeman et al.}{2021}]{GHS21}
Goeman, J.~J., Hemerik, J. and Solari, A. (2021). Only closed testing procedures
are admissible for controlling false discovery proportions.
\emph{Annals of Statistics}, \textbf{49}(2), 1218--1238.

\bibitem[\protect\citeauthoryear{Gr\"unwald et al.}{Gr\"unwald et al.}{2024}]{GDK24}
 Gr\"unwald,  P., de Heide, R. and  Koolen, W.~M. (2024). Safe testing.  \emph{Journal of the Royal Statistical Society Series B}, \textbf{86}(5), 1091--1128.

\bibitem[\protect\citeauthoryear{Ignatiadis et al.}{2024}]{IWR24}
Ignatiadis, N., Wang, R. and Ramdas, A. (2024). E-values as unnormalized weights in multiple testing.
\emph{Biometrika}, \textbf{111}(2), 417--439.

\bibitem[\protect\citeauthoryear{Ignatiadis et al.}{2026}]{IWR26}
Ignatiadis, N., Wang, R. and Ramdas, A. (2026).
Tiny but uniform improvements of adaptive BH procedures via compound
e-values.
\emph{arXiv}:2603.21424.

\bibitem[\protect\citeauthoryear{Lee and Ren}{2024}]{LR24}
Lee, J. and Ren, Z. (2024).
Boosting e-BH via conditional calibration.
\emph{arXiv}:2404.17562.

\bibitem[\protect\citeauthoryear{Ramdas and Wang}{2025}]{RW25}
Ramdas, A. and Wang, R. (2025). Hypothesis testing with e-values.
\emph{Foundations and Trends in Statistics}, \textbf{1}(1--2), 1--390.

\bibitem[\protect\citeauthoryear{Ren and Barber}{Ren and Barber}{2024}]{RB24}
Ren, Z. and Barber, R. F. (2024). Derandomised knockoffs: leveraging e-values for false discovery rate control. \emph{Journal of the Royal Statistical Society Series B}, \textbf{86}(1), 122--154.

\bibitem[\protect\citeauthoryear{Shafer}{2021}]{S21}
Shafer, G. (2021). Testing by betting: A strategy for statistical and scientific communication. \emph{Journal of the Royal Statistical Society, Series A}, \textbf{184}(2), 407--431.

\bibitem[\protect\citeauthoryear{Solari and Goeman}{2017}]{SG17}
Solari, A. and Goeman, J. J. (2017). Minimally adaptive BH: A tiny but uniform improvement of the procedure of Benjamini and Hochberg. \emph{Biometrical Journal}, \textbf{59}(4), 776--780.

\bibitem[\protect\citeauthoryear{Storey}{2002}]{S02}
Storey, J. D. (2002). A direct approach to false discovery rates.
\emph{Journal of the Royal Statistical Society Series B: Statistical Methodology}, \textbf{64}(3), 479--498.

\bibitem[\protect\citeauthoryear{Vovk et al.}{Vovk et al.}{2022}]{VWW22}
Vovk, V., Wang, B. and Wang, R. (2022). Admissible ways of merging p-values
  under arbitrary dependence.  \emph{Annals of Statistics}, \textbf{50}(1), 351--375.

\bibitem[\protect\citeauthoryear{Vovk and Wang}{2021}]{VW21}
Vovk, V. and Wang, R. (2021). E-values: Calibration, combination, and applications.
\emph{Annals of Statistics}, \textbf{49}(3), 1736--1754.


\bibitem[\protect\citeauthoryear{Vovk and Wang}{Vovk and Wang}{2024}]{VW24}
{Vovk, V. and  Wang, R.} (2024). True and false discoveries with independent and sequential e-values.
\emph{Canadian Journal of Statistics}, \textbf{52}(4), e11833.

\bibitem[\protect\citeauthoryear{Wang}{2025}]{W25}
Wang, R. (2025). The only admissible way of merging arbitrary e-values.
\emph{Biometrika}, \textbf{112}(2), asaf020.

\bibitem[\protect\citeauthoryear{Wang and Ramdas}{2022}]{WR22}
Wang, R. and Ramdas, A. (2022). False discovery rate control with e-values.
\emph{Journal of the Royal Statistical Society Series B}, \textbf{84}(3), 822--852.

\bibitem[\protect\citeauthoryear{Xu et al.}{2025}]{XSF25}
Xu, Z., Solari, A., Fischer, L., de Heide, R., Ramdas, A. and Goeman, J. (2025). Bringing closure to false discovery rate control: A general principle for multiple testing. \emph{arXiv}:2509.02517v3.


\end{thebibliography}
\end{document}